\newcommand{\Tr}{\mathrm{Tr}}
\newcommand{\ketbra}[1]{|#1\rangle\langle#1|}
\newcommand{\bra}[1]{\langle #1|}
\newcommand{\ket}[1]{|#1 \rangle}
\newcommand{\eps}{\varepsilon}
\newcommand*{\cL}{\mathcal{L}}
\newcommand*{\cH}{\mathcal{H}}
\newcommand*{\cF}{\mathcal{F}}
\newcommand*{\cD}{\mathcal{D}}
\newcommand*{\cG}{\mathcal{G}}
\newcommand*{\cK}{\mathcal{K}}
\newcommand*{\cE}{\mathcal{E}}
\newcommand*{\cP}{\mathcal{P}}
\newcommand{\suppress}[1]{}
\newcommand{\defeq}{\ensuremath{ \stackrel{\mathrm{def}}{=} }}
\newcommand{\F}{\mathrm{F}}
\newcommand{\Pur}{\mathrm{P}}
\newcommand {\br} [1] {\ensuremath{ \left( #1 \right) }}
\newcommand {\minusspace} {\: \! \!}
\newcommand {\smallspace} {\: \!}
\newcommand {\fn} [2] {\ensuremath{ #1 \minusspace \br{ #2 } }}
\newcommand {\ball} [2] {\fn{\mathcal{B}^{#1}}{#2}}
\newcommand {\relent} [2] {\fn{\mathrm{D}}{#1 \middle\| #2}}
\newcommand {\varrelent} [2] {\fn{\mathrm{V}}{#1 \middle\| #2}}
\newcommand {\dmax} [2] {\fn{\mathrm{D}_{\max}}{#1 \middle\| #2}}
\newcommand {\dmaxeps} [3] {\fn{\mathrm{D}^{#3}_{\max}}{#1 \middle\| #2}}
\newcommand {\mutinf} [2] {\fn{\mathrm{I}}{#1 \smallspace : \smallspace #2}}
\newcommand {\condmutinf} [3] {\mutinf{#1}{#2 \smallspace \middle\vert \smallspace #3}}
\newcommand {\dheps} [3] {\ensuremath{\mathrm{D}_{\mathrm{H}}^{#3}\left(#1 \| #2\right)}}
\newcommand {\dFeps} [3] {\ensuremath{\mathrm{fD}_{\mathrm{H}}^{#3}\left(#1 \| #2\right)}}
\newcommand {\id} {\ensuremath{\mathrm{I}}}
\newtheorem{definition}{Definition}
\newtheorem{claim}{Claim}
\newtheorem{corollary}{Corollary}
\newtheorem{fact}{Fact}
\newtheorem{theorem}{Theorem}
\newtheorem{lemma}{Lemma}
\newcommand{\Co}{\mathcal{C}}
\newcommand{\Qu}{\mathcal{Q}}
\newcommand{\En}{\mathcal{E}}
\begin{document}

\title{Quantum state redistribution with local coherence}

\author{Anurag Anshu}

\affiliation{Centre for Quantum Technologies, National University of Singapore,
Singapore}

\author{Rahul Jain}

\affiliation{Center for Quantum Technologies, National University of Singapore
and MajuLab, UMI 3654, Singapore}

\author{Alexander Streltsov}

\affiliation{Faculty of Applied Physics and Mathematics, Gda\'{n}sk University
of Technology, 80-233 Gda\'{n}sk, Poland}

\affiliation{National Quantum Information Centre in Gda\'{n}sk, 81-824 Sopot,
Poland}
\begin{abstract}
Quantum entanglement and coherence are two fundamental resources for
quantum information processing. Recent results clearly demonstrate
their relevance in quantum technological tasks, including quantum
communication and quantum algorithms. In this Letter we study the
role of quantum coherence for quantum state redistribution, a fundamental
task where two parties aim to relocate a quantum particle by using
a limited amount of quantum communication and shared entanglement.
We provide general bounds for the resource rates required for this
process, and show that these bounds are tight under additional reasonable
constraints, including the situation where the receiving party cannot
use local coherence. While entanglement cannot be directly converted
into local coherence in our setting, we show that entanglement is
still useful for local coherence creation if an additional quantum
channel is provided, and the optimal protocol for local coherence
creation for any given amount of quantum communication and shared
entanglement is presented. We also discuss possible extensions of our methods to other scenarios where the receiving party is limited by local constraints, including theories of thermodynamics and asymmetry.
\end{abstract}
\maketitle
\textbf{\emph{Introduction}}\textbf{.} Recent advances in quantum
technology clearly demonstrate the power of quantum phenomena for
technological applications. An important example is quantum key distribution~\cite{Bennett84,EkertPhysRevLett.67.661},
which allows for provably secure communication between distant parties
by using quantum entanglement~\cite{EinsteinPhysRev.47.777,VedralPhysRevLett.78.2275,HorodeckiRevModPhys.81.865},
a type of correlations which does not exist in classical physics.
Recent experiments demonstrate first satellite-based quantum key distribution,
which uses entanglement established between Earth and a low-Earth-orbit
satellite~\cite{Liao2017,YinPhysRevLett.119.200501}. These developments
suggest that quantum technology will be part of our daily life within
the next years.

While quantum entanglement is one of the most important resources
for quantum technologies, several quantum technological applications
are not based on the presence of entanglement, but require other types
of nonclassicality. An important example is quantum coherence, which
arises from superposition principle of quantum mechanics. In the last
years a full resource theory of quantum coherence has been developed~\cite{BaumgratzPhysRevLett.113.140401,WinterPhysRevLett.116.120404,StreltsovRevModPhys.89.041003},
which allows for rigorous investigation of coherence for quantum technological
tasks, including such fundamental applications as quantum algorithms~\cite{HilleryPhysRevA.93.012111,Matera2058-9565-1-1-01LT01}
and quantum metrology~\cite{Giovannetti2011,MarvianPhysRevA.94.052324}. 

As any quantum resource theory, also the resource theories of coherence
and entanglement are based on the notion of \emph{free states} and
\emph{free operations}. Free states are quantum states which can be
prepared without additional cost. In entanglement theory these are
separable states, i.e., convex combinations of product states. In
coherence theory free states are quantum states which are diagonal
in a certain basis: the particular choice of the basis depends on
the specific problem under study, and is usually justified by the
unavoidable decoherence~\cite{ZurekRevModPhys.75.715}. Free operations
are transformations which can be performed without consumption of
resources. In entanglement theory this set is known as \emph{local
operations and classical communication}. In coherence theory this
set is usually identified with \emph{incoherent operations}: these
are quantum measurements which do not create coherence for individual
measurement outcomes \cite{BaumgratzPhysRevLett.113.140401}. Incoherent
operations admit an incoherent Kraus decomposition 
\begin{equation}
\Lambda_{\mathrm{IO}}[\rho]=\sum_{i}K_{i}\rho K_{i}^{\dagger},\label{eq:IO}
\end{equation}
where each of the Kraus operators $K_{i}$ cannot create coherence
individually. Other families of free operations for entanglement and
coherence theory have also been studied in the literature~\cite{HorodeckiRevModPhys.81.865,StreltsovRevModPhys.89.041003}. 

It has been recently recognized that both -- entanglement and coherence
-- are useful resources for quantum communication. Long-distance entanglement
is required to enable the parties to exchange quantum bits via teleportation~\cite{BennettPhysRevLett.70.1895,Portbased08,Portbased09,Ren2017}
in the first place. Additional use of local coherence can potentially
reduce the amount of entanglement in the protocol, when compared to
the situation where no local coherence is available. This effect has
been demonstrated for \emph{quantum state merging}~\cite{Horodecki2005,HorodeckiOW07,StreltsovCRBWL16},
where two parties -- Alice and Bob -- aim to merge their shares of
a quantum state on Bob's side while preserving correlations with the
environment. If Bob has no access to local coherence, the procedure
requires shared entanglement at rate~\cite{StreltsovCRBWL16} 
\begin{equation}
S(A|B)_{\overline{\rho}}=S(\overline{\rho}^{AB})-S(\overline{\rho}^{B}),
\end{equation}
which corresponds to the conditional entropy of the state $\overline{\rho}=\overline{\rho}^{AB}$.
Here, $S(\rho)=-\mathrm{Tr}[\rho\log_{2}\rho]$ is the von Neumann
entropy and 
\begin{equation}
\overline{\rho}^{X}=\sum_{i}\ket{i}\!\bra{i}^{X}\rho^{X}\ket{i}\!\bra{i}^{X}
\end{equation}
 denotes complete dephasing of a (possibly multipartite) system $X$
in the incoherent basis. Local coherence allows to reduce the consumption
of entanglement: in the optimal situation the required entanglement
rate corresponds to the conditional entropy of the state $\rho=\rho^{AB}$~\cite{Horodecki2005,HorodeckiOW07}
\begin{equation}
S(A|B)_{\rho}=S(\rho^{AB})-S(\rho^{B}).
\end{equation}
This result admits an operational interpretation even if the conditional
entropy is negative: in this case quantum state merging can be achieved
without entanglement, and additional singlets are gained at rate $-S(A|B)_{\rho}$.

In this Letter we study the role of shared entanglement and local
coherence for \emph{quantum state redistribution}, which is another
fundamental protocol for quantum communication~\cite{Devatakyard,YardD09}.
If not stated otherwise, we assume the asymptotic i.i.d. setting throughout this work. Before we present the main results, we recall the definition and aim
of quantum state redistribution in the following. 

\bigskip{}

\textbf{\emph{Quantum state redistribution.}} In the task of quantum
state redistribution Alice, Bob, and a referee share a total pure
state $\ket{\psi}^{RABC}$, where Alice initially holds the particles
$AC$ and Bob holds the particle $B$. The aim of the task is to send
the particle $C$ from Alice to Bob in such a way that the overall
state $\ket{\psi}$ remains intact. For achieving this, Alice and
Bob can use quantum communication at rate $\Qu$ and singlets at rate
$\En$. This task was first introduced in \cite{Devatakyard,YardD09},
and it has been shown that the process is possible if and only if
\begin{subequations}\label{eq:QE}
\begin{align}
\Qu & \geq\frac{1}{2}I(C\!:\!R|B),\\
\Qu+\En & \geq S(C|B),
\end{align}
\end{subequations}with the quantum conditional mutual information
\begin{equation}
I(C\!:\!R|B)=S(R|B)-S(R|CB).
\end{equation}
The result in Eqs.~(\ref{eq:QE}) can thus be seen as the first operational
interpretation of the quantum conditional mutual information~\cite{Devatakyard,YardD09}. 

Note that in contrast to many other quantum communication tasks --
such as quantum state merging~\cite{Horodecki2005,HorodeckiOW07}
-- classical communication is not free in the setting considered here.
In general, quantum communication is more powerful than shared entanglement:
Alice and Bob can establish one singlet between each other by sending
one qubit, but they cannot use singlets (without classical communication)
for exchanging qubits. Special cases of interest are the \emph{fully
quantum Slepian-Wolf task} and \emph{quantum state splitting}~\cite{AbeyesingheDHW09}.
The former arises if Alice lacks side information, i.e., for total
states of the form $\ket{\psi}^{RBC}\otimes\ket{\phi}^{A}$. Correspondingly,
quantum state splitting arises if Bob has no side information,
i.e., for total states of the form $\ket{\psi}^{RAC}\otimes\ket{\phi}^{B}$.

\bigskip{}

\textbf{\emph{Quantum state redistribution and local coherence. }}In
standard quantum state redistribution \cite{Devatakyard,YardD09}
both parties, Alice and Bob, have access to local quantum coherence
at no cost, i.e., they can perform arbitrary quantum operations locally.
This is the main difference to the scenario considered in this Letter:
here we study the situation where Bob has restricted access to local
coherence. In more detail, we assume that Alice has access to arbitrary
quantum operations locally, while Bob is restricted to local incoherent
operations, see Eq.~(\ref{eq:IO}) for their definition. Additionally,
Bob is provided with local maximally coherent qubits 
\begin{equation}
\ket{+}=\frac{1}{\sqrt{2}}(\ket{0}+\ket{1})
\end{equation}
 at rate $\Co$, thus potentially allowing Bob to perform more general
operations locally. In fact, Bob can implement an arbitrary $n$-qubit
quantum operation locally if he has access to local incoherent operations
and $n$ additional maximally coherent qubits~\cite{BenDanaPhysRevA.95.062327,ChitambarPhysRevLett.117.020402}.
We consider resource triples $(\Qu,\En,\Co)$, where $\Co$ denotes
the rate of local coherence at Bob's side.

Having introduced the general setting, it is instrumental to compare
it to the framework of \emph{local quantum-incoherent operations and
classical communication (LQICC)}~\cite{ChitambarPhysRevLett.116.070402,StreltsovPhysRevX.7.011024}.
In this framework, Alice can perform arbitrary quantum operations
locally, while Bob is restricted to local incoherent operations, and
both parties can exchange their measurement outcomes via classical
communication, which is provided for free in this setting. In the
LQICC scenario entanglement is a more powerful resource than coherence,
as via LQICC operations one singlet can be converted into one maximally
coherent qubit on Bob's side~\cite{ChitambarPhysRevLett.116.070402}.
Without classical communication the situation changes significantly:
it is not possible to convert singlets into maximally coherent states
on Bob's side if no further communication is provided. However, as
we will see in Theorem~\ref{thm:1-1} below, singlets are still useful
for local coherence extraction in the presence of an additional quantum
channel. 

In the following we will make use of the relative entropy of coherence~\cite{BaumgratzPhysRevLett.113.140401}
\begin{equation}
R_{\mathrm{c}}(\rho)=\min_{\sigma\in\mathcal{I}}S(\rho||\sigma),
\end{equation}
where $\mathcal{I}$ denotes the set of incoherent states. The relative
entropy of coherence admits the closed expression $R_{\mathrm{c}}(\rho)=S(\overline{\rho})-S(\rho)$
and coincides with the distillable coherence of the state $\rho$,
i.e., the optimal rate for asymptotic extraction of maximally coherent
qubits via incoherent operations~\cite{WinterPhysRevLett.116.120404}.

Equipped with these tools we are now in position to prove the following
Theorem. Here, $q$, $e$, and $c$ denote absolute numbers of exchanged
qubits, shared singlets, and local maximally coherent qubits on Bob's
side~\footnote{This is in contrast to $\Qu$, $\En$, and $\Co$, which denote \emph{rates}
of the corresponding resources.}. 
\begin{theorem}
\label{thm:1-1}If Alice and Bob share $e$ singlets, by sending $q$
additional qubits from Alice to Bob it is possible to establish 
\begin{equation}
c=q+\min\{e,q\}\label{eq:c}
\end{equation}
maximally coherent qubits on Bob's side. For fixed $q$ and $e$ this
amount of coherence is maximal.\end{theorem}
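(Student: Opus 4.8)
The plan is to identify the figure of merit as Bob's local distillable coherence, which by the cited result equals the relative entropy of coherence $R_{\mathrm{c}}(\rho^{B})$ of Bob's reduced state, and then to prove the two converse bounds $c\le q+e$ and $c\le 2q$ together with a protocol saturating $c=q+\min\{e,q\}$. I model the allowed operations as arbitrary channels on Alice's side, incoherent channels on Bob's side, and the one-way transfer of a qubit from Alice to Bob (relabelling a system from $A$ to $B$); the shared resource is $e$ singlets, and the target is $\ket{+}^{\otimes c}$ on Bob's side.

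For achievability I would use two elementary primitives whose outputs are fed into standard coherence distillation. The direct primitive has Alice transmit $\ket{+}$, delivering one maximally coherent qubit per channel use. The entanglement-assisted primitive is the key point: given one singlet $\ket{\Phi}^{A_1B_1}$, Alice applies a Hadamard to her half and then sends it to Bob, so that Bob ends up holding the pure two-qubit state $\tfrac{1}{\sqrt2}\br{\ket{0}^{A_1}\ket{+}^{B_1}+\ket{1}^{A_1}\ket{-}^{B_1}}$, whose relative entropy of coherence is $S(\overline{\rho}^{B})-S(\rho)=2$. Asymptotically Bob distills two maximally coherent qubits from each such copy. Applying this primitive to $\min\{e,q\}$ singlet/channel-use pairs and the direct primitive to the remaining $(q-\min\{e,q\})$ channel uses yields $2\min\{e,q\}+\br{q-\min\{e,q\}}=q+\min\{e,q\}$ coherent qubits, establishing the claim.

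For the converse I would bound $c$ by Bob's final local coherence, $c\le R_{\mathrm{c}}(\rho^{B})=S(\rho^{B}\|\overline{\rho^{B}})$, and track two quantities along an arbitrary protocol; both are nonincreasing under Alice's local operations (which act only on $A$) and under Bob's incoherent operations, being relative entropies of coherence. The first is the joint quantity $S(\rho\|\overline{\rho}^{B})$, which dominates $R_{\mathrm{c}}(\rho^{B})$ by monotonicity under discarding $A$; it equals $e$ initially (each singlet contributes one, by additivity), and when Alice sends one qubit the global state is unchanged while the dephased set grows by one qubit, raising the entropy of the dephased state by at most $\log 2$, so the quantity grows by at most $1$ per transmission, giving $c\le q+e$. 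The second quantity is $R_{\mathrm{c}}(\rho^{B})$ itself: it is $0$ initially since Bob's singlet halves are maximally mixed, and I would show the transfer of one qubit $M$ increases it by at most $2$. Writing the increment as $\big[S(\overline{\rho^{BM}})-S(\overline{\rho^{B}})\big]+\big[S(\rho^{B})-S(\rho^{BM})\big]$, the first bracket is at most $\log 2$ because $\overline{\rho^{B}}$ arises from the $M$-classical state $\overline{\rho^{BM}}$ by discarding $M$ (concavity of entropy), and the second is at most $\log 2$ by the Araki--Lieb inequality. This yields $c\le 2q$, and combining the two bounds gives $c\le\min\{q+e,\,2q\}=q+\min\{e,q\}$.

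The main obstacle is the factor-of-two increment bound underlying $c\le 2q$: unlike the joint quantity $S(\rho\|\overline{\rho}^{B})$, Bob's local coherence can jump by $2$ per transmitted qubit, a dense-coding-type effect, and pinning down exactly this rate requires the two-term entropy decomposition above rather than a naive dimension count. A secondary point to treat carefully is that the primitives produce states of high relative entropy of coherence rather than literal singlets' worth of $\ket{+}$'s; converting these into genuine maximally coherent qubits, and reading the statement as an achievable rate, relies on the identification of $R_{\mathrm{c}}$ with distillable coherence, so the argument is naturally phrased in the i.i.d.\ limit.
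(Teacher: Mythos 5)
Your proposal is correct, and while your protocol coincides with the paper's, your converse takes a genuinely different route. On achievability: the paper uses the same Hadamard-and-send primitive, but then observes that the resulting state $\tfrac{1}{\sqrt{2}}(\ket{0}\ket{+}+\ket{1}\ket{-})$ is carried \emph{exactly} to $\ket{+}\ket{+}$ by the diagonal (hence incoherent) unitary $\mathrm{diag}(1,1,1,-1)$; since the theorem is stated in absolute numbers $q,e,c$ rather than rates, you should use this one-shot exact conversion rather than appeal to asymptotic distillation --- your i.i.d.\ hedge is an unnecessary weakening of an otherwise identical protocol. On the converse: the paper proves $c\le q+e$ (used for $e\le q$) by contradiction, replacing each transmitted qubit by teleportation (LQICC plus one singlet) and invoking the cited LQICC result that $e'=q+e$ singlets can yield at most $e'$ coherent qubits on Bob's side; it proves $c\le 2q$ (used for $e>q$) via Lemma~\ref{lem:4} of the Supplemental Material, $R_{\mathrm{c}}(\rho^{AB})-R_{\mathrm{c}}(\rho^{B})\le 2\log_2|A|$, obtained from the bounds $-\log_2 d_A\le S(A|B)_\rho\le \log_2 d_A$ on the conditional entropy. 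Your $c\le 2q$ argument is essentially that same lemma, with subadditivity plus Araki--Lieb playing the role of the two conditional-entropy bounds. Your $c\le q+e$ argument is the genuinely different part: instead of a black-box reduction to the LQICC literature, you track the quantum-incoherent relative entropy $S(\Delta_B(\rho))-S(\rho)$ directly, checking that it starts at $e$, is monotone under Alice's channels (they commute with $\Delta_B$) and under Bob's incoherent channels (incoherent maps preserve quantum-incoherent states, and $\Delta_B(\rho)$ is the closest such state to $\rho$ in relative entropy), and increases by at most $1$ per transmitted qubit (dephasing one extra qubit is a two-block pinching, so it raises entropy by at most one bit). This is sound, and it buys self-containedness: you have effectively inlined the monotone on which the cited LQICC result rests, whereas the paper's reduction is shorter but relies on prior work; the cost is that you must (and do) verify those monotonicity claims yourself. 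Both arguments combine to $c\le\min\{q+e,2q\}=q+\min\{e,q\}$, matching the protocol.
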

\begin{proof}
We first study the case $e\leq q$. We will now prove that there exists
a protocol for establishing $q+e$ maximally coherent qubits on Bob's
side. For this, we recall that Alice can perform arbitrary quantum
operations locally. Since all maximally entangled states are related
via local unitaries on one party only, this means that Alice can bring
all shared singlets into the form 
\begin{equation}
\ket{\psi}^{AB}=\frac{1}{\sqrt{2}}(\ket{0}^{A}\ket{+}^{B}+\ket{1}^{A}\ket{-}^{B})\label{eq:Singlet}
\end{equation}
with maximally coherent states $\ket{\pm}=(\ket{0}\pm\ket{1})/\sqrt{2}$.
Alice can now send her parts of all the singlets to Bob, which requires
$e$ qubits of quantum communication. Bob -- who is now holding $e$
copies of the state $\ket{\psi}$ -- can convert each copy into the
state $\ket{+}\ket{+}$ by applying an incoherent unitary locally.
This proves that by using $e$ shared singlets and sending $e$ qubits
from Alice to Bob, it is possible to establish $2e$ maximally coherent
qubits at Bob's side. Alice now uses the remaining $q-e$ qubits of
quantum communication for sending $q-e$ maximally coherent qubits
to Bob. The final number of maximally coherent qubits that Bob obtains
in this way is $q+e$, as claimed.

We will now show that the protocol presented above is optimal, i.e.,
it is not possible to establish more than $q+e$ maximally coherent
qubits on Bob's side. This can be proven by contradiction, assuming
that some protocols achieves $c>q+e$ by using $e$ shared singlets
and sending $q$ qubits from Alice to Bob. It is now crucial to note
that teleportation of one qubit from Alice to Bob is possible by using
LQICC operations and one shared singlet~\cite{StreltsovPhysRevX.7.011024}.
Thus, the above assumption leads to the conclusion that $e'=q+e$
singlets together with LQICC operations can be used to establish $c>e'$
maximally coherent qubits at Bob's side, which is a contradiction
to the results presented in~\cite{ChitambarPhysRevLett.116.070402}.

To complete the proof, we will consider the case $e>q$. The protocol
for establishing $2q$ maximally coherent states at Bob's side is
the same as above, i.e., Alice uses $q$ qubits to send her parts
of $q$ maximally entangled states to Bob, who can then locally convert
each copy into the state $\ket{+}\ket{+}$ via local incoherent unitary.
It remains to show that this protocol is optimal. This is a consequence
of Lemma~\ref{lem:4} in Supplemental Material, which
shows that Bob's local relative entropy of coherence $R_{\mathrm{c}}$
cannot grow more than $2\log_{2}d$ if a particle of dimension $d$
is sent from Alice to Bob. Thus, by sending $q$ qubits from Alice
to Bob, Bob's local relative entropy of coherence cannot grow more
than $2q$, which implies that $c\leq2q$. This completes the proof
of the Theorem. 
\end{proof}
We will now go back to the main problem of quantum state redistribution
and discuss several important cases, starting with the fully quantum
Slepian-Wolf task. Here, Alice, Bob, and the referee initially
share a total state of the form $\ket{\psi}^{RBC}\otimes\ket{\phi}^{A}$,
and the particle $C$ is sent from Alice to Bob. The theorem below
provides a bound on the resource triple $(\Qu,\En,\Co)$ required
for performing state redistribution in this setting. 
\begin{theorem}
\label{thm:1}A necessary condition for achieving quantum state redistribution
from Alice to Bob for a state $\ket{\psi}^{RBC}\otimes\ket{\phi}^{A}$
is that 
\begin{equation}
\Qu+\En+\Co\geq S(\overline{\rho}^{BC})-S(\overline{\rho}^{B}).\label{eq:Spelian-Wolf}
\end{equation}
\end{theorem}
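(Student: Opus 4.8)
The plan is to establish \eqref{eq:Spelian-Wolf} as a converse: to exhibit an inequality that every protocol using the triple $(\Qu,\En,\Co)$ must obey. The starting observation is that the right-hand side is the conditional entropy $S(\overline{\rho}^{BC})-S(\overline{\rho}^{B})=S(C|B)_{\overline{\rho}}$ of the \emph{fully dephased} state, and that it splits algebraically as
\begin{equation}
S(\overline{\rho}^{BC})-S(\overline{\rho}^{B})=\big[R_{\mathrm{c}}(\rho^{BC})-R_{\mathrm{c}}(\rho^{B})\big]+\big[S(\rho^{BC})-S(\rho^{B})\big],
\end{equation}
using $R_{\mathrm{c}}(\rho)=S(\overline{\rho})-S(\rho)$. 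I would therefore read the bound as the sum of a genuine redistribution cost $S(C|B)_{\rho}=S(\rho^{BC})-S(\rho^{B})$ and a local coherence deficit $R_{\mathrm{c}}(\rho^{BC})-R_{\mathrm{c}}(\rho^{B})$ that Bob, being incoherent, must cover from the supplied resources.

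For the first term I would note that for an unrestricted receiver the maximally coherent qubits are free, so any protocol solving the restricted task is in particular a valid standard state-redistribution protocol; the known converse of Eqs.~(\ref{eq:QE}) then forces $\Qu+\En\geq S(\rho^{BC})-S(\rho^{B})$. For the second term I would track Bob's local relative entropy of coherence $R_{\mathrm{c}}$ across the protocol: it starts at $R_{\mathrm{c}}(\rho^{B})$ and must end at $R_{\mathrm{c}}(\rho^{BC})$ once Bob holds $BC$; it is non-increasing under his incoherent operations, grows by at most one per supplied maximally coherent qubit (hence by at most $\Co$ in total), and, by Lemma~\ref{lem:4}, by at most two per transmitted qubit. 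Feeding the known initial and final values into this budget yields a second inequality relating $R_{\mathrm{c}}(\rho^{BC})-R_{\mathrm{c}}(\rho^{B})$ to $\Qu$ and $\Co$, and the identity above is meant to recombine the two inequalities into \eqref{eq:Spelian-Wolf}.

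The hard part will be that the quantum communication $\Qu$ is charged twice: it both drives the redistribution (first inequality) and, through Lemma~\ref{lem:4}, carries coherence worth two units per qubit (second inequality), so naively adding the two bounds double-counts $\Qu$ and yields a coefficient larger than one. The real work is to charge each transmitted qubit only once. I expect to do this not by summing two separate monotones but by tracking a single combined quantity along the protocol --- morally the diagonal (dephased) entropy $S(\overline{\tau^{\tilde B}})$ of Bob's total system $\tilde B$, whose initial value already accounts for the $\En$ singlet halves and the $\Co$ coherent qubits and which can rise by at most $\log_{2}d$ when a dimension-$d$ system arrives, by subadditivity. The subtlety that makes this delicate is that $S(\overline{\,\cdot\,})$ is \emph{not} by itself monotone under Bob's incoherent operations, since an incoherent channel can inject classical randomness and raise the diagonal entropy for free; the monotone must therefore be conditioned on the reference $R$ (equivalently, on Bob's classical records), so that this spurious, $R$-uncorrelated entropy is subtracted and only communication and supplied coherence can increase it. Making this conditioning precise --- equivalently, reducing the restricted task to standard redistribution of the dephased state $\overline{\psi}$, whose cost is exactly $S(C|B)_{\overline{\rho}}$ --- is where the main effort goes; the product structure of $\ket{\psi}^{RBC}\otimes\ket{\phi}^{A}$ (Alice has no side information) should simplify the bookkeeping.
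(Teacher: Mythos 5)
Your decomposition and the two-term accounting capture the right intuition, but your proposal stops exactly where the theorem still needs to be proved. You correctly observe that summing your two inequalities, $\Qu+\En\geq S(\rho^{BC})-S(\rho^{B})$ and $R_{\mathrm{c}}(\rho^{BC})-R_{\mathrm{c}}(\rho^{B})\leq 2\Qu+\Co$, only yields $3\Qu+\En+\Co\geq S(\overline{\rho}^{BC})-S(\overline{\rho}^{B})$, which is a strictly weaker necessary condition than Eq.~(\ref{eq:Spelian-Wolf}). The repair you then sketch --- a single monotone given by the dephased entropy of Bob's total system, conditioned on $R$ to kill the spurious entropy an incoherent channel can inject --- is never constructed; you yourself flag that this conditioning is ``where the main effort goes.'' As written, this is a plan with the decisive step missing, not a proof, and it is not evident that the conditioned quantity you describe has all the monotonicity properties you need.

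The paper closes this gap with a one-step reduction that charges each transmitted qubit exactly once, which is precisely the mechanism you were searching for. Teleportation of one qubit can be implemented with LQICC operations and one shared singlet, with Bob's decoding incoherent~\cite{StreltsovPhysRevX.7.011024}. Hence any redistribution protocol for $\ket{\psi}^{RBC}\otimes\ket{\phi}^{A}$ with resources $(\Qu,\En,\Co)$ violating Eq.~(\ref{eq:Spelian-Wolf}) could be converted into an LQICC state-merging protocol with entanglement rate $\En'=\Qu+\En$ and coherence rate $\Co'=\Co$: each qubit of the channel is replaced by one singlet, and the classical communication generated by teleportation is free in the LQICC setting and cannot create coherence on Bob's side. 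The pair $(\En',\Co')$ would then violate the known LQICC merging converse $\En'+\Co'\geq S(\overline{\rho}^{BC})-S(\overline{\rho}^{B})$ of~\cite{StreltsovCRBWL16}, a contradiction. No protocol-level bookkeeping of $R_{\mathrm{c}}$, no use of Lemma~\ref{lem:4}, and no new monotone are needed; all the joint accounting of entanglement, coherence, and communication is absorbed into the already-proven merging bound. If you want to salvage your approach, the reduction to the dephased problem that you mention in your last sentence should be routed through this teleportation substitution rather than through a hand-crafted conditioned entropy.
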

\begin{proof}
As was shown in~\cite{StreltsovCRBWL16}, the following condition
is necessary for quantum state merging of the state $\ket{\psi}^{RBC}\otimes\ket{\phi}^{A}$
via LQICC operations: 
\begin{equation}
\En+\Co\geq S(\overline{\rho}^{BC})-S(\overline{\rho}^{B}).\label{eq:Merging}
\end{equation}
Assume now -- by contradiction -- that there is a quantum state redistribution
protocol operating with a resource tripe $(\Qu,\En,\Co)$ violating
Eq.~(\ref{eq:Spelian-Wolf}). Then, due to results presented in~\cite{StreltsovCRBWL16,StreltsovPhysRevX.7.011024},
there exists an LQICC protocol which achieves quantum state merging
by using entanglement at rate $\En'=\Qu+\En$ and local coherence
at Bob's side at rate $\Co'=\Co$. The proof is complete by noting
that the resource pair $(\En',\Co')$ violates Eq.~(\ref{eq:Merging}).
\end{proof}
The result in Theorem~\ref{thm:1} provides a general lower bound
on the resource rates required for quantum state redistribution without
side information on Alice's side. We will now go one step further
by giving the optimal quantum communication rate $\Qu$ required for
quantum state redistribution under the assumption that Alice and Bob
only use forward quantum communication, and Bob applies local incoherent
operations without any local coherence, i.e., $\Co=0$. 
\begin{theorem}
\label{thm:Q}Quantum state redistribution with forward quantum communication
at rate $\Qu$ and without local coherence on Bob's side is possible
if and only if 
\begin{equation}
\Qu\geq\frac{1}{2}\left\{ I(C\!:\!R|B)+R_{\mathrm{c}}(\rho^{BC})-R_{\mathrm{c}}(\rho^{B})\right\} .\label{eq:Qinc}
\end{equation}

\end{theorem}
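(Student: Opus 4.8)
The plan is to first reformulate the target rate to expose its structure. Writing $R_{\mathrm{c}}(\rho)=S(\overline{\rho})-S(\rho)$ and using the purity of $\ket{\psi}^{RABC}$ (which gives $S(C|RB)_{\rho}=-S(C|A)_{\rho}$ and hence $I(C\!:\!R|B)=S(C|B)_{\rho}+S(C|A)_{\rho}$), together with the elementary identity $R_{\mathrm{c}}(\rho^{BC})-R_{\mathrm{c}}(\rho^{B})=S(C|B)_{\overline{\rho}}-S(C|B)_{\rho}$, I would show that the right-hand side of~(\ref{eq:Qinc}) equals $\frac{1}{2}\{S(C|A)_{\rho}+S(C|B)_{\overline{\rho}}\}$. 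This is precisely the ordinary redistribution rate $\frac{1}{2}I(C\!:\!R|B)$ with Bob's conditional entropy evaluated on the dephased state, making transparent that the extra term $\frac{1}{2}\{R_{\mathrm{c}}(\rho^{BC})-R_{\mathrm{c}}(\rho^{B})\}$ is exactly the net local coherence Bob must acquire, converted to qubits at the $2\!:\!1$ rate of Theorem~\ref{thm:1-1}. Throughout I read the statement as placing no constraint on shared entanglement, consistent with optimizing $\Qu$ alone.

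For the converse I would follow a single monotone $f=I(R\!:\!B')+R_{\mathrm{c}}(\rho^{B'})$ along the protocol, where $B'$ is Bob's entire register and $R$ the untouched referee. Pre-shared singlets are incoherent and uncorrelated with $R$, so they leave $f$ unchanged; thus $f$ starts at $I(R\!:\!B)+R_{\mathrm{c}}(\rho^{B})$ and must end at $I(R\!:\!BC)+R_{\mathrm{c}}(\rho^{BC})$, a total increase $\Delta f=I(C\!:\!R|B)+R_{\mathrm{c}}(\rho^{BC})-R_{\mathrm{c}}(\rho^{B})$ by the chain rule for mutual information. Bob's incoherent operations raise neither term (data processing for $I(R\!:\!B')$, monotonicity of $R_{\mathrm{c}}$), and Alice's local operations do not act on $B'$. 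The only missing ingredient is a joint bound asserting that sending a system of dimension $d$ from Alice to Bob increases $f$ by at most $2\log_{2}d$; granting this, $2\Qu\geq\Delta f$, which is~(\ref{eq:Qinc}).

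For achievability I would split the forward communication to realize the two contributions separately. Using the unrestricted entanglement, Theorem~\ref{thm:1-1} lets Alice hand the incoherent Bob maximally coherent qubits at two per transmitted qubit, so $\frac{1}{2}\{R_{\mathrm{c}}(\rho^{BC})-R_{\mathrm{c}}(\rho^{B})\}$ qubits endow Bob with the coherence that his final state $\rho^{BC}$ carries in excess of $\rho^{B}$. The remaining $\frac{1}{2}I(C\!:\!R|B)$ qubits drive a standard redistribution protocol in which Bob performs his decoder as an incoherent operation fueled by the pre-supplied coherent qubits, using that incoherent operations plus maximally coherent qubits generate arbitrary local quantum operations.

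The main obstacle is the coherence bookkeeping in the achievability: a literal simulation of Bob's decoder would consume coherence equal to the coherence of formation of $\rho^{BC}$, which can strictly exceed $R_{\mathrm{c}}(\rho^{BC})$ and thereby overshoot the bound. I expect the remedy to route through the coherence-constrained merging protocol of~\cite{StreltsovCRBWL16}, arranging Bob's decoding so that its net coherence consumption is governed by the relative-entropy measure $R_{\mathrm{c}}$ rather than by the formation cost; proving that the distillable quantity controls consumption in this conditional, asymptotic setting is the crux. The parallel difficulty on the converse side is the joint inequality $\Delta f\leq 2\log_{2}d$ per message, a strengthening of Lemma~\ref{lem:4} that I would approach by chain-rule decompositions of $I(R\!:\!B')$ and $R_{\mathrm{c}}$ and a Stinespring argument showing that correlation with $R$ and coherence handed to Bob compete for the same $2\log_{2}d$ of capacity.
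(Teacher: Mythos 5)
Your algebraic reformulation is correct, and your converse is essentially sound --- in fact it is a genuinely different and more self-contained argument than the paper's. The "missing ingredient" you flag is not really missing: with $f=I(R\!:\!B')+R_{\mathrm{c}}(\rho^{B'})$, when a $d$-dimensional system $Q$ moves from Alice to Bob the cross terms cancel in exactly the chain-rule decomposition you propose,
\begin{align*}
\Delta f &= I(R\!:\!Q|B')+R_{\mathrm{c}}(\rho^{B'Q})-R_{\mathrm{c}}(\rho^{B'})\\
&=\bigl[S(Q|B')_{\rho}-S(Q|RB')_{\rho}\bigr]+\bigl[S(Q|B')_{\overline{\rho}}-S(Q|B')_{\rho}\bigr]\\
&=S(Q|B')_{\overline{\rho}}-S(Q|RB')_{\rho}\;\leq\;2\log_{2}d,
\end{align*}
using the same $\pm\log_{2}d$ bounds on conditional entropy as in Lemma~\ref{lem:4}; no Stinespring argument is needed. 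Together with data processing for $I(R\!:\!B')$, monotonicity of $R_{\mathrm{c}}$ under incoherent operations, and a continuity step at the output, this gives $2\Qu\geq I(C\!:\!R|B)+R_{\mathrm{c}}(\rho^{BC})-R_{\mathrm{c}}(\rho^{B})$. By contrast, the paper's converse (Theorem~\ref{theo:converse}) tracks a relative entropy through the protocol and needs structural assumptions P.3--P.4 (classical/cobit messages, diagonal-preserving decoder), so your monotone route is arguably cleaner for this direction.

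The genuine gap is achievability, which is where the real content of the theorem lies, and your proposal does not close it --- as you yourself observe. Supplying Bob with coherent qubits at rate $\tfrac{1}{2}\{R_{\mathrm{c}}(\rho^{BC})-R_{\mathrm{c}}(\rho^{B})\}$ via Theorem~\ref{thm:1-1} and then "fueling" the decoder of a standard redistribution protocol fails because incoherently simulating a generic decoder costs coherence scaling with the size of the registers it acts on (at best the coherence of formation), not the net difference of relative entropies of coherence; and your proposed remedy --- adapting the LQICC merging protocol of~\cite{StreltsovCRBWL16} --- is left entirely unproven, and does not obviously handle redistribution with side information $B$ held by Bob (the conditional setting). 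The paper resolves this by a completely different mechanism: rather than budgeting coherence for a decoder, it constructs a one-shot protocol in which Bob never consumes coherence at all. Alice uses the convex-split lemma (Fact~\ref{convexcombcor}) and an Uhlmann isometry; Bob performs position-based decoding whose measurement operators lie in the free class $\cF_{\cE}$ (incoherent measurements), with Neumark dilation and the non-commutative union bound controlling the error (Theorem~\ref{achievability}, Claim~\ref{claimprotp1}). The cobit cost is $\dmax{\Phi'_{RBC}}{\Phi'_{RB}\otimes\sigma_C}-\dFeps{\Phi_{BC}}{\Phi_B\otimes\sigma_C}{\eps_2^4}$; Lemma~\ref{dhcoherence} identifies the restricted hypothesis-testing term with the ordinary one on dephased states, the i.i.d.\ expansion (Fact~\ref{dmaxequi}) yields $I(C\!:\!R|B)+R_{\mathrm{c}}(\rho^{BC})-R_{\mathrm{c}}(\rho^{B})$ in cobits (Theorem~\ref{theo:tradeoff}), and superdense coding supplies the factor $\tfrac{1}{2}$ for qubits (Theorem~\ref{theo:coherencerates}). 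Without a protocol of this kind, or a completed version of your merging-based argument, the "if" direction of the theorem remains unproven in your proposal.
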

\noindent We refer to Supplemental Material for the
proof. After the particle $C$ is transferred, the local resource
(coherence) that Bob has is $R_{\mathrm{c}}(\rho^{BC})$ whereas the
resource he started with is $R_{\mathrm{c}}(\rho^{B})$. Hence, the
process requires additional $[R_{\mathrm{c}}(\rho^{BC})-R_{\mathrm{c}}(\rho^{B})]/2$
qubits of communication, on top of $I(C\!:\!R|B)/2$ qubits of communication
which is needed if Bob has no local restrictions.

We will now demonstrate the power of the above results on specific
examples. If both, Alice and Bob, lack side information, Theorem~\ref{thm:Q}
reduces to 
\begin{equation}
\Qu\geq\frac{1}{2}\left\{ S(\rho^{C})+S(\overline{\rho}^{C})\right\} .\label{eq:Schumacher}
\end{equation}
This scenario can be regarded as an incoherent version of Schumacher
compression~\cite{SchumacherPhysRevA.51.2738}, where the decompression
stage is restricted to incoherent operations. The result in Eq.~(\ref{eq:Schumacher})
should be compared to the standard Schumacher compression rate $S(\rho^{C})$~\cite{SchumacherPhysRevA.51.2738},
and to the minimal singlet rate for state merging via LQICC operations
in this setting, which is given by $S(\overline{\rho}^{C})$~\cite{StreltsovCRBWL16}.
Remarkably, the right-hand side of Eq.~(\ref{eq:Schumacher}) is
in general below the optimal LQICC singlet rate $S(\overline{\rho})$.
This is due to the fact that apart from quantum communication at rate
$\{S(\rho^{C})+S(\overline{\rho}^{C})\}/2$ state redistribution also
consumes additional singlets. 

To make this comparison more explicit, consider the situation where
the transmitted particle $C$ is a maximally coherent qubit $\rho^{C}=\ket{+}\!\bra{+}^{C}$,
not correlated with Alice or Bob. Due to Eq.~(\ref{eq:Schumacher}),
it follows that -- in the presence of additional singlets -- Alice
can send this state to Bob by using quantum communication at rate
$\Qu=1/2$. This is indeed achieved by using the same method as in
the proof of Theorem~\ref{thm:1-1}: Alice first brings all singlets
into the form~(\ref{eq:Singlet}) via suitable local unitaries, and
then sends her half of the singlets to Bob. After performing an incoherent
unitary locally, Bob obtains two maximally coherent qubits for each
qubit of quantum communication. However, if Alice and Bob are restricted
to LQICC operations, they will need singlets at rate $\En=1$ for
sending the state $\rho^{C}=\ket{+}\!\bra{+}^{C}$~\cite{ChitambarPhysRevLett.116.070402}.

For the fully quantum Slepian-Wolf task Theorem~\ref{thm:Q} reduces
as follows:
\begin{equation}
\Qu\geq\frac{1}{2}\left\{ I(C\!:\!R)+R_{\mathrm{c}}(\rho^{BC})-R_{\mathrm{c}}(\rho^{B})\right\} .\label{eq:Slepian-Wolf}
\end{equation}
Correspondingly, for quantum state splitting we obtain
\begin{equation}
\Qu \geq \frac{1}{2}\left\{ I(C\!:\!R)+R_{\mathrm{c}}(\rho^{C})\right\}. \label{eq:splitting}
\end{equation}
We will discuss these results in more detail in the next section.

\bigskip{}

\textbf{\emph{Quantum state redistribution with entanglement-assisted
classical communication.}} A fundamental result in quantum information
theory is the fact that one qubit can transfer two bits of classical
information by consuming an additional singlet, a phenomenon also
known as \emph{superdense coding}~\cite{bennett92}. On the other
hand, two bits of classical communication together with one singlet
can be used to teleport one qubit~\cite{BennettPhysRevLett.70.1895},
see also \cite{Jain2012} and references therein.

Note that both -- superdense coding and teleportation -- can be performed
by using only incoherent operations on the recipient's side. Thus,
Theorem~\ref{thm:Q} implies that $I\left(C\!:\!R|B\right)+R_{\mathrm{c}}(\rho^{BC})-R_{\mathrm{c}}(\rho^{B})$
is the minimal forward classical communication rate in the presence
of additional singlets and in the absence of coherence on Bob's side.

Together with Eq.~(\ref{eq:Slepian-Wolf}), this implies that the
fully quantum Slepian-Wolf task requires forward classical communication
at rate $I(C\!:\!R)+R_{\mathrm{c}}(\rho^{BC})-R_{\mathrm{c}}(\rho^{B})$,
if Bob is restricted to local incoherent operations locally. This
should be compared to the classical communication rate for standard
quantum state merging, given by the mutual information $I(C\!:\!R)$~\cite{Horodecki2005,HorodeckiOW07}.
Thus, the increase of the classical communication rate due to restrictions
on Bob's side is given by the increase of Bob's local coherence: $R_{\mathrm{c}}(\rho^{BC})-R_{\mathrm{c}}(\rho^{B})$. Correspondingly, due to Eq.~(\ref{eq:splitting}), the forward classical communication rate for quantum state splitting is given by $I(C\!:\!R)+R_{\mathrm{c}}(\rho^{C})$, again under the assumption that Bob uses only incoherent operations locally and that additional singlets are available. 

These results can be extended to other scenarios, where Bob's local operations are not necessarily incoherent. In particular, if Bob can swap local particles for free, then there is an optimal protocol for quantum state splitting which uses only swap operations on Bob's side, again under the constraint that Bob has no further local resource available and that shared entanglement is provided~\footnote{Here, "optimal" means that the protocol has minimal classical communication rate. The protocol can nevertheless use high amount of entanglement.}. Apart from the resource theory of coherence discussed in this Letter, this result covers other settings where Bob has local restrictions in his lab, including the resource theories of thermodynamics~\cite{BrandaoPhysRevLett.111.250404,LostaglioPhysRevX.5.021001}, purity~\cite{HorodeckiPhysRevA.67.062104,Gour20151}, and asymmetry~\cite{Gour1367-2630-10-3-033023,GourPhysRevA.80.012307}. We refer to the Supplemental Material for the proof and more details.

\bigskip{}

\textbf{\emph{Conclusions. }}In this Letter we have studied the role
of local coherence for quantum state redistribution, assuming in particular
that the receiving party can only use local incoherent operations
for free. We have studied the interplay of the quantum communication
rate $\Qu$, the singlet rate $\En$, and the rate of local coherence
$\Co$ in this scenario, and proved several important results. 

In the absence of quantum communication, entanglement cannot be converted
into local coherence in this setting. However, entanglement is still
helpful in this procedure if a quantum channel is available, and we
have presented the optimal protocol for local coherence creation for
any given amount of quantum communication and shared entanglement. 

If Alice has no side information, we showed that the sum of all the
rates $\Qu+\En+\Co$ in quantum state redistribution is bounded below
by the conditional entropy of $\overline{\rho}$, the latter being
the state of Alice and Bob after applying a complete dephasing in
the incoherent basis. Our results further lead to the optimal quantum
communication rate if Bob is not using local coherence. Counterintuitively,
we showed that in some situations the distribution protocol requires
less quantum communication, when compared to the required amount of
entanglement in the presence of free classical communication. We also
discussed important applications of these results, including the incoherent
versions of fully quantum Slepian-Wolf task, quantum state splitting, and Schumacher compression. Some of our results also apply to more general settings where Bob has other local restrictions, not necessary arising from coherence theory.

\bigskip{}

\textbf{\emph{Acknowledgements. }}We thank Andreas Winter for discussion.
A.A. and R.J. are supported by the Singapore Ministry of Education
and the National Research Foundation, through the Tier 3 Grant \textquotedblleft Random
numbers from quantum processes\textquotedblright{} MOE2012-T3-1-009.
A.S. acknowledges financial support by the National Science Center
in Poland (POLONEZ UMO-2016/21/P/ST2/04054) and the European Union\textquoteright s
Horizon 2020 research and innovation programme under the Marie Sk\l odowska-Curie
grant agreement No. 665778.

\bibliographystyle{naturemag}
\bibliography{References}

\clearpage
\onecolumngrid

\appendix

\section*{Supplemental Material}

\section{Preliminaries}
\label{sec:prelims}

Consider a finite dimensional Hilbert space $\cH$ endowed with an inner product $\langle \cdot, \cdot \rangle$ (In this paper, we only consider finite dimensional Hilbert-spaces). The $\ell_1$ norm of an operator $X$ on $\cH$ is $\| X\|_1:=\Tr\sqrt{X^{\dagger}X}$ and $\ell_2$ norm is $\| X\|_2:=\sqrt{\Tr XX^{\dagger}}$. A quantum state (or a density matrix or a state) is a positive semi-definite matrix on $\cH$ with trace equal to $1$. It is called {\em pure} if and only if its rank is $1$. A sub-normalized state is a positive semi-definite matrix on $\cH$ with trace less than or equal to $1$. Let $\ket{\psi}$ be a unit vector on $\cH$, that is $\langle \psi,\psi \rangle=1$.  With some abuse of notation, we use $\psi$ to represent the state and also the density matrix $\ketbra{\psi}$, associated with $\ket{\psi}$. Given a quantum state $\rho$ on $\cH$, {\em support of $\rho$}, called $\text{supp}(\rho)$ is the subspace of $\cH$ spanned by all eigen-vectors of $\rho$ with non-zero eigenvalues.
 
A {\em quantum register} $A$ is associated with some Hilbert space $\cH_A$. Define $|A| := \dim(\cH_A)$. Let $\mathcal{L}(A)$ represent the set of all linear operators on $\cH_A$. We denote by $\mathcal{D}(A)$, the set of quantum states on the Hilbert space $\cH_A$. The set of subnormalized states are represented by $\cP(A)$. State $\rho$ with subscript $A$ indicates $\rho_A \in \mathcal{D}(A)$. If two registers $A,B$ are associated with the same Hilbert space, we shall represent the relation by $A\equiv B$.  Composition of two registers $A$ and $B$, denoted $AB$, is associated with Hilbert space $\cH_A \otimes \cH_B$.  For two quantum states $\rho\in \mathcal{D}(A)$ and $\sigma\in \mathcal{D}(B)$, $\rho\otimes\sigma \in \mathcal{D}(AB)$ represents the tensor product (Kronecker product) of $\rho$ and $\sigma$. The identity operator on $\cH_A$ (and associated register $A$) is denoted $\id_A$. 

Let $\rho_{AB} \in \mathcal{D}(AB)$. We define
\[ \rho_{B} := \Tr_{A}{\rho_{AB}}
:= \sum_i (\bra{i} \otimes \id_{B})
\rho_{AB} (\ket{i} \otimes \id_{B}) , \]
where $\{\ket{i}\}_i$ is an orthonormal basis for the Hilbert space $\cH_A$.
The state $\rho_B\in \mathcal{D}(B)$ is referred to as the marginal state of $\rho_{AB}$. Unless otherwise stated, a missing register from subscript in a state will represent partial trace over that register. Given a $\rho_A\in\mathcal{D}(A)$, a {\em purification} of $\rho_A$ is a pure state $\rho_{AB}\in \mathcal{D}(AB)$ such that $\Tr_{B}{\rho_{AB}}=\rho_A$. Purification of a quantum state is not unique. Given two registers $A$ and $B$, $SEP(A:B)$ denotes the set of all separable states across $A, B$, that is, the set of all states $\rho_{AB}\in \cD(AB)$ such that $\rho_{AB}=\sum_k p_k \rho^k_A\otimes \rho^k_B$, where $\rho^k_A\in \cD(A), \rho^k_B \in \cD(B)$ and $\sum_k p_k=1$.

A quantum {map} $\cE: \mathcal{L}(A)\rightarrow \mathcal{L}(B)$ is a completely positive and trace preserving (CPTP) linear map (mapping states in $\mathcal{D}(A)$ to states in $\mathcal{D}(B)$). A {\em unitary} operator $U_A:\cH_A \rightarrow \cH_A$ is such that $U_A^{\dagger}U_A = U_A U_A^{\dagger} = \id_A$. An {\em isometry}  $V:\cH_A \rightarrow \cH_B$ is such that $V^{\dagger}V = \id_A$ and $VV^{\dagger} = \id_B$. The set of all unitary operations on register $A$ is  denoted by $\mathcal{U}(A)$.

\begin{definition}
We shall consider the following information theoretic quantities. Reader is referred to ~\cite{Renner05, Tomamichel09,Tomamichel12,Datta09} for many of these definitions. Let $\varepsilon \in (0,1)$. 
\begin{enumerate}
\item {\bf Fidelity} For $\rho_A,\sigma_A \in \mathcal{D}(A)$, $$\F(\rho_A,\sigma_A)\defeq\|\sqrt{\rho_A}\sqrt{\sigma_A}\|_1.$$ For classical probability distributions $P = \{p_i\}, Q =\{q_i\}$, $$\F(P,Q)\defeq \sum_i \sqrt{p_i \cdot q_i}.$$
\item {\bf Purified distance} For $\rho_A,\sigma_A \in \mathcal{D}(A)$, $$\Pur(\rho_A,\sigma_A) = \sqrt{1-\F^2(\rho_A,\sigma_A)}.$$
\item {\bf $\varepsilon$-ball} For $\rho_A\in \mathcal{D}(A)$, $$\ball{\eps}{\rho_A} \defeq \{\rho'_A\in \mathcal{D}(A)|~\Pur(\rho_A,\rho'_A) \leq \varepsilon\}. $$ 

\item {\bf Von-neumann entropy} For $\rho_A\in\mathcal{D}(A)$, $$S(\rho_A) \defeq - \Tr(\rho_A\log\rho_A) .$$ 
\item {\bf Relative entropy} For $\rho_A,\sigma_A\in \mathcal{D}(A)$ such that $\text{supp}(\rho_A) \subset \text{supp}(\sigma_A)$, $$\relent{\rho_A}{\sigma_A} \defeq \Tr(\rho_A\log\rho_A) - \Tr(\rho_A\log\sigma_A) .$$ 
\item {\bf Max-relative entropy} For $\rho_A,\sigma_A\in \mathcal{D}(A)$ such that $\text{supp}(\rho_A) \subset \text{supp}(\sigma_A)$, $$ \dmax{\rho_A}{\sigma_A}  \defeq  \inf \{ \lambda \in \mathbb{R} : 2^{\lambda} \sigma_A \succeq \rho_A \}  .$$  
\item {\bf Hypothesis testing relative entropy}  For $\rho_A,\sigma_A\in \mathcal{D}(A)$, $$ \dheps{\rho_A}{\sigma_A}{\eps}  \defeq  \sup_{0\preceq \Pi\preceq \id, \Tr(\Pi\rho_A)\geq 1-\eps}\log\left(\frac{1}{\Tr(\Pi\sigma_A)}\right).$$  
\item {\bf Restricted hypothesis testing relative entropy}  For $\rho_A,\sigma_A\in \cP(A)$, $$ \dFeps{\rho_A}{\sigma_A}{\eps}  \defeq  \sup_{0\preceq \Pi\preceq \id, \Pi \in \cF_{\cE}, \Tr(\Pi\rho_A)\geq 1-\eps}\log\left(\frac{1}{\Tr(\Pi\sigma_A)}\right).$$ If $\Tr(\rho_A) < 1-\eps$, then we set $\Pi=\id$. 
\item {\bf Mutual information} For $\rho_{AB}\in \mathcal{D}(AB)$, $$\mutinf{A}{B}_{\rho}\defeq S(\rho_A) + S(\rho_B)-S(\rho_{AB}) = \relent{\rho_{AB}}{\rho_A\otimes\rho_B}.$$
\item {\bf Conditional mutual information} For $\rho_{ABC}\in\mathcal{D}(ABC)$, $$\condmutinf{A}{B}{C}_{\rho}\defeq \mutinf{A}{BC}_{\rho}-\mutinf{A}{C}_{\rho}.$$
\end{enumerate}
\label{def:infquant}
\end{definition}	

We will use the following facts. 
\begin{fact}[Triangle inequality for purified distance,~\cite{Tomamichel12}]
\label{fact:trianglepurified}
For states $\rho_A, \sigma_A, \tau_A\in \mathcal{D}(A)$,
$$\Pur(\rho_A,\sigma_A) \leq \Pur(\rho_A,\tau_A)  + \Pur(\tau_A,\sigma_A) . $$ 
\end{fact}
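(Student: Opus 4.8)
The plan is to reduce the statement for arbitrary states to the essentially trivial case of pure states, with Uhlmann's theorem serving as the bridge between the two. First I would dispose of the pure-state case, which is free: for unit vectors $\ket{\psi},\ket{\phi}$ the difference $\ketbra{\psi}-\ketbra{\phi}$ of rank-one projectors has eigenvalues $\pm\sqrt{1-|\braket{\psi}{\phi}|^{2}}$, so $\tfrac{1}{2}\|\ketbra{\psi}-\ketbra{\phi}\|_{1}=\sqrt{1-|\braket{\psi}{\phi}|^{2}}=\Pur(\psi,\phi)$. Hence on pure states the purified distance is exactly the trace distance, and it therefore inherits the triangle inequality directly from the trace norm $\|\cdot\|_{1}$ being a genuine norm.

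Next I would invoke Uhlmann's theorem in the form $\F(\rho_A,\sigma_A)=\max|\braket{\rho}{\sigma}|$, where the maximum ranges over all purifications $\ket{\rho},\ket{\sigma}$ of $\rho_A,\sigma_A$ on a common enlarged space. Since $\Pur=\sqrt{1-\F^{2}}$ is decreasing in $\F$, maximizing the overlap yields the variational identity $\Pur(\rho_A,\sigma_A)=\min\Pur(\ket{\rho},\ket{\sigma})$ over purifications; in particular $\Pur(\rho_A,\sigma_A)\le\Pur(\ket{\rho},\ket{\sigma})$ for \emph{any} choice of purifications, and equality can be attained for a suitable pair.

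The main step is then to chain these bounds for three states $\rho_A,\sigma_A,\tau_A$. Using Uhlmann I would pick purifications $\ket{\rho},\ket{\mu}$ attaining $\F(\rho_A,\tau_A)$ and $\ket{\nu},\ket{\sigma}$ attaining $\F(\tau_A,\sigma_A)$. The hard part, and the only genuinely subtle point, is that the two optimizations produce \emph{different} purifications $\ket{\mu},\ket{\nu}$ of the common marginal $\tau_A$, so the vectors do not yet live coherently in one space. I would resolve this by recalling that any two purifications of $\tau_A$ on a common purifying register are related by an isometry $W$ on that register with $(\id\otimes W)\ket{\nu}=\ket{\mu}=:\ket{\tau}$. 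Applying $\id\otimes W$ to the second pair fixes the marginals and preserves inner products, sending $\ket{\sigma}\mapsto\ket{\sigma'}$, still a purification of $\sigma_A$, with $|\braket{\tau}{\sigma'}|=\F(\tau_A,\sigma_A)$. Now $\ket{\rho},\ket{\tau},\ket{\sigma'}$ all sit in one space with $\Pur(\ket{\rho},\ket{\tau})=\Pur(\rho_A,\tau_A)$ and $\Pur(\ket{\tau},\ket{\sigma'})=\Pur(\tau_A,\sigma_A)$, and the proof closes by combining the variational bound with the pure-state triangle inequality:
\[ \Pur(\rho_A,\sigma_A)\le\Pur(\ket{\rho},\ket{\sigma'})\le\Pur(\ket{\rho},\ket{\tau})+\Pur(\ket{\tau},\ket{\sigma'})=\Pur(\rho_A,\tau_A)+\Pur(\tau_A,\sigma_A). \]

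I anticipate the alignment of the two purifications of $\tau_A$ via the isometry $W$ to be the crux of the argument; once a single common $\ket{\tau}$ is arranged, everything else follows from Uhlmann's theorem and the elementary reduction of the pure-state purified distance to the trace distance.
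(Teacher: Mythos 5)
Your proof is correct. Note, however, that the paper does not prove this statement at all: it is imported as Fact~\ref{fact:trianglepurified} by citation to the reference [Tomamichel12], so there is no in-paper argument to compare against. Your argument is the standard self-contained one, and every step checks out: the eigenvalue computation showing that on pure states $\Pur$ coincides with the trace distance (hence is a metric there), the Uhlmann variational characterization $\Pur(\rho_A,\sigma_A)=\min\Pur(\ket{\rho},\ket{\sigma})$ over purifications, and the crucial alignment step, where the two optimal purifications $\ket{\mu},\ket{\nu}$ of the middle state $\tau_A$ are identified by a unitary (isometry) $W$ acting only on the purifying register, which preserves overlaps and marginals. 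The only implicit hypothesis — that all purifications can be taken on a single common purifying register of sufficient dimension — is harmless, since a register of dimension $|A|$ suffices for all three states. For comparison, the proof in the cited reference proceeds differently: it uses the Bures angle $\arccos\F$, which inherits the triangle inequality from the pure-state case via Uhlmann, and then transfers it to $\Pur=\sin(\arccos\F)$ using subadditivity of $\sin$ on $[0,\pi/2]$, i.e., $\sin(a+b)\leq\sin a+\sin b$. Your route avoids the angular detour by exploiting that the purified distance of pure states is exactly their trace distance, which is arguably more elementary; the angular route generalizes more readily to subnormalized states (where the reference needs it), but for normalized states as in the Fact your argument is complete.
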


\begin{fact}[\cite{stinespring55}](\textbf{Stinespring representation})\label{stinespring}
Let $\cE(\cdot): \mathcal{L}(A)\rightarrow \mathcal{L}(B)$ be a quantum operation. There exists a register $C$ and an unitary $U\in \mathcal{U}(ABC)$ such that $\cE(\omega)=\Tr_{A,C}\br{U (\omega  \otimes \ketbra{0}^{B,C}) U^{\dagger}}$. Stinespring representation for a channel is not unique. 
\end{fact}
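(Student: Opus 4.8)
The plan is to reduce the unitary dilation to the operator-sum (Kraus) representation and then promote the resulting isometry to a unitary by an ancilla-extension argument. First I would use complete positivity and trace preservation of $\cE$ to obtain Kraus operators $K_1,\dots,K_r : \cH_A \to \cH_B$ with $\cE(\omega) = \sum_{i=1}^r K_i \omega K_i^{\dagger}$ and $\sum_{i=1}^r K_i^{\dagger} K_i = \id_A$, where the number of terms can be bounded by $r \le |A|\,|B|$. Complete positivity is precisely what forces the Choi matrix of $\cE$ to be positive semi-definite, and its spectral decomposition yields the $K_i$ (with $r$ equal to the Choi rank), while trace preservation becomes the stated resolution of the identity.

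With the Kraus data in hand, I would build the Stinespring isometry into an environment register. Introducing an auxiliary space $\cH_E$ of dimension $r$ with orthonormal basis $\{\ket{e_i}\}$, define $V : \cH_A \to \cH_B \otimes \cH_E$ by $V\ket{\psi} \defeq \sum_{i=1}^r \br{K_i\ket{\psi}} \otimes \ket{e_i}$. The identity $\sum_i K_i^{\dagger}K_i = \id_A$ gives $V^{\dagger}V = \id_A$, so $V$ is an isometry, and expanding $\Tr_E\br{V\omega V^{\dagger}}$ and using $\braket{e_j}{e_i}=\delta_{ij}$ recovers $\sum_i K_i \omega K_i^{\dagger} = \cE(\omega)$. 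This is the isometric form of the dilation.

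To reach the unitary form demanded by the statement, I would embed $\cH_E$ into $\cH_A \otimes \cH_C$ for a register $C$ with $|C| = |B|$ (so $|A|\,|C| \ge |A|\,|B| \ge r$), and then read $V$ as an isometry $\cH_A \to \cH_A \otimes \cH_B \otimes \cH_C$, with each $\ket{e_i}$ now an orthonormal vector in $\cH_A \otimes \cH_C$. The assignment $\ket{\psi} \mapsto \ket{\psi}^A \otimes \ket{0}^{B,C}$ is a second isometry with the same domain and codomain, so setting $U\br{\ket{\psi}^A \otimes \ket{0}^{B,C}} \defeq V\ket{\psi}$ defines a partial isometry on $\cH_A \otimes \cH_B \otimes \cH_C$. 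Since its initial and final subspaces both have dimension $|A|$, their orthogonal complements have equal dimension and $U$ extends to a genuine unitary $U \in \mathcal{U}(ABC)$. Tracing out the environment, now contained in $A$ and $C$, orthonormality of the $\ket{e_i}$ yields
\begin{equation}
\Tr_{A,C}\br{U\br{\omega \otimes \ketbra{0}^{B,C}} U^{\dagger}} = \Tr_{A,C}\br{V\omega V^{\dagger}} = \sum_i K_i \omega K_i^{\dagger} = \cE(\omega),
\end{equation}
as required, and non-uniqueness is transparent from the freedom in the Kraus set and in the unitary extension.

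I expect the main obstacle to lie in the first step rather than the last: once the operator-sum representation is available, everything that follows is routine linear algebra. The genuine content is establishing the Kraus decomposition from complete positivity and trace preservation (equivalently, the positive semi-definiteness of the Choi matrix and its spectral decomposition). The promotion of a partial isometry to a unitary is standard, provided the dimension bookkeeping — in particular choosing $C$ large enough that $\cH_E$ embeds into $\cH_A \otimes \cH_C$ — is done carefully.
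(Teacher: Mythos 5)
Your proof is correct. Note that the paper does not prove this statement at all: it is quoted as a Fact with a citation to Stinespring's original work, so there is no internal proof to compare against. Your argument --- Choi matrix positivity giving a Kraus decomposition with $r \le |A|\,|B|$ terms, the isometry $V\ket{\psi} = \sum_i K_i\ket{\psi}\otimes\ket{e_i}$, and the extension of the partial isometry $\ket{\psi}\otimes\ket{0}^{B,C} \mapsto V\ket{\psi}$ to a unitary on $\cH_A\otimes\cH_B\otimes\cH_C$ by matching the dimensions of the orthogonal complements of the initial and final subspaces --- is the standard finite-dimensional dilation proof, and your dimension bookkeeping (taking $|C|=|B|$ so that $\cH_E$ embeds into $\cH_A\otimes\cH_C$) is exactly what is needed for the trace-out pattern $\Tr_{A,C}$ in the statement.
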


\begin{fact}[Monotonicity under quantum operations, \cite{barnum96},\cite{lindblad75}]
	\label{fact:monotonequantumoperation}
For quantum states $\rho$, $\sigma \in \mathcal{D}(A)$, and quantum operation $\cE(\cdot):\mathcal{L}(A)\rightarrow \mathcal{L}(B)$, it holds that
\begin{align*}
	\|\cE(\rho) - \cE(\sigma)\|_1 \leq \|\rho - \sigma\|_1 \quad \mbox{and} \quad \F(\cE(\rho),\cE(\sigma)) \geq \F(\rho,\sigma) \quad \mbox{and} \quad \relent{\rho}{\sigma}\geq \relent{\cE(\rho)}{\cE(\sigma)}.
\end{align*}
In particular, for bipartite states $\rho_{AB},\sigma_{AB}\in \mathcal{D}(AB)$, it holds that
\begin{align*}
	\|\rho_{AB} - \sigma_{AB}\|_1 \geq \|\rho_A - \sigma_A\|_1 \quad \mbox{and} \quad \F(\rho_{AB},\sigma_{AB}) \leq \F(\rho_A,\sigma_A) \quad \mbox{and} \quad \relent{\rho_{AB}}{\sigma_{AB}}\geq \relent{\rho_A}{\sigma_A} .
\end{align*}
\end{fact}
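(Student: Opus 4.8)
The plan is to recognize all three inequalities as instances of the data-processing (monotonicity) principle, and then to dispatch the two bipartite ``in particular'' statements as immediate special cases. Indeed, the partial trace $\Tr_B : \cL(AB) \to \cL(A)$ is itself a quantum operation (it is manifestly completely positive and trace preserving), so substituting $\cE = \Tr_B$ into the three general inequalities produces exactly the three bipartite inequalities. It therefore suffices to establish contractivity of the trace distance, monotonicity of the fidelity, and monotonicity of the relative entropy under an arbitrary CPTP map $\cE : \cL(A) \to \cL(B)$.

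For the trace distance I would use the variational characterization $\frac{1}{2}\|\rho - \sigma\|_1 = \max_{0 \preceq P \preceq \id} \Tr[P(\rho-\sigma)]$, the maximum running over all effects $P$. Passing to the adjoint map $\cE^{*}$ (the Heisenberg-picture dual, defined by $\Tr[\cE^{*}(P)\, X] = \Tr[P\, \cE(X)]$), I note that $\cE^{*}$ is completely positive and unital, so $0 \preceq \cE^{*}(P) \preceq \id$ whenever $0 \preceq P \preceq \id$. Hence for any effect $P$ on $B$ we have $\Tr[P(\cE(\rho)-\cE(\sigma))] = \Tr[\cE^{*}(P)(\rho-\sigma)] \leq \frac{1}{2}\|\rho-\sigma\|_1$, and maximizing over $P$ yields $\|\cE(\rho)-\cE(\sigma)\|_1 \leq \|\rho-\sigma\|_1$.

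For the fidelity I would combine Uhlmann's theorem with the Stinespring dilation of Fact~\ref{stinespring}. Writing $\cE(\cdot) = \Tr_{E}[V (\cdot) V^{\dagger}]$ for an isometry $V$, I first observe that an isometry leaves the fidelity unchanged, $\F(V\rho V^{\dagger}, V\sigma V^{\dagger}) = \F(\rho,\sigma)$, since $\sqrt{V\rho V^{\dagger}} = V\sqrt{\rho}V^{\dagger}$ and the trace norm is invariant under conjugation by an isometry. It then remains to show the fidelity can only increase under a partial trace, which is immediate from Uhlmann's theorem: any purification of $\rho_{AB}$ is also a purification of $\rho_A$, so the optimization defining $\F(\rho_A,\sigma_A)$ ranges over a larger set of purification pairs than the one computing $\F(\rho_{AB},\sigma_{AB})$, giving $\F(\rho_A,\sigma_A) \geq \F(\rho_{AB},\sigma_{AB})$. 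Chaining the two steps gives $\F(\cE(\rho),\cE(\sigma)) \geq \F(\rho,\sigma)$.

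The main obstacle is the monotonicity of the relative entropy; unlike the first two, it does not follow from an elementary duality or purification argument. I would again invoke Fact~\ref{stinespring} to reduce to two cases: invariance under the isometry $V$, which is immediate since $\relent{V\rho V^{\dagger}}{V\sigma V^{\dagger}} = \relent{\rho}{\sigma}$ (and the support condition $\supp(\rho)\subseteq\supp(\sigma)$ is propagated to the output by positivity of $\cE$), and monotonicity under the partial trace $\Tr_E$. This last step is the hard kernel of the statement: it is equivalent to the strong subadditivity of the von Neumann entropy, whose proof ultimately rests on the joint convexity of the relative entropy, itself a consequence of Lieb's concavity theorem for the map $(X,Y)\mapsto \Tr[Z^{\dagger} X^{s} Z\, Y^{1-s}]$. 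Rather than reprove this deep operator-convexity fact, I would cite the results of~\cite{lindblad75,barnum96}, which is precisely the attribution given in the statement.
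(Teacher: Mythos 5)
The paper never proves this statement: it is imported as a known Fact, with the hard content attributed to \cite{barnum96,lindblad75}, so there is no in-paper proof to compare against. Your outline is correct and standard throughout — the effect-operator duality argument for contractivity of $\|\cdot\|_1$ (using that the adjoint of a CPTP map is CP and unital), the Stinespring-plus-Uhlmann argument for $\F$, the reduction of relative-entropy monotonicity to isometry invariance plus the partial-trace case, and the observation that the bipartite statements follow by taking $\cE=\Tr_B$ are all sound. Deferring the one genuinely deep step (monotonicity of $\relent{\cdot}{\cdot}$ under partial trace, i.e.\ joint convexity / Lieb's theorem) to the same references the paper cites leaves your argument resting on exactly the same external foundation as the paper itself, which is entirely appropriate here.
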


\begin{fact}[Uhlmann's Theorem \cite{uhlmann76}]
\label{uhlmann}
Let $\rho_A,\sigma_A\in \mathcal{D}(A)$. Let $\rho_{AB}\in \mathcal{D}(AB)$ be a purification of $\rho_A$ and $\sigma_{AC}\in\mathcal{D}(AC)$ be a purification of $\sigma_A$. There exists an isometry $V: \cH_C \rightarrow \cH_B$ such that,
 $$\F(\ketbra{\theta}_{AB}, \ketbra{\rho}_{AB}) = \F(\rho_A,\sigma_A) ,$$
 where $\ket{\theta}_{AB} = (\id_A \otimes V) \ket{\sigma}_{AC}$.
\end{fact}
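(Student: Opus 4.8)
The plan is to prove the equality in two moves: a cheap upper bound valid for every $V$, and then an explicit $V$ that attains it. First I would note that for any isometry $V:\cH_C\to\cH_B$ the vector $\ket{\theta}_{AB}=(\id_A\otimes V)\ket{\sigma}_{AC}$ is again a purification of $\sigma_A$, since $\Tr_B\ketbra{\theta}=\Tr_C\br{\ketbra{\sigma}\,(\id_A\otimes V^\dagger V)}=\sigma_A$ using $V^\dagger V=\id_C$. As $\ketbra{\theta}_{AB}$ and $\ketbra{\rho}_{AB}$ purify $\sigma_A$ and $\rho_A$ respectively, Fact~\ref{fact:monotonequantumoperation} (monotonicity of fidelity under partial trace) gives $\F(\ketbra{\theta}_{AB},\ketbra{\rho}_{AB})\leq\F(\sigma_A,\rho_A)$ for every $V$. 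Hence the real content of the statement is \emph{achievability}: exhibiting one $V$ for which this inequality is saturated.

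For achievability I would pass to the operator--vector correspondence. After fixing bases, a purification $\ket{\rho}_{AB}$ of $\rho_A$ is in bijection with an operator $M:\cH_B\to\cH_A$ obeying $MM^\dagger=\rho_A$, and similarly $\ket{\sigma}_{AC}$ corresponds to an operator $N$ with $NN^\dagger=\sigma_A$. Under this dictionary, acting by $\id_A\otimes V$ amounts to post-composing $N$ with $V$ (up to a transpose from the reference basis), and the pure-state fidelity reduces to the overlap, $\F(\ketbra{\theta}_{AB},\ketbra{\rho}_{AB})=|\braket{\theta}{\rho}|$, which takes the trace form $\braket{\theta}{\rho}=\Tr\br{\overline{V}\,N^\dagger M}$. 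So the problem becomes: maximize $|\Tr(\overline{V}N^\dagger M)|$ over isometries $V:\cH_C\to\cH_B$.

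The rest is two standard matrix facts. Since complex conjugation sends isometries to isometries, $\overline{V}$ also ranges over all isometries, and the variational formula for the trace norm gives $\max_{V}|\Tr(\overline{V}N^\dagger M)|=\|N^\dagger M\|_1$: writing a singular value decomposition $N^\dagger M=\sum_i s_i\ket{a_i}\bra{b_i}$ with $\ket{a_i}\in\cH_C$ and $\ket{b_i}\in\cH_B$, the choice $\overline{V}=\sum_i\ket{b_i}\bra{a_i}$ (extended to a full isometry) yields $\Tr(\overline{V}N^\dagger M)=\sum_i s_i=\|N^\dagger M\|_1$, and no isometry beats it. Finally, $MM^\dagger=\rho_A$ and $NN^\dagger=\sigma_A$ force polar forms $M=\sqrt{\rho_A}\,U_1$ and $N=\sqrt{\sigma_A}\,U_2$ with $U_1,U_2$ isometries, so by unitary invariance of the trace norm $\|N^\dagger M\|_1=\|U_2^\dagger\sqrt{\sigma_A}\sqrt{\rho_A}\,U_1\|_1=\|\sqrt{\sigma_A}\sqrt{\rho_A}\|_1=\F(\rho_A,\sigma_A)$. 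The $V$ read off from the singular value decomposition therefore achieves $|\braket{\theta}{\rho}|=\F(\rho_A,\sigma_A)$, matching the upper bound from the first paragraph.

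I expect the main obstacle to be the bookkeeping in the operator--vector correspondence rather than any deep step: in particular tracking the complex conjugation/transpose introduced by the reference basis (so that the overlap really is $\Tr(\overline{V}N^\dagger M)$ and not some basis-dependent variant), and checking that the optimizer built from the singular value decomposition genuinely extends to an isometry $V:\cH_C\to\cH_B$ rather than merely a partial isometry. The latter needs $\dim\cH_B$ to be at least the rank of $N^\dagger M$, which holds because $\ket{\rho}_{AB}$ is assumed to be a purification. Everything else reduces to invariance of the trace norm under isometries together with the identity $\max_V|\Tr(VX)|=\|X\|_1$, both of which are routine.
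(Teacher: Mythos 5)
The paper does not prove this statement at all --- it is imported as a Fact with a citation to Uhlmann's original work --- so there is no internal proof to compare against; your proposal supplies the standard textbook argument, and it is essentially correct. The three ingredients (monotonicity of fidelity under partial trace for the upper bound; vectorization giving $\braket{\theta}{\rho}=\Tr\bigl(\overline{V}\,N^\dagger M\bigr)$; the variational identity $\max_W|\Tr(WX)|=\|X\|_1$ combined with polar decomposition and isometric invariance of the trace norm to get $\|N^\dagger M\|_1=\|\sqrt{\sigma_A}\sqrt{\rho_A}\|_1=\F(\rho_A,\sigma_A)$) are all sound, and the bookkeeping you worried about works out: in fixed bases the overlap is indeed $\Tr\bigl(\overline{V}\,N^\dagger M\bigr)$, and conjugation is a bijection on isometries. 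Two small repairs are needed. First, your justification for extending the partial isometry $\sum_i\ket{b_i}\bra{a_i}$ to a genuine isometry $V:\cH_C\to\cH_B$ is wrong as stated: what is required is not that $\dim\cH_B$ exceed the rank of $N^\dagger M$, but that $\dim\cH_C\le\dim\cH_B$, so that the orthogonal complement of $\mathrm{span}\{\ket{a_i}\}$ in $\cH_C$ can be mapped isometrically into the orthogonal complement of $\mathrm{span}\{\ket{b_i}\}$ in $\cH_B$. That inequality is an implicit hypothesis of the Fact itself (if $\dim\cH_C>\dim\cH_B$ no isometry $\cH_C\to\cH_B$ exists, and under the paper's own definition of isometry, which also demands $VV^\dagger=\id_B$, the dimensions are even forced to be equal), so assuming it is legitimate --- but it does not follow from ``$\ket{\rho}_{AB}$ is a purification.'' Second, in the polar forms $M=\sqrt{\rho_A}\,U_1$ and $N=\sqrt{\sigma_A}\,U_2$ the factors are in general partial isometries rather than isometries; the identity $\|U_2^\dagger\sqrt{\sigma_A}\sqrt{\rho_A}\,U_1\|_1=\|\sqrt{\sigma_A}\sqrt{\rho_A}\|_1$ still holds because the initial/final projections of $U_1,U_2$ act as the identity on the supports of $\rho_A,\sigma_A$ (which contain the row and column spaces of $\sqrt{\sigma_A}\sqrt{\rho_A}$), but that half-line of justification should be spelled out.
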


\begin{fact}[Gentle measurement lemma \cite{Winter:1999,Ogawa:2002}]
\label{gentlelemma}
Let $\rho$ be a quantum state and $0\preceq A\preceq I$ be an operator. Then 
$$\F(\rho, \frac{A\rho A}{\Tr(A^2\rho)})\geq \sqrt{\Tr(A^2\rho)}.$$
\end{fact}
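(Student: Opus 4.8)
The plan is to avoid the intractable operator square root $\sqrt{A\rho A}$ that appears if one attacks the definition $\F(\rho,\sigma)=\|\sqrt{\rho}\sqrt{\sigma}\|_1$ head-on, and instead to reduce the fidelity to a plain inner product by lifting to purifications. Throughout write $p\defeq\Tr(A^2\rho)$ and let $\sigma\defeq A\rho A/p$ denote the post-measurement state.

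First I would fix any purification $\ket{\psi}_{SR}$ of $\rho$ on a reference register $R$, so $\Tr_R\ketbra{\psi}=\rho$. The key object is the sub-normalized vector $(A\otimes\id_R)\ket{\psi}$: its squared norm is $\bra{\psi}(A^2\otimes\id_R)\ket{\psi}=\Tr(A^2\rho)=p$, so after normalizing I obtain the unit vector $\ket{\phi}_{SR}\defeq(A\otimes\id_R)\ket{\psi}/\sqrt{p}$. Using the identity $\Tr_R[(A\otimes\id_R)X(A\otimes\id_R)]=A(\Tr_R X)A$ with $X=\ketbra{\psi}$, one checks $\Tr_R\ketbra{\phi}=A\rho A/p=\sigma$; thus $\ket{\phi}$ is a purification of $\sigma$ sitting in the same space $SR$ as $\ket{\psi}$.

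Next I would apply monotonicity of fidelity under partial trace (Fact~\ref{fact:monotonequantumoperation}; equivalently the easy direction of Uhlmann's theorem, Fact~\ref{uhlmann}) to these two purifications, which gives $\F(\rho,\sigma)\geq\F(\ketbra{\psi},\ketbra{\phi})=|\braket{\psi}{\phi}|$. The overlap is immediate,
$$\braket{\psi}{\phi}=\frac{\bra{\psi}(A\otimes\id_R)\ket{\psi}}{\sqrt{p}}=\frac{\Tr(A\rho)}{\sqrt{p}},$$
and it is real and nonnegative since $A,\rho\succeq0$.

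The remaining step, and the only place the hypothesis $A\preceq I$ is actually used, is the operator inequality $A^2\preceq A$: every eigenvalue $\lambda\in[0,1]$ of $A$ satisfies $\lambda^2\leq\lambda$, so $A-A^2\succeq0$, and pairing this with $\rho\succeq0$ yields $\Tr(A\rho)\geq\Tr(A^2\rho)=p$. Combining, $\F(\rho,\sigma)\geq\Tr(A\rho)/\sqrt{p}\geq p/\sqrt{p}=\sqrt{p}=\sqrt{\Tr(A^2\rho)}$, as claimed. There is no serious obstacle once the purification is in place; the entire content is the idea of trading the unmanageable $\sqrt{A\rho A}$ for the purification overlap, together with the elementary bound $A^2\preceq A$.
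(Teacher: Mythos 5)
Your proof is correct, and it is worth noting that the paper itself gives no proof of this statement at all: it is imported as a Fact with citations to Winter and to Ogawa--Nagaoka, so there is no internal argument to compare against. Your purification route is the standard self-contained derivation, and every step checks out: $(A\otimes\id_R)\ket{\psi}/\sqrt{p}$ with $p=\Tr(A^2\rho)$ is indeed a unit vector purifying $A\rho A/p$ (using $\Tr_R[(A\otimes\id_R)X(A\otimes\id_R)]=A(\Tr_R X)A$, valid since $A=A^\dagger$); the inequality $\F(\rho,\sigma)\geq|\braket{\psi}{\phi}|$ is exactly the partial-trace case of the paper's monotonicity fact (Fact~\ref{fact:monotonequantumoperation}), i.e.\ the easy direction of Uhlmann; the overlap $\Tr(A\rho)/\sqrt{p}$ is automatically nonnegative; and the hypothesis $A\preceq\id$ enters only through $A^2\preceq A$, giving $\Tr(A\rho)\geq p$ and hence $\F\geq\sqrt{p}$. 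What your approach buys, relative to simply citing the literature as the paper does, is that the lemma becomes a two-line consequence of facts the paper already lists, with the single idea of trading the intractable $\sqrt{A\rho A}$ for a purification overlap. The only implicit assumption, shared with the statement itself, is $\Tr(A^2\rho)>0$, without which the post-measurement state is undefined; you might add that remark for completeness.
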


\begin{fact} [Neumark's Theorem] \label{Neumark}For any POVM $\left\{A_i\right\}_{i \in \mathcal{I}}$ acting on a system $S,$ there exists a unitary $U_{SP}$ and an orthonormal basis $\left\{\ket{i}_P\right\}_{i \in \mathcal{I}}$ such that for all quantum states $\rho_S$, we have
$$\Tr_P\left[U^\dagger_{SP} \left(\mathbb{I}_S \otimes \ket{i}\bra{i}_P\right)U_{SP}\left(\rho_S \otimes \ket{0}\bra{0}_P\right)\right] = A_i \rho_SA_i^{\dagger}.$$
\end{fact}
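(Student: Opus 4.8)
The plan is to realise the measurement operators $\{A_i\}_{i\in\mathcal{I}}$ (which I take to satisfy the completeness relation $\sum_{i}A_i^{\dagger}A_i=\id_S$) as a genuine projective measurement on an enlarged space, i.e.\ to carry out the standard Naimark dilation. First I would introduce the ancillary register $P$ with $\dim(\cH_P)=|\mathcal{I}|$ and a fixed orthonormal basis $\{\ket{i}_P\}_{i\in\mathcal{I}}$, one of whose elements is the reference vector $\ket{0}_P$, and define the linear map $V:\cH_S\to\cH_S\otimes\cH_P$ by $V\ket{\psi}_S=\sum_{i}\br{A_i\ket{\psi}_S}\otimes\ket{i}_P$. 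The completeness relation immediately gives $\bra{\phi}V^{\dagger}V\ket{\psi}=\sum_{i}\bra{\phi}A_i^{\dagger}A_i\ket{\psi}=\braket{\phi}{\psi}$, so $V^{\dagger}V=\id_S$ and $V$ is an isometry.

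Next I would promote $V$ to a unitary $U_{SP}$. Identifying $\cH_S$ with the subspace $\cH_S\otimes\ket{0}_P$ of $\cH_S\otimes\cH_P$, the assignment $\ket{\psi}_S\otimes\ket{0}_P\mapsto V\ket{\psi}_S$ is a partial isometry whose domain and range are both $|S|$-dimensional subspaces of the $|S|\,|\mathcal{I}|$-dimensional total space, and hence have equal codimension $|S|\br{|\mathcal{I}|-1}$. Choosing any unitary identification of the two orthogonal complements lets me extend the partial isometry to a full unitary $U_{SP}\in\mathcal{U}(SP)$ satisfying $U_{SP}\br{\ket{\psi}_S\otimes\ket{0}_P}=\sum_{i}A_i\ket{\psi}_S\otimes\ket{i}_P$. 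This existence-by-dimension-count is the only step carrying real content, so I expect it to be the main obstacle; even so it is essentially routine, the remaining verification being mechanical.

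Finally I would verify the measurement identity by direct substitution. Because $U_{SP}$ acts on the input $\rho_S\otimes\ket{0}\bra{0}_P$ only through its restriction to $\cH_S\otimes\ket{0}_P$, I obtain $U_{SP}\br{\rho_S\otimes\ket{0}\bra{0}_P}U_{SP}^{\dagger}=V\rho_S V^{\dagger}=\sum_{i,j}\br{A_i\rho_S A_j^{\dagger}}\otimes\ket{i}\bra{j}_P$. Projecting with $\id_S\otimes\ket{i}\bra{i}_P$ and tracing out $P$, the orthonormality $\braket{i}{j}=\delta_{ij}$ collapses the double sum to its diagonal term and leaves exactly $A_i\rho_S A_i^{\dagger}$, as claimed; equivalently, the operators $\{U_{SP}^{\dagger}\br{\id_S\otimes\ket{i}\bra{i}_P}U_{SP}\}_{i}$ form a projective measurement, which is the Naimark form of the given POVM. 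The only thing to watch here is keeping the operator orderings and the single distinguished ancilla vector $\ket{0}_P$ straight throughout the computation.
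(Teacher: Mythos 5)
Your proof is correct, and there is nothing in the paper to compare it against: the statement is listed as Fact~\ref{Neumark} and invoked as a known result without proof. What you supply is the standard Naimark dilation argument — the isometry $V\ket{\psi}_S=\sum_i A_i\ket{\psi}_S\otimes\ket{i}_P$ with $V^{\dagger}V=\id_S$ following from the completeness relation $\sum_i A_i^{\dagger}A_i=\id_S$, the extension of the partial isometry $\ket{\psi}_S\otimes\ket{0}_P\mapsto V\ket{\psi}_S$ to a unitary $U_{SP}$ by identifying the two equal-dimensional orthogonal complements, and the diagonal-collapse verification. All three steps are sound, and your construction is exactly how the Fact is instantiated in the proof of Claim~\ref{claimprotp1}, where the resulting projectors $\hat{\Pi}_{BCP}$ are used in the conjugation form $M_k(\cdot)M_k^{\dagger}$.

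One point worth flagging: the identity you actually establish (and the one the paper uses) is
\begin{equation*}
\Tr_P\left[\left(\id_S\otimes\ketbra{i}_P\right)U_{SP}\left(\rho_S\otimes\ketbra{0}_P\right)U^{\dagger}_{SP}\right]=A_i\rho_S A_i^{\dagger},
\end{equation*}
whereas the display in the Fact, read literally, places $U^{\dagger}_{SP}$ on the far left. That ordering is a typo rather than an equivalent rewriting: for any unitary $U_{SP}$, writing $W_i=(\id_S\otimes\bra{i}_P)U_{SP}(\id_S\otimes\ket{0}_P)$, a direct computation gives $\Tr_P[U^{\dagger}_{SP}(\id_S\otimes\ketbra{i}_P)U_{SP}(\rho_S\otimes\ketbra{0}_P)]=W_i^{\dagger}W_i\rho_S$, which for your $U_{SP}$ equals $A_i^{\dagger}A_i\rho_S$. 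This agrees with $A_i\rho_S A_i^{\dagger}$ on the level of outcome probabilities (the traces coincide) but not as a post-measurement operator — it is not even Hermitian in general. So your implicit re-reading of the statement, conjugating by $U_{SP}$ before projecting and tracing out $P$, is the correct one.
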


\begin{fact}[Convex split lemma \cite{AnshuDJ14, AnshuJW17SR}]
\label{convexcombcor}
Let $\rho_{PQ}\in\mathcal{D}(PQ)$ and $\sigma_Q\in\mathcal{D}(Q)$ be quantum states such that $\text{supp}(\rho_Q)\subset\text{supp}(\sigma_Q)$.  Let $k \defeq \inf_{\rho'_{PQ}\in\ball{\eps}{\rho_{PQ}}}\dmax{\rho'_{PQ}}{\rho'_P\otimes\sigma_Q}$. Define the following state
\begin{equation*}
\tau_{PQ_1Q_2\ldots Q_n} \defeq  \frac{1}{n}\sum_{j=1}^n \rho_{PQ_j}\otimes\sigma_{Q_1}\otimes \sigma_{Q_2}\ldots\otimes\sigma_{Q_{j-1}}\otimes\sigma_{Q_{j+1}}\ldots\otimes\sigma_{Q_n}
\end{equation*}
on $n+1$ registers $P,Q_1,Q_2,\ldots Q_n$, where $\forall j \in [n]: \rho_{PQ_j} = \rho_{PQ}$ and $\sigma_{Q_j}=\sigma_Q$.  For $\delta \in (0,1) $ and $ n= \lceil\frac{2^k}{\delta}\rceil$, it holds that 
$$ \F^2(\tau_{PQ_1Q_2\ldots Q_n},\tau_P \otimes \sigma_{Q_1}\otimes\sigma_{Q_2}\ldots \otimes \sigma_{Q_n}) \geq 1-  (\sqrt{\delta}+2\eps)^2.$$ 
\end{fact}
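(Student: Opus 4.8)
The plan is to prove the stronger relative-entropy statement $\relent{\tau_{PQ_1\cdots Q_n}}{\Theta}\le\log\br{1+\frac{2^k-1}{n}}$, where I abbreviate the target product state by $\Theta\defeq\tau_P\otimes\sigma_{Q_1}\otimes\cdots\otimes\sigma_{Q_n}$, and then convert this into the claimed fidelity bound, handling the smoothing at the very end by a triangle-inequality argument. First I would record that $\tau_P=\rho_P$ (tracing out every $Q$-register from each summand returns $\rho_P$), so $\Theta=\rho_P\otimes\sigma_{Q_1}\otimes\cdots\otimes\sigma_{Q_n}$, and that in the exact case $k=\dmax{\rho_{PQ}}{\rho_P\otimes\sigma_Q}$ (take $\eps=0$ for now) the defining property of $\dmax{\cdot}{\cdot}$ gives the operator inequality $\rho_{PQ}\preceq 2^k\,\rho_P\otimes\sigma_Q$. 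I would also restrict everything to $\supp(\sigma_Q)$ to make the inverse square roots below well defined, which is permitted since $\supp(\rho_Q)\subset\supp(\sigma_Q)$.

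The heart of the argument is a second-moment (collision) computation. Relative entropy is dominated by the sandwiched R\'enyi divergence of order two, so it suffices to bound $\Tr\br{\Theta^{-1/2}\tau\,\Theta^{-1/2}\tau}$ from above, using $\relent{\tau}{\Theta}\le\log\Tr\br{\Theta^{-1/2}\tau\,\Theta^{-1/2}\tau}$. Writing $\tau=\frac1n\sum_j\mu_j$ with $\mu_j\defeq\rho_{PQ_j}\otimes\bigotimes_{l\neq j}\sigma_{Q_l}$ and expanding, I would evaluate the $n^2$ terms $\Tr\br{\Theta^{-1/2}\mu_i\,\Theta^{-1/2}\mu_j}$ separately. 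For the $n(n-1)$ off-diagonal terms ($i\neq j$) the product structure forces each to equal exactly $1$: conjugating $\mu_i$ by $\Theta^{-1/2}$ turns every spectator register into the support projector of $\sigma$, and tracing out $Q_i$ uses the identity $\Tr_{Q_i}\br{\sigma_{Q_i}^{-1/2}\rho_{PQ_i}\sigma_{Q_i}^{1/2}}=\rho_P$ together with $\rho_P^{-1/2}\rho_P\rho_P^{-1/2}=\id$ on the support, so the shared register $P$ decouples and the term collapses to $\Tr(\rho_{PQ_j})=1$. For the $n$ diagonal terms I would invoke the operator inequality above, which gives $\Theta^{-1/2}\mu_i\,\Theta^{-1/2}\preceq 2^k\,\id$ on the relevant support and hence $\Tr\br{\Theta^{-1/2}\mu_i\,\Theta^{-1/2}\mu_i}\le 2^k$. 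Summing yields $\Tr\br{\Theta^{-1/2}\tau\,\Theta^{-1/2}\tau}\le\frac{1}{n^2}\br{n\cdot 2^k+n(n-1)}=1+\frac{2^k-1}{n}$, which establishes the relative-entropy bound.

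Next I would convert to fidelity and insert the prescribed $n$. The inequality $\relent{\tau}{\Theta}\ge-\log\F^2(\tau,\Theta)$ (the Petz--R\'enyi divergence of order $1/2$ lies below the relative entropy and equals $-\log\F^2$) gives $\F^2(\tau,\Theta)\ge 2^{-\relent{\tau}{\Theta}}\ge\br{1+\frac{2^k-1}{n}}^{-1}$. With $n=\lceil 2^k/\delta\rceil\ge 2^k/\delta$ this is at least $(1+\delta)^{-1}\ge 1-\delta$, which is the claim in the exact case. To handle the smoothing, let $\rho'_{PQ}\in\ball{\eps}{\rho_{PQ}}$ attain the infimum defining $k$ (the $\eps$-ball is compact and $\dmax{\cdot}{(\cdot)_P\otimes\sigma_Q}$ is lower semicontinuous) and build $\tau'_{PQ_1\cdots Q_n}$ from $\rho'$ exactly as above, so that the exact case gives $\Pur(\tau',\tau'_P\otimes\sigma_{Q_1}\otimes\cdots\otimes\sigma_{Q_n})\le\sqrt\delta$. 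Since the symmetrising map $\omega_{PQ}\mapsto\frac1n\sum_j\omega_{PQ_j}\otimes\bigotimes_{l\neq j}\sigma_{Q_l}$ (append $n-1$ copies of $\sigma$ and apply a uniform mixture of swaps) and the map $\omega_{PQ}\mapsto\omega_P\otimes\sigma_{Q_1}\otimes\cdots\otimes\sigma_{Q_n}$ are both CPTP, Fact~\ref{fact:monotonequantumoperation} makes $\Pur$ contractive under them, giving $\Pur(\tau,\tau')\le\Pur(\rho_{PQ},\rho'_{PQ})\le\eps$ and $\Pur(\tau'_P\otimes\sigma_{Q_1}\otimes\cdots\otimes\sigma_{Q_n},\tau_P\otimes\sigma_{Q_1}\otimes\cdots\otimes\sigma_{Q_n})\le\eps$. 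Two applications of Fact~\ref{fact:trianglepurified} then yield $\Pur(\tau,\tau_P\otimes\sigma_{Q_1}\otimes\cdots\otimes\sigma_{Q_n})\le\sqrt\delta+2\eps$, which is exactly the stated bound $\F^2\ge 1-(\sqrt\delta+2\eps)^2$.

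The main obstacle is the exact evaluation of the off-diagonal collision terms: this is where the entire $1/n$ gain originates, and it hinges on pushing the partial trace over $Q_i$ through the $\Theta^{-1/2}$ conjugation via the cyclicity identity $\Tr_{Q_i}[\sigma^{-1/2}\rho_{PQ_i}\sigma^{1/2}]=\rho_P$, so that each cross term decouples on $P$ and collapses to $1$. The remaining ingredients---the bound $\relent{\cdot}{\cdot}\le\tilde{D}_2$ by the order-two sandwiched R\'enyi divergence, the fidelity/relative-entropy comparison, and the CPTP-monotonicity of $\Pur$---are standard, and I would treat all support and invertibility issues by the restriction to $\supp(\sigma_Q)$ noted above.
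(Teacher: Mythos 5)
Your proof is correct, but note that the paper itself contains no proof of this statement: it is imported as a Fact from the references cited there, so the comparison must be with the original argument in those works. That argument reaches the same key intermediate bound $\relent{\tau}{\Theta}\leq\log\br{1+\frac{2^{k}-1}{n}}$, with $\Theta\defeq\tau_{P}\otimes\sigma_{Q_{1}}\otimes\cdots\otimes\sigma_{Q_{n}}$ and $\mu_{j}\defeq\rho_{PQ_{j}}\otimes\bigotimes_{l\neq j}\sigma_{Q_{l}}$, by a different mechanism, namely operator monotonicity of the logarithm: for each fixed $j$ one writes $\tau=\frac{1}{n}\mu_{j}+\br{1-\frac{1}{n}}\nu_{-j}$ with $\nu_{-j}$ the average of the remaining terms, uses $\mu_{j}\preceq2^{k}\Theta$ to obtain $\tau\preceq c\,\omega_{-j}\otimes\sigma_{Q_{j}}$ where $c=1+\frac{2^{k}-1}{n}$ and $\omega_{-j}$ is a normalized state on $P$ and the spectator registers, and then exploits that $\Tr_{Q_{j}}\left[\mu_{j}\right]=\rho_{P}\otimes\bigotimes_{l\neq j}\sigma_{Q_{l}}$ to get $\Tr\left[\mu_{j}\br{\log\tau-\log\Theta}\right]\leq\log c-\relent{\rho_{P}\otimes\bigotimes_{l\neq j}\sigma_{Q_{l}}}{\omega_{-j}}\leq\log c$; averaging over $j$ gives the bound. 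Your sandwiched-R\'enyi-$2$ collision computation is a genuinely different route to the same inequality: it makes the $1/n$ gain completely transparent (the $n(n-1)$ cross terms are exactly $1$, the $n$ diagonal terms are at most $2^{k}$), and it is mechanical once one grants monotonicity of $\widetilde{\mathrm{D}}_{\alpha}$ in $\alpha$, whereas the original route needs only operator monotonicity of $\log$ and no R\'enyi machinery. The remaining ingredients -- the conversion $\F^{2}\geq2^{-\relent{\cdot}{\cdot}}$, and the smoothing via CPTP monotonicity of $\Pur$ (Fact~\ref{fact:monotonequantumoperation}) plus two triangle inequalities (Fact~\ref{fact:trianglepurified}) -- coincide with the known treatment, and your handling of supports (restriction to $\supp(\sigma_{Q})$, so that $\Tr_{Q}\left[\rho_{PQ}\br{\id\otimes\sigma^{1/2}\sigma^{-1/2}}\right]=\rho_{P}$) is what makes the cross-term evaluation legitimate.

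One minor attribution slip, not a gap: $-\log\F^{2}$ equals the \emph{sandwiched} R\'enyi divergence of order $1/2$, not the Petz one; the Petz order-$1/2$ quantity is $-2\log\Tr\left[\rho^{1/2}\sigma^{1/2}\right]$, which only upper-bounds $-\log\F^{2}$. Either reading still yields the chain $-\log\F^{2}\leq\relent{\cdot}{\cdot}$ that you actually use, so the proof stands as written.
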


\begin{fact}[\cite{TomHay13, li2014}]
\label{dmaxequi}
Let $\eps\in (0,1)$ and $n$ be an integer. Let $\rho^{\otimes n}, \sigma^{\otimes n}$ be quantum states. Define $\Phi(x) = \int_{-\infty}^x \frac{e^{-t^2/2}}{\sqrt{2\pi}} dt$. It holds that
\begin{equation*}
\dmaxeps{\rho^{\otimes n}}{\sigma^{\otimes n}}{\eps} = n\relent{\rho}{\sigma} + \sqrt{n\varrelent{\rho}{\sigma}} \Phi^{-1}(\eps) + O(\log n) ,
\end{equation*}
and 
\begin{equation*}
\dheps{\rho^{\otimes n}}{\sigma^{\otimes n}}{\eps} = n\relent{\rho}{\sigma} + \sqrt{n\varrelent{\rho}{\sigma}} \Phi^{-1}(\eps) + O(\log n) .
\end{equation*}
\end{fact}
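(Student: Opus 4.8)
The plan is to remove the non-commutativity by pinching, reducing both quantities to a classical computation, and then to read off the Gaussian term from the Berry--Esseen refinement of the central limit theorem. The unifying point is that to second order both $\dheps{\rho^{\otimes n}}{\sigma^{\otimes n}}{\eps}$ and $\dmaxeps{\rho^{\otimes n}}{\sigma^{\otimes n}}{\eps}$ are controlled by the same object, the $\eps$-quantile of the i.i.d.\ sum of the single-letter log-likelihood ratio. Regarded as a random variable under $\rho$, $\log\rho-\log\sigma$ has mean $\relent{\rho}{\sigma}$ and variance $\varrelent{\rho}{\sigma}$, so over $n$ copies the sum has mean $n\relent{\rho}{\sigma}$ and standard deviation $\sqrt{n\varrelent{\rho}{\sigma}}$, and the factor $\Phi^{-1}(\eps)$ in the claim is exactly the normalized Gaussian quantile delivered by the central limit theorem.

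First I would dispose of the non-commutativity. Writing $\sigma^{\otimes n}=\sum_\lambda\lambda\,P_\lambda$ as a sum over \emph{distinct} eigenvalues, the pinching map $\cE_\sigma(X)=\sum_\lambda P_\lambda X P_\lambda$ sends $\rho^{\otimes n}$ to an operator commuting with $\sigma^{\otimes n}$, and the pinching inequality gives $\rho^{\otimes n}\preceq v_n\,\cE_\sigma(\rho^{\otimes n})$, where $v_n$ is the number of distinct eigenvalues of $\sigma^{\otimes n}$. Since $v_n\leq(n+1)^{d-1}$ with $d$ the local dimension, the penalty is only $\log v_n=O(\log n)$, which is swallowed by the error term; one also checks that pinching changes the relative entropy and its variance by at most $O(\log n)$ and lower order. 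After this reduction the two operators are simultaneously diagonal, so everything becomes a classical problem whose log-likelihoods are i.i.d.\ sums and to which Berry--Esseen applies with an $O(1/\sqrt n)$ rate. (An equivalent route is to pass to the Nussbaum--Szko\l a distributions of $(\rho,\sigma)$, whose classical relative entropy and variance match $\relent{\rho}{\sigma}$ and $\varrelent{\rho}{\sigma}$.)

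For the hypothesis testing quantity I would treat achievability and converse separately. Set $L_n=\log\rho^{\otimes n}-\log\sigma^{\otimes n}$ and $t_n=n\relent{\rho}{\sigma}+\sqrt{n\varrelent{\rho}{\sigma}}\,\Phi^{-1}(\eps)$. For achievability, take the test $\Pi$ to be the spectral projector onto the commuting region $\{L_n>t_n\}$; since $\cE_\sigma$ is self-adjoint and fixes $\sigma^{\otimes n}$ one has $\Tr(\Pi\rho^{\otimes n})=\Tr(\Pi\,\cE_\sigma(\rho^{\otimes n}))$, and Berry--Esseen yields $\Tr(\Pi\rho^{\otimes n})\geq 1-\eps$ while the threshold forces $\Tr(\Pi\sigma^{\otimes n})\leq 2^{-t_n}$, giving the lower bound. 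For the converse, any feasible $\Pi$ obeys the Neyman--Pearson split $\Tr(\Pi\sigma^{\otimes n})\geq 2^{-t}\bigl(\Tr(\Pi\rho^{\otimes n})-\Pr_\rho[L_n>t]\bigr)$, which---after pinching, with the pinching factor entering at order $\log n$ via the monotonicity of Fact~\ref{fact:monotonequantumoperation}---caps the exponent at $t_n+O(\log n)$, the correction arising from the $O(1/\sqrt n)$ quantile slack one must leave. For $\dmaxeps{\rho^{\otimes n}}{\sigma^{\otimes n}}{\eps}$ I would rerun the cut-off argument on the commuting operators: projecting $\rho^{\otimes n}$ away from the tail $\{L_n>t_n\}$ produces a subnormalized $\rho'$ with $\rho'\preceq 2^{t_n}\sigma^{\otimes n}$, hence $\dmax{\rho'}{\sigma^{\otimes n}}\leq t_n$, and the Gentle measurement lemma (Fact~\ref{gentlelemma}) together with Uhlmann's theorem (Fact~\ref{uhlmann}) certifies that $\rho'\in\ball{\eps}{\rho^{\otimes n}}$ once the removed tail has $\rho$-mass $\eps+O(1/\sqrt n)$; the matching converse shows no admissible smoothing can beat $t_n-O(\log n)$, so the same quantile reappears.

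The main obstacle is precisely the non-commutativity, and the crux is to verify that pinching neutralizes it without corrupting the $\sqrt n$-order term: the pinching factor $v_n$ is polynomial and so enters only at order $\log n$, but one must confirm that the commuting reduction preserves the mean and variance of the per-copy statistic so that the central limit theorem returns $n\relent{\rho}{\sigma}+\sqrt{n\varrelent{\rho}{\sigma}}\,\Phi^{-1}(\eps)$ intact. A secondary difficulty is the converse for the smoothed max-divergence: one must rule out that some cleverly chosen $\rho'\in\ball{\eps}{\rho^{\otimes n}}$ beats the quantile by more than $O(\log n)$, which again reduces---through data processing and the pinching bound---to a sharp one-dimensional tail estimate controlled by Berry--Esseen.
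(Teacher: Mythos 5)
The paper does not prove Fact~\ref{dmaxequi} at all: it is imported verbatim from the cited works \cite{TomHay13,li2014}, so your proposal can only be judged against the arguments in those references. Your high-level plan (reduce both quantities to the $\eps$-quantile of a classical log-likelihood sum and read off the Gaussian term from Berry--Esseen) is the right shape, but the reduction you build it on fails. The pinching $\cE_{\sigma^{\otimes n}}$ is taken with respect to the eigenspaces of $\sigma^{\otimes n}$, which are unions of type-class projectors and are strictly coarser than the $n$-fold tensor power of the single-copy pinching. Consequently $\cE_{\sigma^{\otimes n}}(\rho^{\otimes n})$ is \emph{not} a product state, the variable $L_n=\log\cE_{\sigma^{\otimes n}}(\rho^{\otimes n})-\log\sigma^{\otimes n}$ under $\cE_{\sigma^{\otimes n}}(\rho^{\otimes n})$ is \emph{not} a sum of i.i.d.\ terms, and Berry--Esseen cannot be invoked for it. Your statement that after pinching ``everything becomes a classical problem whose log-likelihoods are i.i.d.\ sums'' is false, and every branch of your argument (the type-I error of the test for $\dheps{\cdot}{\cdot}{\eps}$, the tail mass removed in the $\mathrm{D}_{\max}$ smoothing, and the converse slack) funnels through exactly this unjustified step. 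Note the dilemma cannot be escaped by switching to the fine, per-copy pinching $(\cE_\sigma(\rho))^{\otimes n}$: that \emph{is} i.i.d., but the pinching penalty then becomes $n\log v$ rather than $\log v_n=O(\log n)$, and the per-copy mean and variance become those of the dephased pair, not $\relent{\rho}{\sigma}$ and $\varrelent{\rho}{\sigma}$. Your sketch implicitly wants the $O(\log n)$ penalty of the coarse pinching together with the i.i.d.\ structure of the fine one; no version of pinching delivers both, and your ``one also checks that pinching changes the relative entropy and its variance by at most $O(\log n)$ and lower order'' asserts precisely the part that has no proof (the pinching inequality controls the mean, but gives no handle on the variance, and even exact mean and variance would not yield a CLT for a non-i.i.d.\ sequence).

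The device that makes the cited proofs work --- and which you relegate to a parenthetical ``equivalent route'' --- is the Nussbaum--Szko\l a reduction: replace $(\rho,\sigma)$ by the classical distributions $P(x,y)=a_x|\braket{v_x}{w_y}|^2$ and $Q(x,y)=b_y|\braket{v_x}{w_y}|^2$, which tensorize exactly (the NS pair of $(\rho^{\otimes n},\sigma^{\otimes n})$ is the $n$-fold product of the NS pair of $(\rho,\sigma)$) and satisfy $\relent{P}{Q}=\relent{\rho}{\sigma}$, $\varrelent{P}{Q}=\varrelent{\rho}{\sigma}$, so Berry--Esseen applies legitimately. But this reduction is not free: the technical core of both \cite{TomHay13} and \cite{li2014} consists of two-sided comparison inequalities between the quantum error quantities and the classical NS ones --- a Nussbaum--Szko\l a-type bound for the converse direction, and an Audenaert-type operator inequality (or Li's explicit test construction) for achievability --- whose losses amount only to constant factors and $O(\log n)$ threshold shifts. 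Your proposal neither states nor proves any such comparison lemma, so the bridge from the quantum quantities to a genuinely i.i.d.\ classical problem is missing; until it is supplied, the argument has a real gap.
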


\begin{fact}
\label{gaussianupper}
For the function $\Phi(x) = \int_{-\infty}^x \frac{e^{-t^2/2}}{\sqrt{2\pi}} dt$ and $\eps\leq \frac{1}{2}$, it holds that $|\Phi^{-1}(\eps)| \leq 2\sqrt{\log\frac{1}{2\eps}}$.
\end{fact}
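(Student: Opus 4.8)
The plan is to reduce the statement to a standard Gaussian tail estimate and then invert it. First I would observe that $\Phi$ is strictly increasing with $\Phi(0)=\tfrac12$, so for $\eps\le\tfrac12$ we have $\Phi^{-1}(\eps)\le 0$ and hence $|\Phi^{-1}(\eps)|=-\Phi^{-1}(\eps)$. Setting $x:=-\Phi^{-1}(\eps)\ge 0$, the defining relation $\Phi(-x)=\eps$ together with the symmetry $\Phi(-x)=\int_x^\infty \frac{e^{-t^2/2}}{\sqrt{2\pi}}\,dt$ (from the substitution $t\mapsto -t$) shows that it suffices to control the upper tail $Q(x):=\int_x^\infty \frac{e^{-t^2/2}}{\sqrt{2\pi}}\,dt=\eps$ and prove $x\le 2\sqrt{\log\frac{1}{2\eps}}$.

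The key step is the tail bound $Q(x)\le \tfrac12 e^{-x^2/2}$ for all $x\ge 0$. I would establish this by the shift substitution $t=x+s$, which gives
\[
Q(x)=\frac{e^{-x^2/2}}{\sqrt{2\pi}}\int_0^\infty e^{-xs-s^2/2}\,ds \le \frac{e^{-x^2/2}}{\sqrt{2\pi}}\int_0^\infty e^{-s^2/2}\,ds = \frac12 e^{-x^2/2},
\]
where the inequality uses $e^{-xs}\le 1$ for $x,s\ge 0$ and the final equality uses $\int_0^\infty e^{-s^2/2}\,ds=\sqrt{\pi/2}=\tfrac12\sqrt{2\pi}$. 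Notice this is actually stronger than what the claim requires, which leaves comfortable slack in the constant.

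Finally I would invert the bound: from $\eps=Q(x)\le\tfrac12 e^{-x^2/2}$ it follows that $2\eps\le e^{-x^2/2}$, hence $\tfrac{x^2}{2}\le\log\frac{1}{2\eps}$ and $x\le\sqrt{2\log\frac{1}{2\eps}}$, the logarithm being nonnegative exactly because $\eps\le\tfrac12$. Since $\sqrt2<2$ this yields $|\Phi^{-1}(\eps)|=x\le 2\sqrt{\log\frac{1}{2\eps}}$, as claimed. There is no genuine obstacle in this argument; the only points requiring care are correctly tracking the sign of $\Phi^{-1}(\eps)$ on the range $\eps\le\tfrac12$ and confirming that the constant $2$ absorbs the sharper factor $\sqrt2$ coming from the tail estimate.
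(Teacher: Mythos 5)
Your proof is correct and follows essentially the same route as the paper: both reduce to the Gaussian tail bound $\Phi(-x)\leq \tfrac{1}{2}e^{-x^{2}/2}$ via the shift substitution $t\mapsto x+t$ and the estimate $e^{-xt}\leq 1$, then invert to get $x\leq \sqrt{2\log\tfrac{1}{2\eps}}\leq 2\sqrt{\log\tfrac{1}{2\eps}}$. Your write-up merely makes explicit the sign bookkeeping ($\Phi^{-1}(\eps)\leq 0$ for $\eps\leq\tfrac12$) that the paper leaves implicit.
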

\begin{proof}
We have $$\Phi(-x)=\int_{-\infty}^{-x} \frac{e^{-t^2/2}}{\sqrt{2\pi}} dt = \int_{0}^{\infty} \frac{e^{-(-x-t)^2/2}}{\sqrt{2\pi}} dt \leq e^{-x^2/2} \int_{0}^{\infty} \frac{e^{-(-t)^2/2}}{\sqrt{2\pi}} dt = \frac{1}{2}e^{-x^2/2}.$$ Thus, $\Phi^{-1}(\eps) \geq -2\sqrt{\log\frac{1}{2\eps}}$, which completes the proof.
\end{proof}

\begin{fact} [\cite{AnshuJW17}]
\label{closestatesmeasurement}
Let $\rho,\sigma$ be quantum states such that $\Pur(\rho,\sigma)\leq \eps$. Let $0\leq \Pi\leq \id$ be an operator such that $\Tr(\Pi\rho)\geq 1-\delta^2$. Then $\Tr(\Pi\sigma)\geq 1- (2\eps+\delta)^2$. If $\delta=0$, then $\Tr(\Pi\sigma) \geq 1-\eps^2$.
\end{fact}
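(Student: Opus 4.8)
The plan is to reduce the quantum statement to a two-outcome classical comparison by processing both states through the binary measurement $\{\Pi, \id-\Pi\}$, and then to exploit the closed form of the classical fidelity. First I would introduce the measurement channel $\cM(X) = \Tr(\Pi X)\,\ketbra{0} + \Tr((\id-\Pi)X)\,\ketbra{1}$, which sends $\rho$ and $\sigma$ to diagonal states encoding the distributions $P = (\Tr(\Pi\rho),\,\Tr((\id-\Pi)\rho))$ and $Q = (\Tr(\Pi\sigma),\,\Tr((\id-\Pi)\sigma))$. Since fidelity is monotone non-decreasing under quantum operations (Fact~\ref{fact:monotonequantumoperation}), and the classical fidelity $\F(P,Q)$ coincides with the quantum fidelity of these diagonal output states, I obtain $\F(P,Q) \geq \F(\cM(\rho),\cM(\sigma)) \ge \F(\rho,\sigma) = \sqrt{1-\Pur^2(\rho,\sigma)} \geq \sqrt{1-\eps^2}$. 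Equivalently, the purified distance of the two distributions satisfies $\Pur(P,Q) \leq \eps$.

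The second step extracts the bound from the two-outcome geometry. Writing $\Tr(\Pi\rho) = \cos^2\alpha$ and $\Tr(\Pi\sigma) = \cos^2\beta$ with $\alpha,\beta \in [0,\pi/2]$, the classical fidelity becomes $\F(P,Q) = \cos\alpha\cos\beta + \sin\alpha\sin\beta = \cos(\alpha-\beta)$, so that $\Pur(P,Q) = |\sin(\alpha-\beta)| \leq \eps$. The hypothesis $\Tr(\Pi\rho) \geq 1-\delta^2$ gives $\sin\alpha = \sqrt{1-\Tr(\Pi\rho)} \leq \delta$. Expanding $\sin\beta = \sin\alpha\,\cos(\beta-\alpha) + \cos\alpha\,\sin(\beta-\alpha)$ and bounding each cosine factor by $1$ then yields $\sqrt{1-\Tr(\Pi\sigma)} = \sin\beta \leq \sin\alpha + |\sin(\beta-\alpha)| \leq \delta + \eps$, whence $\Tr(\Pi\sigma) \geq 1-(\eps+\delta)^2 \geq 1-(2\eps+\delta)^2$, as claimed. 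For the special case $\delta = 0$, the constraint $\Tr(\Pi\rho) \geq 1$ together with $\Pi \leq \id$ forces $\Tr(\Pi\rho) = 1$, so $\alpha = 0$ and $\Pur(P,Q) = \sin\beta = \sqrt{1-\Tr(\Pi\sigma)} \leq \eps$ directly, giving $\Tr(\Pi\sigma) \geq 1-\eps^2$.

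I expect no serious obstacle. The only care needed is in the trigonometric bookkeeping, ensuring that both $\alpha$ and $\beta$ lie in $[0,\pi/2]$ so that $\cos(\beta-\alpha) \geq 0$ and every sine/cosine factor entering the triangle-style estimate is correctly signed. The genuinely load-bearing ingredient is the data-processing step, namely that passing to the two-outcome measurement statistics cannot decrease the fidelity (equivalently, cannot increase the purified distance); everything after that is elementary. It is worth noting that this route in fact delivers the stronger constant $(\eps+\delta)^2$, which comfortably implies the stated $(2\eps+\delta)^2$.
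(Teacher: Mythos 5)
Your proof is correct, and it is worth noting that the paper itself offers no proof to compare against: this statement is imported as a citation-only Fact from~\cite{AnshuJW17}. Your argument is a legitimate, self-contained derivation that uses only ingredients already available in the paper, essentially just the monotonicity of fidelity under quantum operations (Fact~\ref{fact:monotonequantumoperation}) applied to the two-outcome measurement channel $\cM(X)=\Tr(\Pi X)\ketbra{0}+\Tr((\id-\Pi)X)\ketbra{1}$, followed by the exact angle parametrization $\F(P,Q)=\cos(\alpha-\beta)$ of binary-distribution fidelity. The trigonometric bookkeeping checks out: with $\alpha,\beta\in[0,\pi/2]$ one has $\beta-\alpha\in[-\pi/2,\pi/2]$, so $\cos(\beta-\alpha)\geq 0$, $\Pur(P,Q)=|\sin(\alpha-\beta)|\leq\eps$, and the addition formula gives $\sin\beta\leq\sin\alpha+|\sin(\beta-\alpha)|\leq\delta+\eps$, hence $\Tr(\Pi\sigma)\geq 1-(\eps+\delta)^2$. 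This is strictly sharper than the stated bound $1-(2\eps+\delta)^2$; the factor-$2$ loss in the quoted version is the kind of slack one incurs by instead routing the argument through the gentle measurement lemma (Fact~\ref{gentlelemma}) and the triangle inequality for purified distance (Fact~\ref{fact:trianglepurified}), a detour your direct data-processing argument avoids entirely. Since the Fact as stated is weaker than what you prove, your proposal certainly suffices for every use of Fact~\ref{closestatesmeasurement} in this paper.
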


\begin{fact}[\cite{Sen12, Gao15}]
\label{noncommutativebound}
Let $\rho$ be a quantum state and $\Pi_1, \Pi_2, \ldots \Pi_k$ be projectors. Let $\Pi'_i \defeq \id - \Pi_i$. Then 
$$\Pur\left(\frac{\Pi'_k\ldots\Pi'_2\Pi'_1\rho\Pi'_1\Pi'_2\ldots\Pi'_k}{\Tr\left(\Pi'_k\ldots\Pi'_2\Pi'_1\rho\Pi'_1\Pi'_2\ldots\Pi'_k\right)}, \rho\right)\leq  \left(\sum_i\Tr(\Pi_i\rho)\right)^{1/4}.$$
\end{fact}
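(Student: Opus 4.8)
The plan is to first reduce to the case where $\rho$ is pure, and then to isolate the genuinely non-commutative estimate---a union bound for the survival amplitude of a sequence of projective measurements---as the single hard step. Write $M \defeq \Pi'_k \cdots \Pi'_1$, so that the target state is $\sigma \defeq M\rho M^\dagger/\Tr(M\rho M^\dagger)$. Let $\ket{\rho}$ be a purification of $\rho$ on an enlarged space and set $\ket{\sigma} \propto (M\otimes\id)\ket{\rho}$; then $\ket{\sigma}$ purifies $\sigma$ and the quantity $\sum_i \Tr(\Pi_i\rho)$ is unchanged when each $\Pi_i$ is replaced by $\Pi_i\otimes\id$. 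Since fidelity does not decrease under partial trace (Fact~\ref{fact:monotonequantumoperation}), $\F(\rho,\sigma)\geq \F(\ketbra{\rho},\ketbra{\sigma})$, hence $\Pur(\rho,\sigma)\leq \Pur(\ketbra{\rho},\ketbra{\sigma})$. It therefore suffices to prove the bound for a pure state $\ket{\rho}=\ket{\psi}$, which I assume from now on.

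\emph{Reformulation for pure states.} For $\ket{\psi}$ one computes $\F^2(\ketbra{\psi},\sigma)=|\bra{\psi}M\ket{\psi}|^2/\|M\ket{\psi}\|^2$, so
\[
\Pur^2(\ketbra{\psi},\sigma)=1-\frac{|\bra{\psi}M\ket{\psi}|^2}{\|M\ket{\psi}\|^2}.
\]
Two quantities must be controlled: the \emph{survival probability} $s\defeq\|M\ket{\psi}\|^2$ must be near $1$, and the \emph{amplitude} $\bra{\psi}M\ket{\psi}$ must be near $1$ as well, so that $M$ acts almost trivially in direction and not merely in norm. The telescoping identity $\id-M=\sum_{j=1}^k \Pi_j\,\Pi'_{j-1}\cdots\Pi'_1$ together with $(\Pi'_j)^2=\Pi'_j$ yields the exact relation $1-s=\sum_{j=1}^k \|\Pi_j\,\Pi'_{j-1}\cdots\Pi'_1\ket{\psi}\|^2$ and the companion expression $\bra{\psi}(\id-M)\ket{\psi}=\sum_{j=1}^k \bra{\psi}\Pi_j\,\Pi'_{j-1}\cdots\Pi'_1\ket{\psi}$.

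\emph{The crux: a non-commutative union bound.} Everything now reduces to bounding the ``intermediate'' overlaps $\|\Pi_j\,\Pi'_{j-1}\cdots\Pi'_1\ket{\psi}\|$ by the ``original'' overlaps $\sqrt{\Tr(\Pi_j\rho)}$. This is the main obstacle, and it is exactly where non-commutativity bites: replacing each intermediate vector $\Pi'_{j-1}\cdots\Pi'_1\ket{\psi}$ by $\ket{\psi}$ and applying the triangle inequality or Cauchy--Schwarz produces a recursion that can grow exponentially in $k$ (equivalently, a spurious factor $\sqrt{k}$), whereas the claimed bound must be free of $k$ and of dimension. The sharp, $\sqrt{k}$-free control is precisely the content of the sequential-measurement union bound of Sen and Gao, giving $1-s\lesssim\sqrt{\sum_i\Tr(\Pi_i\rho)}$; note that for commuting $\Pi_i$ the classical union bound gives the stronger $1-s\leq\sum_i\Tr(\Pi_i\rho)$, and the square root is exactly the penalty for non-commutativity.

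\emph{From survival probability to purified distance.} The conversion of this probability estimate into the state estimate is what accounts for the fourth root. Heuristically, the gentle-measurement lemma (Fact~\ref{gentlelemma}) turns a survival probability $s$ into fidelity $\sqrt{s}$, so $\Pur\lesssim\sqrt{1-s}\lesssim(\sum_i\Tr(\Pi_i\rho))^{1/4}$: one square root comes from the union bound and one from passing to the fidelity. The subtle point, which the gentle-measurement lemma alone does not settle, is that $M$ is not Hermitian and the unitary in its polar decomposition must also be shown to be small, so that the amplitude $\bra{\psi}M\ket{\psi}$---and not merely the norm $\|M\ket{\psi}\|$---stays near $1$. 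Controlling this rotation is handled together with the survival bound in the cited works (via Uhlmann's theorem, Fact~\ref{uhlmann}, and the triangle inequality for purified distance, Fact~\ref{fact:trianglepurified}), and yields the stated constant $1$.
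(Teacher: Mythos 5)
The paper itself contains no proof of this statement: like the other Facts in its Preliminaries, it is imported as a black box, with the citation to \cite{Sen12,Gao15} standing in for the argument, so there is no internal proof to compare against. Judged as a proof, your proposal has a genuine gap, located exactly where you place the difficulty. Your preparatory steps are correct---the purification/monotonicity reduction to pure states, the pure-state fidelity formula, the telescoping identity $\id-M=\sum_{j}\Pi_j\Pi'_{j-1}\cdots\Pi'_1$, and the exact relation $1-s=\sum_j\eps_j$ with $s\defeq\|M\ket{\psi}\|^2$ and $\eps_j\defeq\|\Pi_j\Pi'_{j-1}\cdots\Pi'_1\ket{\psi}\|^2$ all check out---but at the decisive step you declare that the $k$-free control of $1-s$ ``is precisely the content of the sequential-measurement union bound of Sen and Gao,'' and that the control of the amplitude $\bra{\psi}M\ket{\psi}$ ``is handled \ldots in the cited works.'' The statement you were asked to prove \emph{is} the Sen--Gao union bound in purified-distance form, so invoking those theorems at the crux reduces the Fact to itself; what remains is a reduction plus a citation, which mirrors the paper's treatment but is not a proof. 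A secondary, quantitative gap: even taking Sen's bound $1-s\le2\sqrt{T}$ or Gao's $1-s\le4T$ (with $T\defeq\sum_i\Tr(\Pi_i\rho)$) as black boxes, the ``$\lesssim$'' bookkeeping you sketch does not recover the stated constant $1$ and exponent $1/4$: for instance, Gao's bound combined with your amplitude estimate yields $\Pur\le2\sqrt{T}$, which exceeds $T^{1/4}$ once $T>1/16$.

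The frustrating part is that your own setup is two inequalities away from a complete, citation-free proof. By idempotence and Cauchy--Schwarz applied to your companion expression, $|\bra{\psi}(\id-M)\ket{\psi}|\le\sum_j\sqrt{\Tr(\Pi_j\psi)}\sqrt{\eps_j}\le\sqrt{Tu}$, where $u\defeq1-s=\sum_j\eps_j$ is exact by your relation. Therefore (for $T\le 1$, the claim being trivial otherwise)
\begin{equation*}
\Pur^2\;\le\;1-\frac{\left(1-\sqrt{Tu}\right)^2}{1-u}\;=\;\frac{2\sqrt{Tu}-(1+T)u}{1-u},
\end{equation*}
and a one-variable optimization (critical points at $\sqrt{u}=\sqrt{T}$ and $\sqrt{u}=1/\sqrt{T}$) shows the right-hand side is maximized over $u\in[0,1)$ at $u=T$, where it equals $T$. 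Hence $\Pur\le\sqrt{T}\le T^{1/4}$. No external union bound is needed at all; indeed a Gao-type bound $u\le 4T/(1+T)^2$ drops out of the same two estimates via $|\bra{\psi}M\ket{\psi}|^2\le\|M\ket{\psi}\|^2$. In short, the step you outsourced to the references is an elementary consequence of the identities you had already derived, and carrying it out would have turned your sketch into a self-contained proof of an even stronger ($\sqrt{T}$) bound.
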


\begin{fact}[Fannes inequality \cite{fannes73}]
\label{fact:fannes}
Given quantum states $\rho_1,\rho_2\in \cD(\cH_A)$, such that $|A|=d$ and $\Pur(\rho_1,\rho_2)= \eps \leq \frac{1}{2\mathrm{e}}$, $$|S(\rho_1)-S(\rho_2)|\leq \eps\log(d)+1.$$   
\end{fact}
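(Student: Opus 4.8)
The plan is to reduce this to the classical Fannes inequality, treating the passage from purified distance to trace distance as the only nontrivial ingredient. First I would bound the trace distance by the purified distance through the Fuchs--van de Graaf inequality: since $\F(\rho_1,\rho_2)^2 = 1 - \Pur(\rho_1,\rho_2)^2$, one has $\frac{1}{2}\|\rho_1-\rho_2\|_1 \le \sqrt{1-\F^2(\rho_1,\rho_2)} = \Pur(\rho_1,\rho_2) = \eps$. Writing $T := \frac{1}{2}\|\rho_1-\rho_2\|_1$, this gives $T \le \eps$, and the hypothesis $\eps \le \frac{1}{2e}$ then places $T$ safely inside the small-distance regime $T < \frac{1}{e}$ in which the standard Fannes bound is valid.

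I would then invoke the standard Fannes inequality in the form $|S(\rho_1)-S(\rho_2)| \le T\log d + \eta(T)$, valid for $T \le \frac{1}{e}$, where $d = |A|$ and $\eta(x) = -x\log x$. The two terms are bounded separately. Monotonicity of $x \mapsto x\log d$ gives $T\log d \le \eps\log d$. For the correction term I would use that $\eta$ attains its global maximum at $x = 1/e$ with value $\eta(1/e) = \frac{1}{e\ln 2} \approx 0.53$, so that $\eta(T) \le \frac{1}{e\ln 2} \le 1$ uniformly in $T$. Adding the two estimates yields $|S(\rho_1)-S(\rho_2)| \le \eps\log d + 1$, as claimed.

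I do not expect a conceptual obstacle here; the difficulty is entirely one of bookkeeping. The delicate points are to fix a precise version of the base Fannes inequality (its constant, its admissible range of $T$, and the normalization of the trace distance all vary between references and must be matched to the $\eps\log d + 1$ target), and to observe that the dimension-free correction term is controlled by the absolute constant $1$ precisely because $\max_x \eta(x) = \frac{1}{e\ln 2} < 1$. The restriction $\eps \le \frac{1}{2e}$ enters only to guarantee that $T$ lies in the validity window of the base inequality; once the reduction to trace distance via Fuchs--van de Graaf is in place, the remainder is immediate.
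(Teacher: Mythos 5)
The paper never proves this statement --- it is imported as a Fact with a citation to Fannes's 1973 paper --- so there is no in-paper argument to compare against; judged on its own, your proposal has a genuine gap. The failure is in the base inequality you invoke: $|S(\rho_1)-S(\rho_2)| \le T\log d + \eta(T)$, $\eta(x)=-x\log x$, with $T=\frac{1}{2}\|\rho_1-\rho_2\|_1$ the \emph{halved} trace distance. This is not a correct statement of Fannes' inequality, and it is in fact false: for $d=2$, $\rho_1=\ketbra{0}$, $\rho_2=0.9\,\ketbra{0}+0.1\,\ketbra{1}$, one has $T=0.1\le 1/e$, but $|S(\rho_1)-S(\rho_2)|=h(0.1)\approx 0.469$, whereas $T\log_2 d+\eta(T)=0.1+0.1\log_2 10\approx 0.432$. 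The genuine 1973 inequality carries the \emph{full} trace norm: $|S(\rho_1)-S(\rho_2)|\le \|\rho_1-\rho_2\|_1\log d+\eta(\|\rho_1-\rho_2\|_1)$ for $\|\rho_1-\rho_2\|_1\le 1/e$. This is exactly where your dismissal of the normalization issue as ``bookkeeping'' breaks down: Fuchs--van de Graaf only gives $\|\rho_1-\rho_2\|_1\le 2\eps$, so the correct classical input produces the bound $2\eps\log d+\eta(2\eps)$, and this is \emph{not} bounded by $\eps\log d+1$; for $\eps=\frac{1}{2e}$ it exceeds $\eps\log d+1$ as soon as $\eps\log d>1-\eta(1/e)\approx 0.47$, i.e.\ already for $d\ge 6$. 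The factor of $2$ in the distance cannot be traded against the additive constant $1$, so the route through the original Fannes inequality does not prove the Fact as stated.

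The proof can be repaired by replacing your base inequality with the sharpened Fannes--Audenaert inequality, $|S(\rho_1)-S(\rho_2)|\le T\log(d-1)+h(T)$, where $T$ is the halved trace distance, $h$ is the binary entropy, and the bound holds for all $T\in[0,1]$. Your two remaining steps then work verbatim: $T\le \Pur(\rho_1,\rho_2)=\eps$ by Fuchs--van de Graaf, and $h(T)\le 1$, give $|S(\rho_1)-S(\rho_2)|\le \eps\log(d-1)+1\le \eps\log d+1$. Note that with this input the hypothesis $\eps\le\frac{1}{2\mathrm{e}}$ becomes superfluous --- a sign that the Fact is not a corollary of the 1973 inequality plus bookkeeping, but genuinely needs the sharper estimate.
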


\begin{fact}[\cite{DHorodecki99}]
\label{relentresourcecont}
Let $\rho,\rho' \in \mathcal{D}(\cH_M)$ be the quantum states on register $M$ with $\|\rho-\rho'\|_1:= \eps \leq \frac{1}{3}$. Let $\cF\subseteq \mathcal{D}(\cH_M)$ be a convex set. Then it holds that 
$$|\inf_{\sigma\in \cF}\relent{\rho}{\sigma} - \inf_{\sigma'\in \cF}\relent{\rho'}{\sigma'}| \leq \eps \left(\log M+ \inf_{\tau\in \cF}\|\log\tau\|_\infty\right) + \eps\log\frac{1}{\eps} + 4\eps.$$
\end{fact}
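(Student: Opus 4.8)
The plan is to prove the two-sided estimate by bounding each direction separately and then invoking the symmetry of the hypothesis $\|\rho-\rho'\|_1=\eps$. Write $E(\rho):=\inf_{\sigma\in\cF}\relent{\rho}{\sigma}$. If $\cF$ contains no full-rank state the right-hand side is infinite and there is nothing to prove, so I assume $\inf_{\tau\in\cF}\|\log\tau\|_\infty=:L<\infty$ and fix a full-rank $\tau\in\cF$ with $\|\log\tau\|_\infty\le L+\delta$; likewise I fix a near-optimizer $\sigma^*\in\cF$ for $\rho$ with $\relent{\rho}{\sigma^*}\le E(\rho)+\delta$ (finiteness of $E(\rho)$ follows from $E(\rho)\le\relent{\rho}{\tau}<\infty$). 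I will bound $E(\rho')-E(\rho)$; the reverse bound is identical after exchanging $\rho$ and $\rho'$. The device is to test $\rho'$ against the convex-combination witness $\sigma_\lambda:=(1-\lambda)\sigma^*+\lambda\tau\in\cF$, so that $E(\rho')\le\relent{\rho'}{\sigma_\lambda}$, and to compare this to $\relent{\rho}{\sigma^*}$.

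The tool for controlling the cross term is a Hahn-type decomposition of $\rho-\rho'$. Writing $\rho-\rho'=\Delta_+-\Delta_-$ with $\Delta_\pm\succeq0$ of orthogonal support, one has $\Tr\Delta_+=\Tr\Delta_-=\eps/2$, and $\omega:=\rho-\Delta_+=\rho'-\Delta_-\succeq0$ with $\Tr\omega=1-\eps/2$. Crucially $\omega\preceq\rho$ and $\Delta_+\preceq\rho$, so both live inside $\supp\rho\subseteq\supp\sigma^*$, which keeps all $\sigma^*$-terms finite. I then expand
\[
\relent{\rho'}{\sigma_\lambda}=-S(\rho')-\Tr(\omega\log\sigma_\lambda)-\Tr(\Delta_-\log\sigma_\lambda),
\]
and estimate the two traces by operator monotonicity of the logarithm applied to the dominations $\sigma_\lambda\succeq(1-\lambda)\sigma^*$ and $\sigma_\lambda\succeq\lambda\tau$.

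For the $\omega$-term, $\log\sigma_\lambda\succeq\log(1-\lambda)+\log\sigma^*$ together with $\Tr\omega\le1$, $\log\sigma^*\preceq0$ and $\Tr(\Delta_+\log\sigma^*)\le0$ yields $-\Tr(\omega\log\sigma_\lambda)\le-\log(1-\lambda)-\Tr(\rho\log\sigma^*)$, i.e.\ it reconstitutes $\relent{\rho}{\sigma^*}$. For the $\Delta_-$-term I deliberately route through the full-rank $\tau$ rather than $\sigma^*$ — precisely because $\sigma^*$ may be rank-deficient on $\supp\Delta_-$, which is exactly where the naive cross term $\Tr((\rho-\rho')\log\sigma^*)$ would blow up — and use $\log\sigma_\lambda\succeq\log\lambda+\log\tau$ to obtain $-\Tr(\Delta_-\log\sigma_\lambda)\le-\tfrac{\eps}{2}\log\lambda+\tfrac{\eps}{2}L+O(\delta)$. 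Since $-\Tr(\rho\log\sigma^*)=\relent{\rho}{\sigma^*}+S(\rho)\le E(\rho)+S(\rho)+\delta$, collecting everything gives
\[
E(\rho')-E(\rho)\le [S(\rho)-S(\rho')]-\log(1-\lambda)-\tfrac{\eps}{2}\log\lambda+\tfrac{\eps}{2}L+O(\delta).
\]

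It remains to optimize $\lambda$ and do the bookkeeping. Choosing $\lambda=\eps/2$ makes $-\log(1-\lambda)=O(\eps)$ and $-\tfrac{\eps}{2}\log\lambda=\tfrac{\eps}{2}\log\tfrac1\eps+O(\eps)$, while the (trace-norm) Fannes--Audenaert inequality gives $|S(\rho)-S(\rho')|\le\tfrac{\eps}{2}\log M+h(\eps/2)$ with $h(\eps/2)=\tfrac{\eps}{2}\log\tfrac1\eps+O(\eps)$. The two $\tfrac{\eps}{2}\log\tfrac1\eps$ contributions combine to $\eps\log\tfrac1\eps$, the entropy bound contributes $\tfrac{\eps}{2}\log M\le\eps\log M$, the coherence term is $\tfrac{\eps}{2}L\le\eps L$, and, letting $\delta\to0$, the remaining linear pieces are absorbed into $4\eps$ (here $\eps\le1/3$ is used to keep $-\log(1-\eps/2)$ and the $h$-remainder controlled). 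The symmetric argument bounds $E(\rho)-E(\rho')$, completing the proof. The main obstacle throughout is the cross term against the possibly singular optimizer $\sigma^*$: the Hahn decomposition isolates the ``new'' mass $\Delta_-$ of $\rho'$ and the mixing $\lambda\tau$ neutralizes it at the cost of the $\inf_{\tau}\|\log\tau\|_\infty$ factor, and the only delicate step is matching the $\log\tfrac1\eps$ coefficient, which forces the sharp trace-norm Fannes inequality rather than the purified-distance form of Fact~\ref{fact:fannes}.
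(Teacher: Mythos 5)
The paper does not actually prove this statement: it is imported verbatim as a Fact with a citation to Donald and Horodecki \cite{DHorodecki99}, so your proposal has to be judged on its own merits. Your overall architecture is the standard one for such continuity bounds (mix a near-optimizer $\sigma^*$ for $\rho$ with a full-rank free state $\tau$, split off the transferred mass via the Jordan decomposition of $\rho-\rho'$, control the entropy difference with Fannes, choose $\lambda=\eps/2$), and your final bookkeeping does yield a bound of the stated form. However, one step rests on a false claim: for the Jordan decomposition $\rho-\rho'=\Delta_+-\Delta_-$ it is \emph{not} true in general that $\omega:=\rho-\Delta_+\succeq 0$, equivalently that $\Delta_+\preceq\rho$ or that $\supp\,\Delta_+\subseteq\supp\,\rho$. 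Concrete counterexample: $\rho=\ketbra{0}$, $\rho'=\ketbra{\psi}$ with $\ket{\psi}=\cos\theta\ket{0}+\sin\theta\ket{1}$, so $\eps=2\sin\theta$ can be made arbitrarily small. Here $\rho-\rho'$ has eigenvalues $\pm\sin\theta$, and as $\theta\to 0$ the positive eigenvector tends to $(\ket{0}-\ket{1})/\sqrt{2}$, not to $\ket{0}$; consequently $\bra{1}\Delta_+\ket{1}\approx\tfrac{1}{2}\sin\theta>0=\bra{1}\rho\ket{1}$, so $\rho-\Delta_+$ has a negative eigenvalue. This matters because your estimate of the $\omega$-term traces the operator inequality $\log\sigma_\lambda\succeq\log(1-\lambda)+\log\sigma^*$ against $\omega$, and the inference $X\succeq Y\Rightarrow\Tr(\omega X)\geq\Tr(\omega Y)$ is valid only for $\omega\succeq 0$; likewise your remark that $\Delta_+$ ``lives inside $\supp\,\rho\subseteq\supp\,\sigma^*$'' fails, so the finiteness justification collapses as written.

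The good news is that the inequality you need at that step is true, and the repair is local. Split $-\Tr(\omega\log\sigma_\lambda)=-\Tr(\rho\log\sigma_\lambda)+\Tr(\Delta_+\log\sigma_\lambda)$ as an algebraic identity (all terms are finite since $\sigma_\lambda$ is full rank). The second term satisfies $\Tr(\Delta_+\log\sigma_\lambda)\leq 0$ because $\Delta_+\succeq 0$ and $\log\sigma_\lambda\preceq 0$ (as $\sigma_\lambda\preceq\id$), so it can simply be dropped; operator monotonicity is then applied to the genuinely positive operator $\rho$, giving $-\Tr(\rho\log\sigma_\lambda)\leq-\log(1-\lambda)-\Tr(\rho\log\sigma^*)$, which is finite because $\supp\,\rho\subseteq\supp\,\sigma^*$ (this is the only place $\log\sigma^*$ is ever paired with anything, so no singular-support issue remains). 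Equivalently, one can avoid $\omega$ altogether by using the \emph{correct} operator inequality $\rho'\preceq\rho+\Delta_-$ against $-\log\sigma_\lambda\succeq 0$. Your treatment of the $\Delta_-$-term is sound as it stands (it only uses $\Delta_-\succeq 0$ and $\log\sigma_\lambda\succeq\log\lambda+\log\tau$ with $\tau$ full rank), and with the repair above, the choice $\lambda=\eps/2$ and the trace-norm Fannes--Audenaert inequality give $|E(\rho)-E(\rho')|\leq\tfrac{\eps}{2}\left(\log M+L\right)+\eps\log\tfrac{1}{\eps}+O(\eps)$, which is even slightly stronger than the stated bound. So: right strategy and correct final estimate, but one invalid positivity claim that must be excised and replaced as indicated.
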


We have the following lemma for the resource theory of coherence.

\begin{lemma}
\label{lem:4}
For any quantum state $\rho^{AB}\in \cD(\cH_{AB})$ the following inequality
holds:
\begin{equation}
R_{\mathrm{c}}(\rho^{AB})-R_{\mathrm{c}}(\rho^{B})\leq2\log_{2}|A|,
\label{eq:lemma}
\end{equation}
where $R_{\mathrm{c}}$ is the relative entropy of coherence.
\end{lemma}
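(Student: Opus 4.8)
The plan is to convert the statement entirely into conditional von Neumann entropies and then close it with the elementary dimension bounds on conditional entropy. Using the closed form $R_{\mathrm{c}}(\rho)=S(\overline{\rho})-S(\rho)$ recalled in the main text, I would first expand
\begin{equation}
R_{\mathrm{c}}(\rho^{AB})-R_{\mathrm{c}}(\rho^{B})=\left[S(\overline{\rho}^{AB})-S(\overline{\rho}^{B})\right]-\left[S(\rho^{AB})-S(\rho^{B})\right],
\end{equation}
regrouping the four entropy terms so that the dephased quantities are collected in the first bracket and the original ones in the second.

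The one genuinely structural step is to recognize each bracket as a conditional entropy of a single bipartite state. For the second bracket this is immediate: it is $S(A|B)_{\rho}=S(\rho^{AB})-S(\rho^{B})$. For the first bracket I would establish the auxiliary identity $\Tr_A\overline{\rho}^{AB}=\overline{\rho}^{B}$, i.e.\ that complete dephasing in the product incoherent basis $\{\ket{i}^A\ket{j}^B\}$ commutes with the partial trace over $A$. This follows by a direct computation, writing $\overline{\rho}^{AB}=\sum_{i,j}(\ketbra{i}^A\otimes\ketbra{j}^B)\rho^{AB}(\ketbra{i}^A\otimes\ketbra{j}^B)$ and tracing out $A$; the off-diagonal $A$-blocks are annihilated, leaving exactly $\sum_j\ketbra{j}^B\rho^B\ketbra{j}^B=\overline{\rho}^B$. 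With this identity in hand, $S(\overline{\rho}^{B})$ is the $B$-marginal entropy of the bipartite state $\overline{\rho}^{AB}$, so the first bracket is precisely $S(A|B)_{\overline{\rho}}$.

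At this point the claim reduces to $S(A|B)_{\overline{\rho}}-S(A|B)_{\rho}\leq 2\log_2|A|$, which I would finish using the standard bounds on conditional entropy. Subadditivity gives $S(A|B)_{\overline{\rho}}\leq S(\overline{\rho}^{A})\leq\log_2|A|$, and the Araki--Lieb inequality gives $S(A|B)_{\rho}\geq -S(\rho^{A})\geq -\log_2|A|$; subtracting the two yields the desired $2\log_2|A|$. I expect the only non-routine point to be the commutation identity $\Tr_A\overline{\rho}^{AB}=\overline{\rho}^B$, which quietly uses the fact that the incoherent basis of the composite system is the product of the local incoherent bases; once that is in place the remainder is a one-line application of textbook entropy inequalities, so I anticipate no real obstacle.
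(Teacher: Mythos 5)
Your proof is correct and follows essentially the same route as the paper: rewrite both relative entropies of coherence via $R_{\mathrm{c}}(\rho)=S(\overline{\rho})-S(\rho)$, identify the difference as $S(A|B)_{\overline{\rho}}-S(A|B)_{\rho}$, and close with the dimension bounds $-\log_2|A|\leq S(A|B)\leq\log_2|A|$. The only difference is that you explicitly verify the commutation identity $\Tr_A\overline{\rho}^{AB}=\overline{\rho}^{B}$ and justify the conditional-entropy bounds via subadditivity and Araki--Lieb, details the paper leaves implicit.
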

\begin{proof}
Recalling that the relative entropy of coherence admits the closed
expression $R_{\mathrm{c}}(\rho)=S(\overline{\rho})-S(\rho)$, it
follows that Eq.~(\ref{eq:lemma}) is equivalent to 
\begin{equation*}
S(\overline{\rho}^{AB})-S(\overline{\rho}^{B})-S(\rho^{AB})+S(\rho^{B})\leq2\log_{2}d_{A}.
\end{equation*}
The left-hand side of this inequality is a difference of two conditional
entropies: 
\begin{equation*}
S(A|B)_{\overline{\rho}}-S(A|B)_{\rho}\leq2\log_{2}d_{A}
\end{equation*}
with the states $\overline{\rho}=\overline{\rho}^{AB}$ and $\rho=\rho^{AB}$.
In the final step, we note that for any state $\rho$ the quantum
conditional entropy is bounded above and below as follows: 
\begin{equation*}
-\log_{2}d_{A}\leq S(A|B)_{\rho}\leq\log_{2}d_{A}.
\end{equation*}
This completes the proof of the Lemma.
\end{proof}

\section{Resource theory framework and our assumptions}

We will use the definition of resource theory framework given in \cite{AnshuHJ17}. Informally, resource theory consists of the set of free states  $\cF$ along with the free operations $\cG$ that map free states on some register to free states to a possibly different register. Let $\cF_{\cE}$ be the set of all operators $0 \preceq O \preceq \id$ such that the map $$\cE(\rho)\defeq O\rho O\otimes \ketbra{0} + \sqrt{\id-O^2}\rho\sqrt{\id-O^2} \otimes \ketbra{1}$$ belongs to $\cG$.

In our version of quantum state redistribution, Alice is allowed to perform arbitrary quantum operations, whereas Bob is only allowed operations from $\cG$. Our protocol will have the property that Bob will perform measurement using operators from $\cF_{\cE}$.

\begin{figure}[ht]
\centering
\begin{tikzpicture}[xscale=0.9,yscale=1.1]

\draw[ultra thick] (-3.5,6) rectangle (9,1);

\draw[thick] (-0.5,5) -- (-2,3) -- (-1.7,2) -- (1,2.5) -- (-0.5,5);
\node at (-0.2,5) {$R$};
\node at (-2.2,3) {$A$};
\node at (-2,2) {$C$};
\node at (1.2,2.5) {$B$};
\node at (-0.5,3) {$\ket{\Psi}_{RABC}$};

\node at (-0.7,5.6) {Referee};
\node at (-2.2,4.6) {Alice};
\node at (0.8,4.6) {Bob};

\draw[->, thick] (2,3.5) -- (3.5,3.5);

\draw[thick] (5.5,5) -- (4,2.5) -- (6.8,2) -- (7,3) -- (5.5,5);
\node at (5.8,5) {$R$};
\node at (3.8,2.5) {$A$};
\node at (7.2,3) {$B$};
\node at (7.1,2) {$C$};
\node at (5.8,3) {$\Phi_{RABC}$};

\node at (5.3,5.6) {Referee};
\node at (3.8,4.6) {Alice};
\node at (6.8,4.6) {Bob};

\end{tikzpicture}
\caption{\small The task of quantum state redistribution, where Alice needs to send her register $C$ to Bob, with the requirement that $\Pur(\Phi_{RABC},\ketbra{\Psi}_{RABC})\leq \eps$, for some error parameter $\eps$. Alice and Bob are allowed to have pre-shared entanglement. Bob is allowed restricted set of operations.}
 \label{fig:stateredist}
\end{figure}
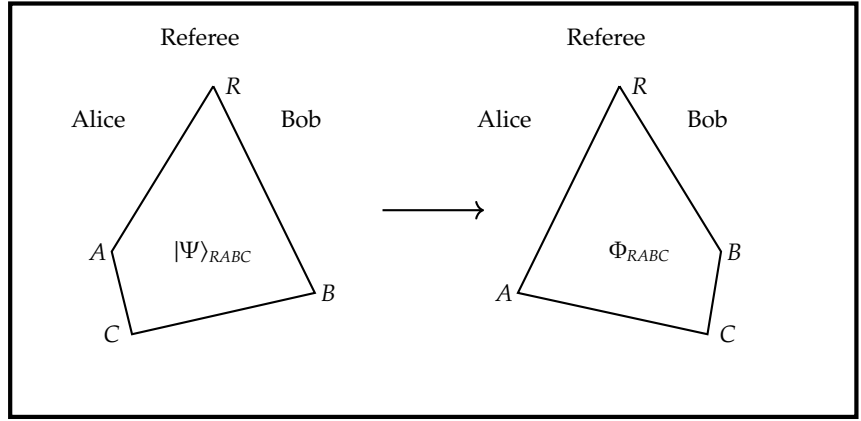

\section{An achievability bound on quantum state redistribution}
\label{sec:quantumstate}

Quantum state redistribution is the following coherent quantum task (see Figure \ref{fig:stateredist}). In this task, Alice, Bob and Referee share a pure state $\ket{\Phi}_{RABC}$, with $AC$ belonging to Alice, $B$ to Bob and $R$ to Referee. Alice needs to transfer the register $C$ to Bob, such that the final state $\Phi'_{RABC}$ satisfies $\Pur(\Phi'_{RABC},\Phi_{RABC})\leq \eps$, for a given $\eps \in (0,1)$ which is the error parameter.  Alice and Bob are allowed to have pre-shared entanglement. Furthermore, Bob can perform operations that belong to $\cG$. The communication from Alice to Bob is in the form of \textit{coherent classical bits} \cite{Harrow04}. We will refer to these as \textit{cobits}. 

\begin{definition}[Quantum state redistribution]
\label{def:qsr}
Fix an $\eps\in (0,1)$.  Consider the state $\ket{\Phi}_{RABC}$ and let Alice ($E_A$) and Bob ($E_B$) preshare an entangled state $\ket{\theta}_{E_AE_B}$ such that $\theta_{E_B}\in \cF$. An $(m ,\epsilon)$-quantum state redistribution protocol consists of 
\begin{itemize}
\item Alice's encoding isometry $\cE: \cL(ACE_A) \rightarrow \cL(AMT_A)$, 
 and
\item Bob's decoding map $\cD: \cL(MBE_B)\rightarrow \cL(BCT_B)$ such that $\cD\in \cG$.
\end{itemize}
Let the final state be
$$\Phi'_{RABCT_AT_B} \defeq \cD\circ\cE(\Phi_{RABC}\otimes \theta_{E_AE_B}).$$ 
There exists a state $\sigma_{T_AT_B}$ such that
$$\Pur(\Phi'_{RABCT_AT_B},\Phi_{RABC}\otimes \sigma_{T_AT_B})\leq \eps.$$ The number of cobits communicated is $m = \log|M|$. 
\end{definition}

Following is the main result of this section.
\begin{theorem}[Achievability bound]
\label{achievability}
Fix $\eps_1,\eps_2, \gamma\in (0,1)$ and let $\sigma_C \in \cF$ be a free quantum state. There exists an $(m, 3\eps_1+\eps_2+\gamma)$- quantum state redistribution protocol for $\ket{\Phi}_{RABC}$ for any $m$ satisfying 
\begin{eqnarray*}
m \geq \inf_{\Phi'\in \ball{\eps_1}{\Phi}}\dmax{\Phi'_{RBC}}{\Phi'_{RB}\otimes \sigma_C} - \dFeps{\Phi_{BC}}{\Phi_{B}\otimes \sigma_C}{\eps^4_2} + 2\log\left(\frac{2}{\varepsilon_1\cdot \gamma^2}\right),
\end{eqnarray*}
\end{theorem}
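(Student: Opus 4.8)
The plan is to realise the protocol as a composition of two standard primitives: a \emph{splitting} step based on the convex split lemma (Fact~\ref{convexcombcor}), whose cost is governed by the smoothed max-relative entropy, and a \emph{merging} step based on position-based decoding with a \emph{free} measurement, whose gain is governed by the restricted hypothesis testing relative entropy. The number of communicated cobits will be the difference of the two (cost minus gain), reproducing the stated threshold, and the three error contributions $3\eps_1$, $\eps_2$, $\gamma$ will arise one from each ingredient.

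\emph{The splitting step.} First I would adjoin $n$ auxiliary copies $C_1,\dots,C_n$ of the free state $\sigma_C$ and apply the convex split lemma with $P=RB$, $Q=C$, $\rho_{PQ}=\Phi_{RBC}$ and $\sigma_Q=\sigma_C$. Setting $\delta=\eps_1^2$ and $n=\lceil 2^{k_1}/\delta\rceil$ with $k_1=\inf_{\Phi'\in\ball{\eps_1}{\Phi}}\dmax{\Phi'_{RBC}}{\Phi'_{RB}\otimes\sigma_C}$, the lemma guarantees that the mixture $\tau_{RBC_1\cdots C_n}=\frac1n\sum_j \Phi_{RBC_j}\otimes\bigotimes_{i\neq j}\sigma_{C_i}$ lies within purified distance $\sqrt{\delta}+2\eps_1=3\eps_1$ of the decoupled state $\tau_{RB}\otimes\sigma_{C_1}\otimes\cdots\otimes\sigma_{C_n}$. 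Passing to purifications (Alice holds $A$, a uniform pointer $J$, and the purifying systems of the dummy copies), Uhlmann's theorem (Fact~\ref{uhlmann}) yields an isometry on Alice's registers — her encoding $\cE$ — producing a global state in which the genuine $C$ sits in the copy indexed by $J$ while $J$ is approximately uniform and independent of $RB$. Since the pointer is decoupled from $RB$, Alice need not transmit it in full.

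\emph{The merging step.} I would then use Bob's side information $B$ to avoid sending the whole pointer. By definition of the restricted hypothesis testing relative entropy there is an operator $\Pi\in\cF_{\cE}$ with $\Tr(\Pi\,\Phi_{BC})\geq 1-\eps_2^4$ and $\Tr\!\big(\Pi(\Phi_B\otimes\sigma_C)\big)=2^{-k_2}$, where $k_2=\dFeps{\Phi_{BC}}{\Phi_B\otimes\sigma_C}{\eps_2^4}$. Crucially $\Pi\in\cF_{\cE}$, so by Neumark's theorem (Fact~\ref{Neumark}) Bob can realise the associated two-outcome test as a free operation in $\cG$. Alice partitions the $n$ copies into blocks of size $n_2=\lfloor 2^{k_2}\gamma^4\rfloor$ and sends Bob only the index of the block containing the real copy, costing $\log(n/n_2)$ cobits; within the announced block Bob runs the sequential position-based decoder, applying $\Pi$ to $(B,C_i)$ for each candidate $i$ until it fires. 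The genuine-copy test misses with probability at most $\eps_2^4$, while the total false-positive mass of the dummy copies is at most $n_2\,2^{-k_2}\le\gamma^4$; feeding these into the gentle measurement lemma (Fact~\ref{gentlelemma}) and the non-commutative union bound (Fact~\ref{noncommutativebound}) bounds the decoding error in purified distance by $(\eps_2^4+\gamma^4)^{1/4}\le\eps_2+\gamma$. A second application of Uhlmann's theorem then reconstructs the target pure state $\Phi_{RABC}$ up to a product ancilla $\sigma_{T_AT_B}$ on Bob's recovered register $C$.

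\emph{Counting and error.} Collecting the counts gives $m=\log n-\log n_2=k_1-k_2+2\log(1/\eps_1)+4\log(1/\gamma)+O(1)=k_1-k_2+2\log\!\big(2/(\eps_1\gamma^2)\big)$, matching the stated threshold. The splitting error $3\eps_1$ and the merging error $\eps_2+\gamma$ combine through the triangle inequality for purified distance (Fact~\ref{trianglepurified}), together with Fact~\ref{closestatesmeasurement}, into the total $3\eps_1+\eps_2+\gamma$. I expect the main obstacle to be the \emph{merging step under the free-operation constraint}: one must verify that every operation Bob performs — the Neumark dilation of $\Pi$, the sequential testing across the copies in a block, and the coherent relabelling implied by cobit communication — remains inside $\cG$, which is precisely why the bound features the restricted quantity $\dFeps{\cdot}{\cdot}{\cdot}$ rather than the ordinary hypothesis testing relative entropy. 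The remaining care is purely bookkeeping: threading the convex-split fidelity through the two Uhlmann steps without overspending the error budget.
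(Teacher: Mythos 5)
Your proposal is correct and follows essentially the same route as the paper's proof: the convex split lemma plus Uhlmann's theorem for Alice's encoding (cost governed by the smoothed max-relative entropy), block-wise position-based sequential decoding with the free measurement $\Pi \in \cF_{\cE}$ (gain governed by the restricted hypothesis testing quantity), cobit transmission of only the block index, and the same error accounting $3\eps_1 + \eps_2 + \gamma$ via Neumark dilation and the non-commutative union bound. The only cosmetic deviations are that the paper needs no second Uhlmann step at the decoding end --- the sequential test followed by a swap of the identified register already reconstructs the state --- and it takes a ceiling rather than a floor for the block size; both are bookkeeping.
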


\begin{proof}
Let $k \defeq \inf_{\Phi'\in\ball{\eps_1}{\Phi}}\dmax{\Phi'_{RBC}}{\Phi'_{RB}\otimes \sigma_C}$, $\delta\defeq\varepsilon_1^2$ and $n \defeq \lceil\frac{2^k }{\delta}\rceil$. Let $$b\defeq \lceil \gamma^4\cdot 2^{\dFeps{\Phi_{BC}}{\Phi_{B}\otimes \sigma_C}{\eps^4_2}}\rceil$$ and $\Pi_{BC}\in \cF_{\cE}$ be the operator achieving the optimum in the definition of $\dFeps{\Phi''_{BC}}{\Phi''_{B}\otimes \sigma_C}{\eps^4_2}$. Consider the state, 
$$\mu_{RB C_1 \ldots C_n} \defeq \frac{1}{n}\sum_{j=1}^n \Phi_{RBC_j}\otimes\sigma_{C_1}\otimes\ldots\otimes\sigma_{C_{j-1}}\otimes\sigma_{C_{j+1}}\otimes\ldots \otimes\sigma_{C_n}.$$
Note that $\Phi_{RB} = \mu_{RB}$. Consider the following purification of $\mu_{RB C_1\ldots C_n}$, 
\begin{align*}
\lefteqn{\ket{\mu}_{RBJL_1\ldots L_n C_1\ldots C_n}}  \\ 
& & = \frac{1}{\sqrt{n}}\sum_{j=1}^n\ket{j}_J\ket{\tilde{\Phi}}_{RBAC_j}\otimes \ket{\sigma}_{L_1C_1}\otimes \ldots\otimes\ket{\sigma}_{L_{j-1}C_{j-1}}\otimes\ket{0}_{L_j}\otimes\ket{\sigma}_{L_{j+1}C_{j+1}}\otimes\ldots\otimes\ket{\sigma}_{L_nC_n} .
\end{align*} 
Here, $\forall j\in[n]: \ket{\sigma}_{L_jC_j}$ is a purification of  $\sigma_{C_j}$ and $\ket{\tilde{\Phi}}_{RBAC_j}$ is a purification of $\Phi_{RBC_j}$. Consider the following protocol $\cP_1$.

\begin{figure}[h]
\centering
\begin{tikzpicture}[xscale=0.9,yscale=1.2]

\node at (2.5,5) {$\frac{1}{b}$};

\draw[fill] (3.5,7) circle [radius=0.05];
\draw[fill] (3.5,8) circle [radius=0.05];
\draw[thick] (3.5,8) to [out=200,in=160] (3.5,7);
\node at (2.4,7.5) {$\Psi_{RBAC_1}$};
\node at (3.5,6.5) {$\otimes$};
\node at (2.9,6) {$\sigma_{C_2}$};
\draw[fill] (3.5,6) circle [radius=0.05];
\node at (3.5,5.5) {$\otimes$};
\draw[fill] (3.5,5) circle [radius=0.05];
\node at (3.5,4.5) {$\otimes$};
\draw[fill] (3.5,4.1) circle [radius=0.02];
\draw[fill] (3.5,3.7) circle [radius=0.02];
\draw[fill] (3.5,3.3) circle [radius=0.02];
\draw[fill] (3.5,2.9) circle [radius=0.02];
\node at (3.5,2.5) {$\otimes$};
\draw[fill] (3.5,2) circle [radius=0.05];
\node at (2.9,2) {$\sigma_{C_b}$};

\node at (4.5,5) {$+$};

\node at (5.5,5) {$\frac{1}{b}$};

\draw[fill] (6.5,7) circle [radius=0.05];
\draw[fill] (6.5,8) circle [radius=0.05];
\node at (7.1,7) {$\sigma_{C_1}$};
\draw[thick] (6.5,8) to [out=200,in=160] (6.5,6);
\node at (5.1,7) {$\Psi_{RBAC_2}$};
\node at (6.5,6.5) {$\otimes$};
\draw[fill] (6.5,6) circle [radius=0.05];
\node at (6.5,5.5) {$\otimes$};
\draw[fill] (6.5,5) circle [radius=0.05];
\node at (6.5,4.5) {$\otimes$};
\draw[fill] (6.5,4.1) circle [radius=0.02];
\draw[fill] (6.5,3.7) circle [radius=0.02];
\draw[fill] (6.5,3.3) circle [radius=0.02];
\draw[fill] (6.5,2.9) circle [radius=0.02];
\node at (6.5,2.5) {$\otimes$};
\draw[fill] (6.5,2) circle [radius=0.05];
\node at (5.9,2) {$\sigma_{C_b}$};

\draw[fill] (7.5,5.0) circle [radius=0.02];
\draw[fill] (7.9,5.0) circle [radius=0.02];
\draw[fill] (8.3,5.0) circle [radius=0.02];
\draw[fill] (8.7,5.0) circle [radius=0.02];

\node at (9.2,5) {$+$};

\node at (9.7,5) {$\frac{1}{b}$};

\draw[fill] (10.5,7) circle [radius=0.05];
\draw[fill] (10.5,8) circle [radius=0.05];
\draw[thick] (10.5,8) to [out=230,in=85] (10.1,7) to [out=265,in=95] (10.1,3) to [out=275,in=130] (10.5,2);
\node at (9.4,7.5) {$\Psi_{RBAC_b}$};
\node at (11.1,7) {$\sigma_{C_1}$};
\node at (10.5,6.5) {$\otimes$};
\node at (11.1,6) {$\sigma_{C_2}$};
\draw[fill] (10.5,6) circle [radius=0.05];
\node at (10.5,5.5) {$\otimes$};
\draw[fill] (10.5,5) circle [radius=0.05];
\node at (10.5,4.5) {$\otimes$};
\draw[fill] (10.5,4.1) circle [radius=0.02];
\draw[fill] (10.5,3.7) circle [radius=0.02];
\draw[fill] (10.5,3.3) circle [radius=0.02];
\draw[fill] (10.5,2.9) circle [radius=0.02];
\node at (10.5,2.5) {$\otimes$};
\draw[fill] (10.5,2) circle [radius=0.05];

\end{tikzpicture}
\caption{\small Bob performs restricted hypothesis testing on the state $\mu^{(2)}_{RABC_1C_2\ldots C_b}$ (depicted above). This is the state he obtains after receiving Alice's message.}
 \label{fig:convexcomb}
\end{figure}

\begin{enumerate}
\item Alice, Bob and Referee start by sharing the state $\ket{\mu}_{RBJL_1\ldots L_n C_1\ldots C_n}$ between themselves where Alice holds registers $JL_1\ldots L_n$, Referee holds the register $R$ and Bob holds the registers  $BC_1C_2\ldots C_n$.
\item Alice measures the register $J$ coherently and obtains the outcome $j \in [n]$. She sends the integer $\lfloor (j-1)/b \rfloor$ to Bob using $\lceil\log(n/b)\rceil$ cobits. 

\item Bob swaps registers $C_{b\cdot\lfloor (j-1)/b \rfloor+1},C_{b\cdot\lfloor (j-1)/b \rfloor+2},\ldots C_{b\cdot\lfloor (j-1)/b \rfloor+b}$ with the set of registers $C_1,C_2,\ldots C_b$ in that order. 

\begin{itemize}
\item At this step of the protocol, the joint state in the registers $RBAC_1C_2\ldots C_b$ is (see Figure \ref{fig:convexcomb})
$$\mu^{(2)}_{RBAC_1C_2\ldots C_b} = \frac{1}{b}\sum_{j=1}^b \ketbra{\Phi}_{RBAC_j}\otimes \sigma_{C_1}\otimes\ldots \sigma_{C_{j-1}}\otimes \sigma_{C_{j+1}}\otimes\ldots \sigma_{C_b}.$$
\end{itemize}

\item Bob performs the following quantum operation (which we label as $\cP_B$). 
\begin{enumerate}
\item Initialize $k:=1$.
\item While $k \leq b$, do: 
\begin{enumerate}
\item Measure registers $BC_k$ with the measurement $\{\Pi_{BC_k}, \id - \Pi_{BC_k}\}$. 
\item If the outcome corresponds to $\Pi_{BC_k}$, Swap $C_k, C_1$. Else set $k:= k+1$ and go to Step (i).
\end{enumerate}
\end{enumerate}
\item Final state is obtained in the registers $RABC_1$. We call it $\Phi^1_{RBAC_1}$.
\end{enumerate}   

We have the following claim. 
\begin{claim}
\label{claimprotp1}
 It holds that $\Pur(\Phi^1_{RBAC_1}, \Phi_{RBAC_1}) \leq \eps_2 + \gamma$.
\end{claim}
\begin{proof}
For brevity, set $\sigma^{(-j)}\defeq \sigma_{C_1}\otimes\ldots \sigma_{C_{j-1}}\otimes \sigma_{C_{j+1}}\otimes\ldots \sigma_{C_b}$.  Consider the projective measurement $\{\hat{\Pi}_{BCP}, \id- \hat{\Pi}_{BCP}\}$ obtained by applying Neumark's Theorem (Fact \ref{Neumark}) to the measurement $\{\Pi_{BC}, \id - \Pi_{BC}\}$. Introduce registers $P_1, \ldots P_b$ in the states $\ketbra{0}_{P_1}\otimes \ldots \ketbra{0}_{P_b}$ and let $\bar{P}:= P_1\ldots P_b$. Let $\{\Pi_{BC_kP_k}, \id - \Pi_{BC_kP_k}\}$ be the projective measurement that acts trivially on registers $P_1, \ldots P_{k-1}, P_{k+1}, \ldots P_k$.  Define $$M_k \defeq \hat{\Pi}_{BC_kP_k}(\id - \hat{\Pi}_{BC_{k-1}P_{k-1}})\ldots (\id - \hat{\Pi}_{BC_1P_1})$$ for $k\leq b$ and 
$$M_{b+1} \defeq (\id - \hat{\Pi}_{BC_{b}P_b})\ldots (\id - \hat{\Pi}_{BC_1P_1}).$$ Observe that $\{M_k\}_{k=1}^b$ form a complete set of POVM elements. From Facts \ref{noncommutativebound} and \ref{Neumark}, we have that 
\begin{eqnarray*}
&&\Pur\left(\ketbra{\Phi}_{RBAC_k}\otimes \sigma^{(-k)}\otimes \ketbra{0}^{\otimes b}_P, M_k\left(\ketbra{\Phi}_{RBAC_k}\otimes \sigma^{(-k)}\otimes \ketbra{0}^{\otimes b}_P\right)M^{\dagger}_k\right)\\ &&\leq \left((k-1)\cdot\Tr(\Pi_{BC}\Phi_B\otimes \sigma_C) + 1-\Tr(\Pi_{BC}\Phi_{BC})\right)^{1/4}\\
&& \leq \left(b\cdot 2^{-\dFeps{\Phi_{BC}}{\Phi_{B}\otimes \sigma_C}{\eps^4_2}} + \eps_2^4\right)^{1/4}\leq \eps_2 + \gamma.
\end{eqnarray*}
This implies by triangle inequality for purified distance (Fact \ref{fact:trianglepurified}) that 
\begin{eqnarray*}
&&\Pur\bigg(\frac{1}{b}\sum_k\ketbra{\Phi}_{RBAC_k}\otimes \sigma^{(-k)}\otimes \ketbra{k}\otimes \ketbra{0}^{\otimes b}_P, \\ && \frac{1}{b}\sum_k M_k\left(\ketbra{\Phi}_{RBAC_k}\otimes \sigma^{(-k)}\otimes \ketbra{0}^{\otimes b}_P\right)M^{\dagger}_k\otimes \ketbra{k}\bigg)\leq \eps_2+\gamma.
\end{eqnarray*}
From Fact \ref{Neumark}, the quantum state obtained after Bob's operation is equal to
$$\frac{1}{b}\sum_k \Tr_P\left(M_k\left(\ketbra{\Phi}_{RBAC_k}\otimes \sigma^{(-k)}\otimes \ketbra{0}^{\otimes b}_P\right)M^{\dagger}_k\right)\otimes \ketbra{k}.$$
Thus, using Fact \ref{fact:monotonequantumoperation}, Bob is able to obtain the quantum state
$$\frac{1}{b}\sum_k\ketbra{\Phi}_{RBAC_k}\otimes \sigma^{(-k)}\otimes \ketbra{k}$$
with high fidelity. Since Bob swaps register $C_k$ with $C_1$ upon obtaining the outcome $k$, which is a unitary operation and hence does not change the fidelity, the desired claim follows. 
\end{proof}

This shows that protocol $\cP_1$ succeeds with fidelity squared as given in the claim. Now we proceed to construct the actual protocol.

Consider the state,
$$\xi_{RB C_1\ldots C_n}  \defeq \Phi_{RB}\otimes \sigma_{C_1}\ldots \otimes \sigma_{C_n}. $$ 
Let $\ket{\theta}_{L_1\ldots L_n C_1\ldots C_n} = \ket{\sigma}_{L_1C_1}\otimes \ket{\sigma}_{L_2C_2}\ldots \ket{\sigma}_{L_nC_n}$ be a purification of  $\sigma_{C_1}\otimes \ldots \sigma_{C_n}$. Let 
 $$\ket{\xi}_{RABC L_1\ldots L_nC_1\ldots C_n} \defeq  \ket{\Phi}_{RABC} \otimes \ket{\theta}_{L_1\ldots L_n C_1\ldots C_n} .$$ 
Using Claim ~\ref{convexcombcor} (variant of convex split lemma) and choice of $n$ we have,
$$\F^2(\xi_{RBC_1\ldots C_n}, \mu_{RB C_1\ldots C_n})      \\ 
 \geq 1 - 9\varepsilon^2_1 .$$ 
Let $\ket{\xi'}_{RBJ L_1\ldots L_nC_1\ldots C_n}$ be a purification of $\xi_{RB C_1\ldots C_n}$ (guaranteed by Uhlmann's Theorem, Fact~\ref{uhlmann}) such that,
$$\F^2(\ketbra{\xi'}_{RBJ L_1\ldots L_nC_1\ldots C_n},\ketbra{\mu}_{RBJ L_1\ldots L_nC_1\ldots C_n}) = \F^2(\xi_{RBC_1\ldots C_n}, \mu_{RB C_1\ldots C_n})  \geq 1- 9\varepsilon^2_1.$$ 
Let $V': ACL_1\ldots L_n \rightarrow J L_1\ldots L_n$ be an isometry (guaranteed by Uhlmann's Theorem, Fact~\ref{uhlmann})  such that,
$$V'\ket{\xi}_{RABC L_1\ldots L_nC_1\ldots C_n} = \ket{\xi'}_{RBJ L_1\ldots L_nC_1\ldots C_n} .$$ 
Consider the following protocol $\cP$.
\begin{enumerate}
\item Alice, Bob and Referee start by sharing the state $\ket{\xi}_{RABC L_1\ldots L_nC_1\ldots C_n} $ between themselves where Alice holds registers $ACL_1\ldots L_n$, Referee holds the register $R$ and Bob holds the registers $BC_1\ldots C_n$. Note that  $\ket{\Psi}_{RABC}$ is provided as input to the protocol and $\ket{\theta}_{L_1\ldots L_n C_1\ldots C_n}$ is additional shared entanglement between Alice and Bob. 
\item Alice applies isometry $V'$ to obtain state $\ket{\xi'}_{RBJL_1\ldots L_nC_1\ldots C_n}$, where Alice holds registers $JL_1\ldots L_n$, Referee holds the register $R$ and Bob holds  the registers  $BC_1\ldots C_n$.
\item Alice and Bob simulate protocol $\cP_1$ from Step 2. onwards.
\end{enumerate}

\vspace{0.1in}

\noindent {\bf Error analysis:} Let $\Phi'_{RABC}$ be the output of protocol $\cP$. Since quantum maps (the entire protocol $\cP_1$ can be viewed as a quantum map from input to output) do not decrease fidelity (monotonicity of fidelity under quantum operation, Fact~\ref{fact:monotonequantumoperation}), we have,
\begin{equation*}
\F^2(\Phi^1_{RABC}, \Phi'_{RABC}) \geq  \F^2(\ketbra{\xi'}_{RBJ L_1\ldots L_nC_1\ldots C_n},\ketbra{\mu}_{RBJ L_1\ldots L_nC_1\ldots C_n}) \geq 1- 9\varepsilon^2_1.
\end{equation*}
This implies by Claim \ref{claimprotp1} and triangle inequality for purified distance (Fact \ref{fact:trianglepurified}) that $\F^2(\Phi'_{RABC},\Phi_{RABC}) \geq 1-(3\eps_1+\eps_2 +\gamma)^2$. That is, $\Phi'_{RABC} \in \ball{3\eps_1+\eps_2+\gamma}{\Phi_{RABC}}$.

\vspace{0.1in}

\noindent {\bf Communication cost:} The number of cobits communicated by Alice to Bob in $\cP$ is equal to the number of cobits communicated in $\cP_1$ and is upper bounded by:
$$\lceil\log(n/b)\rceil \leq \inf_{\Phi'\in \ball{\eps_1}{\Phi}}\dmax{\Phi'_{RBC}}{\Phi'_{RB}\otimes \sigma_C} - \dFeps{\Phi_{BC}}{\Phi_{B}\otimes \sigma_C}{\eps^4_2} + 2\log\left(\frac{2}{\varepsilon_1\cdot \gamma^2}\right).$$ 

This completes the proof.  
\end{proof}

\section{Achievability and converse for a class of resource theories}

We simplify the expression for $\dFeps{\Phi_{BC}}{\Phi_{B}\otimes \sigma_C}{\eps}$ when $\cF$ corresponds to a certain class of resource theories that contain a `collapsing map' defined below. 
\begin{definition}
\label{def:projrestheo}
\textbf{Collapsing map:} Fix a register $R$. A trace preserving map $\Delta_R: \cL(R)\rightarrow \cL(R)$ is said to be collapsing if for all operator $O \in \cL(R)$ satisfying $ 0 \preceq O \preceq \id_R$, $\Delta_R^{\dagger}(O) \in \cF_{\cE}$.  Furthermore, for registers $R,S$, we have $\Delta_{RS} = \Delta_R \otimes \Delta_S$. The map $\Delta_R$ is said to be surjective with respect to $\cF_{\cE}$ if for every $\Pi\in \cF_{\cE}$, there exists a $O\in \cL(R)$ satisfying $0 \preceq O \preceq \id_R$, such that $\Delta_R^{\dagger}(O) = \Pi$.
\end{definition}
In the resource theory of coherence, the dephasing map (in a given basis) is a natural example of a collapsing map. We prove the following lemma.
\begin{lemma}
\label{dhcoherence}
Let $\eps\in (0,1)$. Let $\cF, \cF_{\cE}$ correspond to a resource theory that contains a collapsing map (Definition \ref{def:projrestheo}). It holds that 
$$\dFeps{\Phi_{BC}}{\Phi_{B}\otimes \sigma_C}{\eps} \geq \dheps{\Delta_{BC}(\Phi_{BC})}{\Delta_B(\Phi_{B})\otimes \Delta_C(\sigma_C)}{\eps}.$$ Furthermore, if $\Delta_{BC}$ is surjective with respect to $\cF_{\cE}$, then $$\dFeps{\Phi_{BC}}{\Phi_{B}\otimes \sigma_C}{\eps} = \dheps{\Delta_{BC}(\Phi_{BC})}{\Delta_B(\Phi_{B})\otimes \Delta_C(\sigma_C)}{\eps}.$$
\end{lemma}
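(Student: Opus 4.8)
The plan is to exploit a duality between the two feasible sets furnished by the adjoint $\Delta_{BC}^{\dagger}$ of the collapsing map. Two structural facts drive the argument. First, since $\Delta_{BC}=\Delta_B\otimes\Delta_C$ is a product map, the second argument factorizes, $\Delta_{BC}(\Phi_B\otimes\sigma_C)=\Delta_B(\Phi_B)\otimes\Delta_C(\sigma_C)$, so the right-hand side is exactly $\dheps{\Delta_{BC}(\Phi_{BC})}{\Delta_{BC}(\Phi_B\otimes\sigma_C)}{\eps}$. Second, for any operator $X$ on $BC$ and any $0\preceq O\preceq\id$ the adjoint identity $\Tr(O\,\Delta_{BC}(X))=\Tr(\Delta_{BC}^{\dagger}(O)\,X)$ holds by definition of the adjoint.

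For the inequality I would take an arbitrary $O$ feasible for $\dheps{\Delta_{BC}(\Phi_{BC})}{\Delta_{BC}(\Phi_B\otimes\sigma_C)}{\eps}$, i.e.\ $0\preceq O\preceq\id$ with $\Tr(O\,\Delta_{BC}(\Phi_{BC}))\geq 1-\eps$, and set $\Pi\defeq\Delta_{BC}^{\dagger}(O)$. The collapsing property (Definition~\ref{def:projrestheo}) gives $\Pi\in\cF_{\cE}$, and since membership in $\cF_{\cE}$ already forces $0\preceq\Pi\preceq\id$, no separate positivity argument for $\Delta_{BC}^{\dagger}$ is needed. The adjoint identity rewrites the success constraint as $\Tr(\Pi\,\Phi_{BC})=\Tr(O\,\Delta_{BC}(\Phi_{BC}))\geq 1-\eps$, so $\Pi$ is feasible for the restricted optimization, and the same identity equates the objectives, $\Tr(\Pi\,(\Phi_B\otimes\sigma_C))=\Tr(O\,\Delta_{BC}(\Phi_B\otimes\sigma_C))$. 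Taking the supremum over all such $O$ yields $\dFeps{\Phi_{BC}}{\Phi_B\otimes\sigma_C}{\eps}\geq\dheps{\Delta_{BC}(\Phi_{BC})}{\Delta_B(\Phi_B)\otimes\Delta_C(\sigma_C)}{\eps}$.

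For the equality under surjectivity I would run the correspondence in reverse. Given any $\Pi\in\cF_{\cE}$ feasible for the restricted optimization, surjectivity supplies an $O$ with $0\preceq O\preceq\id$ and $\Delta_{BC}^{\dagger}(O)=\Pi$; the adjoint identity again transports both the success constraint and the objective, so $O$ is feasible for the unrestricted optimization with the identical value. This gives the reverse inequality, and combined with the first part yields the claimed equality.

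There is no deep obstacle here: the construction is a feasibility- and objective-preserving map between the two constraint sets (a one-sided embedding in general, a bijection under surjectivity). The only points requiring care are bookkeeping: confirming that $\cF_{\cE}$-membership already encodes the bounds $0\preceq\Pi\preceq\id$; checking that the normalization clause in the definition of $\dFeps{\cdot}{\cdot}{\eps}$ is inactive because $\Tr(\Phi_{BC})=1\geq 1-\eps$; and keeping the product structure $\Delta_{BC}=\Delta_B\otimes\Delta_C$ aligned so that the factorized form on the right-hand side matches throughout.
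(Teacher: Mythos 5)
Your proposal is correct and follows essentially the same route as the paper's proof: both directions hinge on the adjoint identity $\Tr(O\,\Delta_{BC}(X))=\Tr(\Delta_{BC}^{\dagger}(O)\,X)$, using the collapsing property to push a feasible $O$ into $\cF_{\cE}$ for the first inequality and surjectivity to pull a feasible $\Pi\in\cF_{\cE}$ back for the reverse. The only cosmetic difference is that you quantify over arbitrary feasible operators and take suprema, whereas the paper works with the optimizing operators directly; the argument is otherwise identical.
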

\begin{proof}
Let $\Pi_{BC}$ be the operator achieving the optimum in the definition of $\dheps{\Delta_{BC}(\Phi_{BC})}{\Delta_B(\Phi_{B})\otimes \Delta_C(\sigma_C)}{\eps}$. Consider $$\Tr(\Pi_{BC}\Delta_{BC}(\rho_{BC})) = \Tr(\Delta^{\dagger}_{BC}(\Pi_{BC})\rho_{BC}) \geq 1-\eps.$$ Furthermore, using the relation $\Delta_{BC} = \Delta_B \otimes \Delta_C$ as given in Definition \ref{def:projrestheo}, 
\begin{eqnarray*}
&& 2^{-\dheps{\Delta_{BC}(\Phi_{BC})}{\Delta_B(\Phi_{B})\otimes \Delta_C(\sigma_C)}{\eps}}=\Tr(\Pi_{BC}\Delta_{B}(\rho_{B})\otimes \Delta_{C}(\rho_{C})) \\ &&= \Tr(\Delta^{\dagger}_{BC}(\Pi_{BC})\rho_{B}\otimes \rho_C) \geq 2^{-\dFeps{\Phi_{BC}}{\Phi_{B}\otimes \sigma_C}{\eps}},
\end{eqnarray*}
where last inequality follows since $\Delta^{\dagger}_{BC}(\Pi_{BC}) \in \cF_{\cE}$ by assumption. This proves the first part.

The second part follows similarly. Let $\Pi_{BC}$ be the operator that achieves the optimum in the definition of $\dFeps{\Phi_{BC}}{\Phi_{B}\otimes \sigma_C}{\eps}$. Since $\Pi_{BC} \in \cF_{\cE}$, by the assumption that $\Delta_{BC}$ is surjective, there exists an operator $O_{BC}$ such that $\Delta^{\dagger}(O_{BC})= \Pi_{BC}$. Thus,
$$ \Tr(O_{BC}\Delta_{BC}(\Phi_{BC}))= \Tr(\Delta^{\dagger}(O_{BC})\Phi_{BC}) = \Tr(\Pi_{BC}\Phi_{BC}) \geq 1-\eps,$$ and using $\Delta_{BC}= \Delta_B\otimes \Delta_C$,
\begin{eqnarray*}
&&  2^{-\dheps{\Delta_{BC}(\Phi_{BC})}{\Delta_B(\Phi_{B})\otimes \Delta_C(\sigma_C)}{\eps}} \leq \Tr(O_{BC}\Delta_B(\Phi_{B})\otimes \Delta_C(\sigma_{C})) \\ && = \Tr(\Delta^{\dagger}_{BC}(O_{BC})\Phi_B\otimes \sigma_C) = \Tr(\Pi_{BC}\Phi_B\otimes \sigma_C) = 2^{-\dFeps{\Phi_{BC}}{\Phi_{B}\otimes \sigma_C}{\eps}}.
\end{eqnarray*} 
This implies that 
$$\dFeps{\Phi_{BC}}{\Phi_{B}\otimes \sigma_C}{\eps} \leq \dheps{\Delta_{BC}(\Phi_{BC})}{\Delta_B(\Phi_{B})\otimes \Delta_C(\sigma_C)}{\eps}.$$
This completes the proof.
\end{proof}

Now, we can combine Theorem \ref{achievability} (setting $\eps_3=0, \gamma = \eps_2$) and Lemma \ref{dhcoherence} to conclude the following corollary. 

\begin{corollary}[Achievability bound for resource theories with a collapsing map]
\label{coherenceachievability}
Fix $\eps_1,\eps_2\in (0,1)$ and let $\sigma_C\in \cF$ be an arbitrary quantum state. There exists an $(m, 3\eps_1+2\eps_2)$ quantum state redistribution protocol for the state $\ket{\Phi}_{RACB}$ for any $m$ that satisfies
\begin{eqnarray*}
m \geq \inf_{\Phi'\in \ball{\eps_1}{\Phi}}\dmax{\Phi'_{RBC}}{\Phi'_{RB}\otimes \sigma_C} - \dheps{\Delta_{BC}(\Phi_{BC})}{\Delta_B(\Phi_{B})\otimes \Delta_C(\sigma_C)}{\eps_2^4} + 2\log\left(\frac{2}{\varepsilon_1\cdot \eps^2_2}\right),
\end{eqnarray*}
\end{corollary}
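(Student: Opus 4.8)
The plan is to obtain this bound as a direct consequence of Theorem~\ref{achievability} combined with the inequality supplied by Lemma~\ref{dhcoherence}; no new protocol construction is needed. First I would specialize Theorem~\ref{achievability} by setting $\gamma = \eps_2$. With this choice the error parameter of the guaranteed protocol becomes $3\eps_1 + \eps_2 + \gamma = 3\eps_1 + 2\eps_2$, matching the error claimed in the corollary, while the additive logarithmic term $2\log\!\left(2/(\eps_1 \gamma^2)\right)$ becomes $2\log\!\left(2/(\eps_1 \eps_2^2)\right)$, again matching the stated bound.

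The only remaining task is to replace the restricted hypothesis-testing term $\dFeps{\Phi_{BC}}{\Phi_B \otimes \sigma_C}{\eps_2^4}$ appearing in Theorem~\ref{achievability} by the ordinary hypothesis-testing term featured in the corollary. Since the resource theory is assumed to contain a collapsing map (Definition~\ref{def:projrestheo}), Lemma~\ref{dhcoherence} provides
\[
\dFeps{\Phi_{BC}}{\Phi_B \otimes \sigma_C}{\eps_2^4} \geq \dheps{\Delta_{BC}(\Phi_{BC})}{\Delta_B(\Phi_B) \otimes \Delta_C(\sigma_C)}{\eps_2^4}.
\]

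The one detail worth checking explicitly is the direction of this inequality relative to how the term enters the communication bound. Because the restricted quantity is \emph{subtracted}, the lower bound from Lemma~\ref{dhcoherence} becomes an upper bound after negation, so replacing it by the smaller ordinary hypothesis-testing term can only \emph{increase} the right-hand side of the communication requirement. Consequently the bound stated in the corollary dominates the bound in Theorem~\ref{achievability}, and any $m$ satisfying the corollary's inequality therefore also satisfies the hypothesis of Theorem~\ref{achievability}. The $(m, 3\eps_1 + 2\eps_2)$-protocol produced there then establishes the corollary. I expect no substantive obstacle: the argument is a specialization of parameters followed by a single monotone substitution, and the only subtlety is tracking the sign of the subtracted term.
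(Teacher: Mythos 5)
Your proposal is correct and is exactly the paper's argument: the paper's proof is the one-line remark that the corollary follows by combining Theorem~\ref{achievability} (with $\gamma=\eps_2$) and Lemma~\ref{dhcoherence}, which is precisely your specialization of parameters plus the sign-checked substitution of $\dFeps{\Phi_{BC}}{\Phi_B\otimes\sigma_C}{\eps_2^4}$ by the smaller quantity $\dheps{\Delta_{BC}(\Phi_{BC})}{\Delta_B(\Phi_B)\otimes\Delta_C(\sigma_C)}{\eps_2^4}$.
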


\subsection{An achievability result in the asymptotic and i.i.d. setting}

We show an achievability result assuming the following properties in the resource theory.
\begin{itemize}
\item \textbf{P. 1}: Fix a register $C$. For every free state $\sigma_C\in \cF$, it holds that $\Delta_C(\sigma_C) = \sigma_C$.
\item \textbf{P. 2}: Fix a register $C$. For quantum states $\rho_C, \rho'_C$, it holds that $$\Tr(\rho_C\log \Delta_C(\rho'_C)) = \Tr(\Delta_C(\rho_C)\log \Delta_C(\rho'_C)).$$
\end{itemize}
The following theorem holds.
\begin{theorem}
\label{theo:tradeoff}
Fix $\eps, \delta \in (0,1)$ and $\sigma_C \in \cF$. There exists a large enough $n$ such that there exists a $(n(Q_C+ \delta), \eps)$ quantum state redistribution protocol for the quantum state $\ketbra{\Phi}^{\otimes n}_{RABC}$, if 
\begin{eqnarray*}
Q_C &\geq& \relent{\Phi_{RBC}}{\Phi_{RB}\otimes \sigma_C} - \relent{\Delta_{BC}(\Phi_{BC})}{\Delta_B(\Phi_{B})\otimes \Delta_C(\sigma_C)} \\
& = & \condmutinf{R}{C}{B}_{\Phi} + \relent{\Phi_{BC}}{\Delta_{BC}(\Phi_{BC})} - \relent{\Phi_B}{\Delta_{B}(\Phi_{B})} .
\end{eqnarray*}
\end{theorem}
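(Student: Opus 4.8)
The plan is to lift the one-shot achievability bound of Corollary~\ref{coherenceachievability} to the i.i.d. regime by applying it to $\ketbra{\Phi}^{\otimes n}_{RABC}$ and evaluating the two entropic quantities in the message length to leading order in $n$ via the second-order expansions of Fact~\ref{dmaxequi}. First I would record the standard resource-theoretic assumptions that tensor powers of free states are free, so $\sigma_C^{\otimes n}\in\cF$ is admissible on $C^{n}$, and that the collapsing map factorizes, $\Delta_{B^{n}C^{n}}=(\Delta_{BC})^{\otimes n}$ (by induction from $\Delta_{RS}=\Delta_R\otimes\Delta_S$ in Definition~\ref{def:projrestheo}). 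Corollary~\ref{coherenceachievability} then produces an $(m,3\eps_1+2\eps_2)$ protocol for every
$$m\geq \inf_{\Phi'\in\ball{\eps_1}{\Phi^{\otimes n}}}\dmax{\Phi'_{R^{n}B^{n}C^{n}}}{\Phi'_{R^{n}B^{n}}\otimes\sigma_C^{\otimes n}}-\dheps{\Delta_{B^nC^n}(\Phi_{BC}^{\otimes n})}{\Delta_{B^n}(\Phi_{B}^{\otimes n})\otimes\Delta_{C^n}(\sigma_C^{\otimes n})}{\eps_2^4}+2\log\!\left(\frac{2}{\eps_1\eps_2^2}\right),$$
and the goal is to show the right-hand side is at most $n(Q_C+\delta)$ for large $n$; setting $\eps_1=\eps_2=\eps/5$ then gives error $3\eps_1+2\eps_2=\eps$.

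The subtracted hypothesis-testing term is the easy half. Using the factorization of $\Delta$ and Property~\textbf{P. 1} (so that $\Delta_{C^n}(\sigma_C^{\otimes n})=(\Delta_C(\sigma_C))^{\otimes n}=\sigma_C^{\otimes n}$), all three states are exact tensor powers, and the $D_{\mathrm H}$ half of Fact~\ref{dmaxequi} gives $\dheps{(\Delta_{BC}(\Phi_{BC}))^{\otimes n}}{(\Delta_B(\Phi_B)\otimes\Delta_C(\sigma_C))^{\otimes n}}{\eps_2^4}=n\,\relent{\Delta_{BC}(\Phi_{BC})}{\Delta_B(\Phi_B)\otimes\Delta_C(\sigma_C)}+O(\sqrt n)$. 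Since this term enters with a minus sign, its expansion supplies exactly the lower bound I need, the $O(\sqrt n)$ correction being harmless regardless of its sign.

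The main obstacle is the smoothed max-relative-entropy term, because its second argument $\Phi'_{R^{n}B^{n}}\otimes\sigma_C^{\otimes n}$ varies with the smoothing state $\Phi'$, whereas Fact~\ref{dmaxequi} applies only to a fixed tensor-power reference. The core estimate for the fixed reference is, by Fact~\ref{dmaxequi}, $\dmaxeps{\Phi_{RBC}^{\otimes n}}{(\Phi_{RB}\otimes\sigma_C)^{\otimes n}}{\eps_1}=n\,\relent{\Phi_{RBC}}{\Phi_{RB}\otimes\sigma_C}+O(\sqrt n)$, witnessed by a smoothing state $\Phi'=\Pi\,\Phi_{RBC}^{\otimes n}\,\Pi$ (normalized), with $\Pi$ the spectral projector of $\log\frac{d\Phi_{RBC}^{\otimes n}}{d(\Phi_{RB}\otimes\sigma_C)^{\otimes n}}$ onto eigenvalues below $\lambda=n\,\relent{\Phi_{RBC}}{\Phi_{RB}\otimes\sigma_C}+O(\sqrt n)$; Gentle Measurement (Fact~\ref{gentlelemma}) places $\Phi'$ in $\ball{\eps_1}{\Phi^{\otimes n}}$. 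The genuine technical point is replacing the fixed $\Phi_{RB}^{\otimes n}$ by the actual marginal $\Phi'_{RB}$ in the second slot: I would handle this with a robustness estimate giving $\dmax{\Phi'_{RBC}}{\Phi'_{RB}\otimes\sigma_C^{\otimes n}}\leq\dmax{\Phi'_{RBC}}{\Phi_{RB}^{\otimes n}\otimes\sigma_C^{\otimes n}}+O(\sqrt n)$, the correction staying lower order because $\Phi'_{RB}$ is $\eps_1$-close to the tensor power $\Phi_{RB}^{\otimes n}$; equivalently one may invoke the dedicated asymptotic-equipartition property for the smooth conditional max-information $\inf_{\Phi'}\dmax{\Phi'_{RBC}}{\Phi'_{RB}\otimes\sigma_C}$, which single-letterizes to $\relent{\Phi_{RBC}}{\Phi_{RB}\otimes\sigma_C}$. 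Either route bounds the term by $n\,\relent{\Phi_{RBC}}{\Phi_{RB}\otimes\sigma_C}+O(\sqrt n)$.

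Combining the two estimates in the message-length bound yields $m\leq n\big[\relent{\Phi_{RBC}}{\Phi_{RB}\otimes\sigma_C}-\relent{\Delta_{BC}(\Phi_{BC})}{\Delta_B(\Phi_B)\otimes\Delta_C(\sigma_C)}\big]+O(\sqrt n)=nQ_C+O(\sqrt n)$, so dividing by $n$ and taking $n$ large gives $m/n\leq Q_C+\delta$, as required. Finally I would verify that this rate equals the conditional-mutual-information form. Expanding each relative entropy into von Neumann entropies, the term $\Tr(\Phi_C\log\sigma_C)$ appears in both summands and cancels precisely because Property~\textbf{P. 2} (with \textbf{P. 1}, giving $\sigma_C=\Delta_C(\sigma_C)$) forces $\Tr(\Delta_C(\Phi_C)\log\sigma_C)=\Tr(\Phi_C\log\sigma_C)$; likewise \textbf{P. 2} gives $\relent{\Phi_{BC}}{\Delta_{BC}(\Phi_{BC})}=S(\Delta_{BC}(\Phi_{BC}))-S(\Phi_{BC})$ and the analogous identity on $B$. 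Collecting the surviving entropies reproduces $\condmutinf{R}{C}{B}_{\Phi}+\relent{\Phi_{BC}}{\Delta_{BC}(\Phi_{BC})}-\relent{\Phi_B}{\Delta_B(\Phi_B)}$, confirming the equivalence and showing in particular that the achievable rate is independent of the chosen free state $\sigma_C$.
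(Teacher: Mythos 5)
Your overall skeleton is the paper's: apply Corollary~\ref{coherenceachievability} to $\Phi^{\otimes n}$ with $\eps_1=\eps_2=\eps/5$, expand the hypothesis-testing term via Fact~\ref{dmaxequi}, single-letterize the smoothed max-term, and verify the algebraic identity via properties P.~1 and P.~2 (your entropy-expansion argument for the identity is the same calculation the paper performs by adding and subtracting $\relent{\Phi_{BC}}{\Phi_B\otimes\sigma_C}$, and your observation that the rate is independent of $\sigma_C$ is correct). However, your \emph{primary} route through the step you yourself identify as the genuine technical point does not work. The claimed robustness estimate $\dmax{\Phi'_{RBC}}{\Phi'_{RB}\otimes\sigma_C^{\otimes n}}\leq\dmax{\Phi'_{RBC}}{\Phi_{RB}^{\otimes n}\otimes\sigma_C^{\otimes n}}+O(\sqrt n)$ does not follow from $\Pur(\Phi'_{RB},\Phi_{RB}^{\otimes n})\leq\eps_1$: to convert a bound relative to $\Phi^{\otimes n}_{RB}\otimes\sigma^{\otimes n}_C$ into one relative to $\Phi'_{RB}\otimes\sigma^{\otimes n}_C$ you need the operator inequality $\Phi^{\otimes n}_{RB}\preceq 2^{O(\sqrt n)}\Phi'_{RB}$, i.e.\ $\dmax{\Phi^{\otimes n}_{RB}}{\Phi'_{RB}}\leq O(\sqrt n)$, and max-relative entropy is not continuous in purified distance. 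For your explicit choice $\Phi'\propto\Pi\,\Phi_{RBC}^{\otimes n}\,\Pi$ the projection can suppress or annihilate eigendirections of the marginal, so $\dmax{\Phi^{\otimes n}_{RB}}{\Phi'_{RB}}$ can be arbitrarily large or infinite; closeness of states gives no control whatsoever over this quantity.

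This difficulty is exactly why the paper does not argue by perturbing the reference. Instead it invokes Lemma~3 of \cite{AnshuJW17}, a dedicated result which produces a smoothing state whose \emph{own} marginal sits in the reference, yielding
\begin{equation*}
\inf_{\Phi'\in\ball{\eps/5}{\Phi^{\otimes n}}}\dmax{\Phi'_{R^nB^nC^n}}{\Phi'_{R^nB^n}\otimes\sigma_C^{\otimes n}}\;\leq\;\dmaxeps{\Phi^{\otimes n}_{RBC}}{\Phi^{\otimes n}_{RB}\otimes\sigma^{\otimes n}_C}{\eps/10}+\log\frac{300}{\eps^2},
\end{equation*}
after which Fact~\ref{dmaxequi} applies to the fixed tensor-power reference. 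Your fallback suggestion --- invoking an asymptotic equipartition property for the smooth conditional max-information --- is correct in substance and is effectively what the cited lemma plus Fact~\ref{dmaxequi} accomplish, but as written it is an appeal to an unnamed external theorem rather than an argument, so the burden of the key step is not discharged. Everything else in your proposal (the error bookkeeping $3\eps_1+2\eps_2=\eps$, the factorization $\Delta_{B^nC^n}=\Delta_{BC}^{\otimes n}$, the one-sided use of the $\mathrm{D}_{\mathrm{H}}$ expansion, and the cancellation of the $\Tr(\Phi_C\log\sigma_C)$ terms via P.~1 and P.~2) is sound and matches the paper.
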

\begin{proof}
We first show the inequality. Using Corollary \ref{coherenceachievability}, there exists a $(m, \eps)$ quantum state redistribution protocol for the state $\ket{\Phi}^{\otimes n}_{RACB}$ for any $m$ that satisfies
\begin{eqnarray*}
m &\geq& \inf_{\Phi'\in \ball{\eps/5}{\Phi^{\otimes n}}}\dmax{\Phi'_{R^nB^nC^n}}{\Phi'_{R^nB^n}\otimes \sigma_{C^n}}\\ &-& \dheps{\Delta_{BC}(\Phi_{BC})^{\otimes n}}{\Delta_B(\Phi_{B})^{\otimes n}\otimes \Delta_C(\sigma_C)^{\otimes n}}{\eps^4/5^4} + 2\log\left(\frac{250}{\eps^3}\right),
\end{eqnarray*} 
where we have used the property that $\Delta_{B^nC^n} = \Delta_{BC}^{\otimes n}$. From \cite[Lemma 3]{AnshuJW17},
$$\inf_{\Phi'\in \ball{\eps/5}{\Phi^{\otimes n}}}\dmax{\Phi'_{R^nB^nC^n}}{\Phi'_{R^nB^n}\otimes \sigma_{C^n}} \leq \dmaxeps{\Phi^{\otimes n}_{RBC}}{\Phi^{\otimes n}_{RB}\otimes \sigma^{\otimes n}_{C}}{\eps/10} + \log\frac{300}{\eps^2}.$$ Thus, it suffices to have
\begin{eqnarray*}
m &\geq&\dmaxeps{\Phi^{\otimes n}_{RBC}}{\Phi^{\otimes n}_{RB}\otimes \sigma^{\otimes n}_{C}}{\eps/10} \\ &-& \dheps{\Delta_{BC}(\Phi_{BC})^{\otimes n}}{\Delta_B(\Phi_{B})^{\otimes n}\otimes \Delta_C(\sigma_C)^{\otimes n}}{\eps^4/5^4} + 8\log\left(\frac{10}{\eps}\right).
\end{eqnarray*} 
Using Facts \ref{dmaxequi} and \ref{gaussianupper}, we conclude that it suffices to have
\begin{eqnarray*}
m &\geq& n\left(\relent{\Phi_{RBC}}{\Phi_{RB}\otimes \sigma_C} - \relent{\Delta_{BC}(\Phi_{BC})}{\Delta_B(\Phi_{B})\otimes \Delta_C(\sigma_C)} + \mathcal{O}\left(\sqrt{\frac{\log\frac{1}{\eps}}{n}}\right)\right).
\end{eqnarray*} 
Letting $n$ large enough such that $\delta \geq \mathcal{O}\left(\sqrt{\frac{\log\frac{1}{\eps}}{n}}\right)$, the inequality follows.

For the equality, consider,
\begin{align*}
 \lefteqn{\relent{\Phi_{RBC}}{\Phi_{RB}\otimes \sigma_C} - \relent{\Delta_{BC}(\Phi_{BC})}{\Delta_B(\Phi_{B})\otimes \Delta_C(\sigma_C)}} \\
  & = \relent{\Phi_{RBC}}{\Phi_{RB}\otimes \sigma_C} - \relent{\Phi_{BC}}{\Phi_{B}\otimes \sigma_C} + \relent{\Phi_{BC}}{\Phi_{B}\otimes \sigma_C} - \relent{\Delta_{BC}(\Phi_{BC})}{\Delta_B(\Phi_{B})\otimes \Delta_C(\sigma_C)} \\
  & = \relent{\Phi_{RBC}}{\Phi_{RB}\otimes \Phi_C} - \relent{\Phi_{BC}}{\Phi_{B}\otimes \Phi_C} + \relent{\Phi_{BC}}{\Phi_{B}\otimes \sigma_C} - \relent{\Delta_{BC}(\Phi_{BC})}{\Delta_B(\Phi_{B})\otimes \Delta_C(\sigma_C)} \\
  & = \condmutinf{R}{C}{B}_{\Phi} + \relent{\Phi_{BC}}{\Phi_{B}\otimes \sigma_C} - \relent{\Delta_{BC}(\Phi_{BC})}{\Delta_B(\Phi_{B})\otimes \Delta_C(\sigma_C)} \\
  & =  \condmutinf{R}{C}{B}_{\Phi} + \relent{\Phi_{BC}}{\Delta_{BC}(\Phi_{BC})} - \relent{\Phi_B}{\Delta_{B}(\Phi_{B})} \quad \mbox{(using property P. 2)}\\ & \hspace{0.1in}- \Tr \Phi_C \log \sigma_C +  \Tr \Delta_C(\Phi_C) \log \Delta_C(\sigma_C) \\ 
  & =  \condmutinf{R}{C}{B}_{\Phi} + \relent{\Phi_{BC}}{\Delta_{BC}(\Phi_{BC})} - \relent{\Phi_B}{\Delta_{B}(\Phi_{B})} . \quad \quad \mbox{(since $\sigma_C = \Delta(\sigma_C)$, using property P. 1)} 
\end{align*}

\end{proof}

\subsection{A converse bound in the asymptotic and i.i.d. setting}
\suppress{
Now, we prove a converse bound. Below, for a quantum state $\sigma$, we represent the free state that achieves the infimum in $\inf_{\tau\in \cF}\relent{\sigma}{\tau}$ as $\sigma^f$. In resource theory of coherence, $\sigma^f= \Delta(\sigma)$. We will also define the following quantity for some register R, that is necessary for the application of Fact \ref{relentresourcecont}:
$$C_R(\cF) \defeq \max_{n >0}\frac{1}{n}\max_{\tau_{R^n}\in \cF}\|\log \tau_{R^n}\|_{\infty}.$$
We assume that $C_R(\cF)$ is finite for any register $R$.
\begin{theorem}[Converse]
 Fix $\eps \in(0,1)$. For each $n \geq 1$, let there be a $(Q_C(n,\eps), \eps)$ quantum state redistribution protocol for the quantum state $\ketbra{\Phi}^{\otimes n}_{RABC}$.  It holds that 
\begin{eqnarray*}
\lim_{\eps\rightarrow 0}\lim_{n\rightarrow \infty} \frac{Q_C(n,\eps)}{n}  &\geq& \condmutinf{R}{C}{B}_{\Phi} + \lim_{n\rightarrow \infty}\frac{1}{n}\relent{\Phi_{BC}^{\otimes n}}{(\Phi^{\otimes n}_{BC})^f} -  \lim_{n\rightarrow \infty}\frac{1}{n}\relent{\Phi^{\otimes n}_B}{(\Phi^{\otimes n}_B)^f}.
\end{eqnarray*}
\end{theorem}
\begin{proof} 
Fix $n \geq 1$. Consider a $(Q_C(n,\eps), \eps)$ quantum state redistribution protocol for the quantum state $\ketbra{\Phi}^{\otimes n}_{RABC}$.  Let $E_B$ be the register holding Bob's entanglement $\theta_{E_B} \in \cF$ and $M$ be the register holding Alice's message. Let $\omega_{R^nB^nME_B}$ be the quantum state on the registers of Bob and Reference after Alice's message. It holds that $\omega_{R^nB^nE_B} = \Phi^{\otimes n}_{RB}\otimes \theta_{E_B}$.  Consider
\begin{eqnarray*}
&&\relent{\omega_{R^nB^nME_B}}{\Phi^{\otimes n}_R\otimes (\Phi^{\otimes n}_B)^f\otimes \omega_M\otimes \theta_{E_B}}\\ && = \relent{\omega_{R^nB^nME_B}}{\Phi^{\otimes n}_R\otimes \Phi^{\otimes n}_B\otimes \theta_{E_B}\otimes \omega_M} + \relent{\Phi^{\otimes n}_B}{(\Phi^{\otimes n}_B)^f} \\
&& = \relent{\omega_{R^nB^nME_B}}{\Phi^{\otimes n}_{RB}\otimes \theta_{E_B}\otimes \omega_M} + \relent{\Phi^{\otimes n}_B}{(\Phi^{\otimes n}_B)^f} + n\mutinf{R}{B}_{\Phi}\\
&& = \relent{\omega_{R^nB^nME_B}}{\omega_{R^nB^nE_B}\otimes \omega_M} + \relent{\Phi^{\otimes n}_B}{(\Phi^{\otimes n}_B)^f} + n\mutinf{R}{B}_{\Phi}\\
&& = \mutinf{R^nB^nE_B}{M}_{\omega} + \relent{\Phi^{\otimes n}_B}{(\Phi^{\otimes n}_B)^f} + n\mutinf{R}{B}_{\Phi}\\
&& \leq \log|M| + \relent{\Phi^{\otimes n}_B}{(\Phi^{\otimes n}_B)^f} + n\mutinf{R}{B}_{\Phi}.
\end{eqnarray*} 
Let Bob apply a quantum map $\cK: \cL(BME_B)\rightarrow \cL(B^nC^n)$, such that $\cK\in \cG$ and $(\id_{R^n} \otimes \cK)(\omega_{R^nB^nME_B}) = \Phi'_{R^nB^nC^n}$, where $\Phi'_{R^nB^nC^n} \in \ball{\eps}{\Phi^{\otimes n}_{RBC}}$ and $\Phi'_{R^n}=\Phi^{\otimes n}_R$. Since the message from Alice is a free state, that is, $\omega_M \in \cF$, we have
\begin{eqnarray*}
&&\relent{\omega_{R^nB^nME_B}}{\Phi^{\otimes n}_R\otimes (\Phi^{\otimes n}_B)^f\otimes \omega_M\otimes \theta_{E_B}}\\ && \geq \relent{\Phi'_{R^nB^nC^nT_B}}{\Phi^{\otimes n}_R\otimes \cK\left((\Phi^{\otimes n}_B)^f\otimes \omega_M\otimes \theta_{E_B}\right)} \\ && \geq \inf_{\sigma_{B^nC^n}\in \cF}\relent{\Phi'_{R^nB^nC^n}}{\Phi^{\otimes n}_R\otimes \sigma_{B^nC^n}} \\
&& = \relent{\Phi'_{R^nB^nC^n}}{\Phi^{\otimes n}_R\otimes \Phi'_{B^nC^n}} + \inf_{\sigma_{B^nC^n}\in \cF}\relent{\Phi'_{B^nC^n}}{\sigma_{B^nC^n}}\\
&& = \mutinf{R^n}{B^nC^n}_{\Phi'} + \relent{\Phi'_{B^nC^n}}{\Phi'^f_{B^nC^n}} \quad (\text{as }\Phi'_{R^n}=\Phi^{\otimes n}_R)\\
&& \geq n\mutinf{R}{BC}_{\Phi} + \relent{\Phi_{BC}^{\otimes n}}{(\Phi^{\otimes n}_{BC})^f} - (5 + C_{BC}(\cF))n\eps\log|RBC| \quad (\text{Facts \ref{fact:fannes} and \ref{relentresourcecont}}). \\
\end{eqnarray*} 
Combining, we obtain,
$$\log|M| \geq n\condmutinf{R}{C}{B}_{\Phi} + \relent{\Phi_{BC}^{\otimes n}}{(\Phi^{\otimes n}_{BC})^f} -  \relent{\Phi^{\otimes n}_B}{(\Phi^{\otimes n}_B)^f} - (5 + C_{BC}(\cF))n\eps\log|RBC|.$$
Dividing by $n$ and taking $\eps\rightarrow 0$, the proof concludes.
\end{proof}
}

We show some converse results for the task of quantum state redistribution. We assume the following properties for the strategy followed by Alice and Bob, which are satisfied in our achievabiliy result in Theorem \ref{theo:tradeoff}. 
\begin{itemize}
\item \textbf{P. 3}: Alice communicates classical messages to Bob, with the classical basis chosen such that the message $\omega_M$ satisfies $\Delta_M(\omega_M) = \omega_M$. 
\item \textbf{P. 4}: Bob's decoding map $\cD$ (see Definition \ref{def:qsr}) is such that for every quantum state $\tau_{MBE_B}$, there exists a quantum state $\tau'_{BCT_B}$ such that $\cD(\Delta_{MBE_B}(\tau_{MBE_B})) = \Delta_{BCT_B}(\tau'_{BCT_B})$
\end{itemize}
Property P. 3 is without loss of generality up to a factor of $2$, as for a quantum message $M$ sent from Alice to Bob, they can apply the port-based teleportation scheme \cite{Portbased08, Portbased09} to communicate classical message of $2\log|M| + \log\frac{1}{\delta}$with a small increase in error by $\delta$. Property P. 4 says that Bob's decoding operation takes `diagonal states' to 'diagonal states' in the basis defined by $\Delta$.  Since $\Delta$ is the dephasing map in the resource theory of coherence, the decoding operation takes diagonal states to diagonal states. Thus, all free operations are included in Bob's set of decoding operations that satisfy P. 4.

We define following quantity (that is necessary for the application of Fact \ref{relentresourcecont}).
$$C_R(\cF) \defeq \max_{n >0}\frac{1}{n}\min_{\tau_{R^n}}\|\log \Delta_{R^n}(\tau_{R^n})\|_{\infty}.$$
It is finite for the resource theory of coherence, as 
$$\frac{1}{n}\min_{\tau_{R^n}}\|\log \Delta_{R^n}(\tau_{R^n})\|_{\infty} = \frac{1}{n}\|\log \frac{\id_{R^n}}{|R|^n}\|_{\infty} = \log|R|.$$ We have the following result.
\begin{theorem}[Converse]
\label{theo:converse}
 Fix $\eps \in(0,1)$. For each $n \geq 1$, let there be a $(Q_C(n,\eps), \eps)$ quantum state redistribution protocol for the quantum state $\ketbra{\Phi}^{\otimes n}_{RABC}$.  It holds that 
\begin{eqnarray*}
\lim_{\eps\rightarrow 0}\lim_{n\rightarrow \infty} \frac{Q_C(n,\eps)}{n}  &\geq& \condmutinf{R}{C}{B}_{\Phi} + \relent{\Phi_{BC}}{\Delta_{BC}(\Phi_{BC})} -  \relent{\Phi_B}{\Delta_B(\Phi_B)}.
\end{eqnarray*}
\end{theorem}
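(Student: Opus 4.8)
The plan is to run a single-letter converse organized around one relative-entropy quantity that is evaluated twice: once immediately after Alice's message, where it upper-bounds the communication cost, and once after Bob's decoding, where it lower-bounds the target rate. Let $\omega_{R^nB^nME_B}$ denote the state held by the Referee and Bob right after Alice transmits her message $M$, which by property \textbf{P. 3} is classical with $\Delta_M(\omega_M)=\omega_M$; note $\omega_{R^nB^nE_B}=\Phi^{\otimes n}_{RB}\otimes\theta_{E_B}$. The central object throughout is $\relent{\omega_{R^nB^nME_B}}{\Phi^{\otimes n}_R\otimes\Delta_{B^n}(\Phi^{\otimes n}_B)\otimes\omega_M\otimes\theta_{E_B}}$, and the entire argument amounts to sandwiching it.

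First I would establish the upper bound by expanding the central quantity one tensor factor at a time. Since $\omega_{B^n}=\Phi^{\otimes n}_B$, splitting off the reference factor $\Delta_{B^n}(\Phi^{\otimes n}_B)$ in favour of $\Phi^{\otimes n}_B$ produces a summand $\relent{\Phi^{\otimes n}_B}{\Delta_{B^n}(\Phi^{\otimes n}_B)}$; since $\omega_{R^nB^n}=\Phi^{\otimes n}_{RB}$, merging $\Phi^{\otimes n}_R\otimes\Phi^{\otimes n}_B$ into $\Phi^{\otimes n}_{RB}$ produces $n\mutinf{R}{B}_{\Phi}$; and the residue is $\relent{\omega_{R^nB^nME_B}}{\omega_{R^nB^nE_B}\otimes\omega_M}=\mutinf{R^nB^nE_B}{M}_{\omega}$. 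Because $M$ is classical, this last term is at most $\log|M|=Q_C(n,\eps)$. Each swap is a one-line computation exploiting that the relevant marginals of $\omega$ coincide with those of $\Phi^{\otimes n}$.

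Next I would produce the matching lower bound via monotonicity of relative entropy under Bob's decoding $\cD\in\cG$ (Fact \ref{fact:monotonequantumoperation}). The key observation is that the reference state stays free after decoding: using property \textbf{P. 1} (free states are fixed by $\Delta$, so $\Delta_{E_B}(\theta_{E_B})=\theta_{E_B}$) together with the product structure $\Delta_{B^nME_B}=\Delta_{B^n}\otimes\Delta_M\otimes\Delta_{E_B}$ and property \textbf{P. 3}, the reference factors on $B^nME_B$ combine into $\Delta_{B^nME_B}(\Phi^{\otimes n}_B\otimes\omega_M\otimes\theta_{E_B})$, and property \textbf{P. 4} forces $\cD$ to send this to a dephased, hence free, state $\Delta_{B^nC^nT_B}(\tau')$. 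Consequently the central quantity is bounded below by $\inf_{\sigma\in\cF}\relent{\Phi'_{R^nB^nC^n}}{\Phi^{\otimes n}_R\otimes\sigma}$, where $\Phi'\in\ball{\eps}{\Phi^{\otimes n}}$ is Bob's output with $\Phi'_{R^n}=\Phi^{\otimes n}_R$. A Pythagorean decomposition $\relent{\rho}{\sigma}=\relent{\rho}{\Delta(\rho)}+\relent{\Delta(\rho)}{\sigma}$, which follows from properties \textbf{P. 1} and \textbf{P. 2}, identifies $\inf_{\sigma\in\cF}\relent{\rho}{\sigma}=\relent{\rho}{\Delta(\rho)}$, so this infimum splits as $\mutinf{R^n}{B^nC^n}_{\Phi'}+\relent{\Phi'_{B^nC^n}}{\Delta_{B^nC^n}(\Phi'_{B^nC^n})}$.

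Finally I would clean up with continuity. Since $\Phi'\in\ball{\eps}{\Phi^{\otimes n}}$, Fannes' inequality (Fact \ref{fact:fannes}) gives $\mutinf{R^n}{B^nC^n}_{\Phi'}\geq n\mutinf{R}{BC}_{\Phi}-O(n\eps\log|RBC|)$, and continuity of the relative entropy of resource (Fact \ref{relentresourcecont}, with error governed by $C_{BC}(\cF)$) replaces $\Phi'_{B^nC^n}$ by $\Phi^{\otimes n}_{BC}$, whereupon additivity together with $\Delta_{B^nC^n}=\Delta_{BC}^{\otimes n}$ turns the resulting term into $n\relent{\Phi_{BC}}{\Delta_{BC}(\Phi_{BC})}$. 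Combining the two bounds, cancelling the shared terms, and using $\mutinf{R}{BC}_{\Phi}-\mutinf{R}{B}_{\Phi}=\condmutinf{R}{C}{B}_{\Phi}$ yields $Q_C(n,\eps)\geq n\condmutinf{R}{C}{B}_{\Phi}+n\relent{\Phi_{BC}}{\Delta_{BC}(\Phi_{BC})}-n\relent{\Phi_B}{\Delta_B(\Phi_B)}-O(n\eps\log|RBC|)$; dividing by $n$ and sending $n\to\infty$ then $\eps\to0$ gives the claim. The main obstacle is precisely the lower-bound step, namely guaranteeing that the dephased reference state remains free after Bob's decoding; this is exactly what properties \textbf{P. 1}, \textbf{P. 3}, and \textbf{P. 4} and the collapsing/product structure of $\Delta$ are engineered to ensure, and without them a naive converse would lose control of the free reference state. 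The continuity bookkeeping is routine but must be arranged so that the total error is $o(n)$.
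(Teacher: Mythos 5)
Your proposal is correct and takes essentially the same route as the paper's own proof: you sandwich the identical central quantity $\relent{\omega_{R^nB^nME_B}}{\Phi^{\otimes n}_R\otimes(\Delta_B(\Phi_B))^{\otimes n}\otimes\omega_M\otimes\theta_{E_B}}$ from above by $\log|M| + n\relent{\Phi_B}{\Delta_B(\Phi_B)} + n\mutinf{R}{B}_{\Phi}$ (using classicality of $M$ from P.~3) and from below, via monotonicity under Bob's decoding together with P.~1, P.~3, P.~4, by $\mutinf{R^n}{B^nC^n}_{\Phi'}+\relent{\Phi'_{B^nC^n}}{\Delta_{B^nC^n}(\Phi'_{B^nC^n})}$, followed by the same Fannes/continuity bookkeeping and cancellation. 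The only cosmetic difference is that you state the Pythagorean identity $\inf_{\sigma}\relent{\rho}{\Delta(\sigma)}=\relent{\rho}{\Delta(\rho)}$ explicitly, where the paper invokes it implicitly in the last step of its lower-bound chain.
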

\begin{proof}
Fix $n \geq 1$. Consider a $(Q_C(n,\eps), \eps)$ quantum state redistribution protocol for the quantum state $\ketbra{\Phi}^{\otimes n}_{RABC}$. Let $E_B$ be the register holding Bob's entanglement $\theta_{E_B} \in \cF$ and $M$ be the register holding Alice's message. Let $\omega_{R^nB^nME_B}$ be the quantum state on the registers of Bob and Reference after Alice's message. It holds that $\omega_{R^nB^nE_B} = \Phi^{\otimes n}_{RB}\otimes \theta_{E_B}$.  Consider
\begin{eqnarray*}
&&\relent{\omega_{R^nB^nME_B}}{\Phi^{\otimes n}_R\otimes (\Delta_B(\Phi_B))^{\otimes n}\otimes \omega_M\otimes \theta_{E_B}}\\ 
&& = \relent{\omega_{R^nB^nME_B}}{\Phi^{\otimes n}_R\otimes \Phi^{\otimes n}_B\otimes \theta_{E_B}\otimes \omega_M} + \relent{\Phi^{\otimes n}_B}{(\Delta_B(\Phi_B))^{\otimes n}} \\
&& = \relent{\omega_{R^nB^nME_B}}{\Phi^{\otimes n}_{RB}\otimes \theta_{E_B}\otimes \omega_M} + \relent{\Phi^{\otimes n}_B}{(\Delta_B(\Phi_B))^{\otimes n}} + n\mutinf{R}{B}_{\Phi}\\
&& = \relent{\omega_{R^nB^nME_B}}{\omega_{R^nB^nE_B}\otimes \omega_M} + n \relent{\Phi_B}{\Delta_B(\Phi_B)} + n\mutinf{R}{B}_{\Phi}\\
&& = \mutinf{R^nB^nE_B}{M}_{\omega} + n \relent{\Phi_B}{\Delta_B(\Phi_B)} + n\mutinf{R}{B}_{\Phi}\\
&& \leq \log|M| + n \relent{\Phi_B}{\Delta_B(\Phi_B)} + n\mutinf{R}{B}_{\Phi}.
\end{eqnarray*} 
The last inequality holds since Alice's sends cobits, as a result of which the quantum state in registers $R^nB^nE_BM$ is classical-quantum with $M$ beng classical. Let Bob apply a quantum map $\cK: \cL(B^nME_B)\rightarrow \cL(B^nC^n)$, such that $\cK\in \cG$ and $(\id_{R^n} \otimes \cK)(\omega_{R^nB^nME_B}) = \Phi'_{R^nB^nC^n}$, where $\Phi'_{R^nB^nC^n} \in \ball{\eps}{\Phi^{\otimes n}_{RBC}}$ and $\Phi'_{R^n}=\Phi^{\otimes n}_R$. Since $\omega_M = \Delta_M(\omega_M), \theta_{E_B} = \Delta_{E_B}(\theta_{E_B}) $ (Properties P. 1, P. 3) we have,
\begin{eqnarray*}
&&\relent{\omega_{R^nB^nME_B}}{\Phi^{\otimes n}_R\otimes (\Delta_B(\Phi_B))^{\otimes n}\otimes \omega_M\otimes \theta_{E_B}}\\ 
&& \geq \relent{\Phi'_{R^nB^nC^nT_B}}{\Phi^{\otimes n}_R\otimes \cK\left((\Delta_B(\Phi_B))^{\otimes n}\otimes \omega_M\otimes \theta_{E_B}\right)} \\ 
&& = \relent{\Phi'_{R^nB^nC^n}}{\Phi^{\otimes n}_R\otimes \Phi'_{B^nC^n}} + \relent{\Phi'_{B^nC^n}}{\cK\left((\Delta_B(\Phi_B))^{\otimes n}\otimes \omega_M\otimes \theta_{E_B}\right)} \\
&& \geq \relent{\Phi'_{R^nB^nC^n}}{\Phi^{\otimes n}_R\otimes \Phi'_{B^nC^n}} + \inf_{\sigma_{B^nC^n}}\relent{\Phi'_{B^nC^n}}{\Delta_{B^nC^n}(\sigma_{B^nC^n})} \quad \mbox{(Property P. 4)}\\
&& = \mutinf{R^n}{B^nC^n}_{\Phi'} + \inf_{\sigma_{B^nC^n}}\relent{\Phi'_{B^nC^n}}{\Delta_{B^nC^n}(\sigma_{B^nC^n})} \quad (\text{as }\Phi'_{R^n}=\Phi^{\otimes n}_R)\\
&& \geq n\mutinf{R}{BC}_{\Phi} + \inf_{\sigma_{B^nC^n}}\relent{\Phi_{B^nC^n}}{\Delta_{B^nC^n}(\sigma_{B^nC^n})} - (5 + C_{BC}(\cF))n\eps\log|RBC| \quad (\text{Facts \ref{fact:fannes} and \ref{relentresourcecont}}) \\
&& \geq n\mutinf{R}{BC}_{\Phi} + n \relent{\Phi_{BC}}{\Delta_{BC}(\Phi_{BC})} - (5 + C_{BC}(\cF))n\eps\log|RBC| .
\end{eqnarray*} 
Combining, we obtain,
$$\log|M| \geq n\condmutinf{R}{C}{B}_{\Phi} + n \relent{\Phi_{BC}}{\Delta_{BC}(\Phi_{BC})} -  n \relent{\Phi_B}{\Delta_B(\Phi_B)} - (5 + C_{BC}(\cF))n\eps\log|RBC|.$$
Dividing by $n$ and taking $\eps\rightarrow 0$, the proof concludes.
\end{proof}

\subsection{Implication for quantum communication in the resource theory of coherence}

It can be verified that the resource theory of coherence satisfies the properties P.1 to P.4 mentioned earlier with $\Delta$ being the dephasing map. Thus, we can apply Theorems \ref{theo:tradeoff} and \ref{theo:converse} along with the fact that superdense coding  is achievable via free operations in coherence theory, to conclude the following. 

\begin{theorem}
\label{theo:coherencerates}
Fix $\eps \in (0,1)$. There exists a large enough $n$ such that there exists a quantum state redistribution protocol for the quantum state $\ketbra{\Phi}^{\otimes n}_{RABC}$ with quantum communication cost $Q(n, \eps)$ and error $\eps$ (in purified distance), if 
\begin{eqnarray*}
\lim_{n\rightarrow \infty}\frac{1}{n}Q(n,\eps) \geq \frac{1}{2}\left(\condmutinf{R}{C}{B}_{\Phi} + \relent{\Phi_{BC}}{\Delta_{BC}(\Phi_{BC})} - \relent{\Phi_B}{\Delta_{B}(\Phi_{B})}\right).
\end{eqnarray*}
Furthermore, let $Q(n,\eps)$ be the quantum communication cost of any protocol achieving the quantum state redistribution of $\ketbra{\Phi}^{\otimes n}_{RABC}$ with error $\eps$ (in purified distance). Then it holds that
\begin{eqnarray*}
\lim_{\eps\rightarrow 0}\lim_{n\rightarrow \infty}\frac{1}{n}Q(n,\eps) \geq \frac{1}{2}\left(\condmutinf{R}{C}{B}_{\Phi} + \relent{\Phi_{BC}}{\Delta_{BC}(\Phi_{BC})} - \relent{\Phi_B}{\Delta_{B}(\Phi_{B})}\right).
\end{eqnarray*}
\end{theorem}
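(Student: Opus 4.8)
The plan is to derive both halves of this quantum-communication statement from the corresponding \emph{cobit} statements, Theorems~\ref{theo:tradeoff} and~\ref{theo:converse}, by invoking the coherent-communication resource identity $2\,[\mathrm{cobit}] = [\mathrm{qubit}] + [\mathrm{ebit}]$~\cite{Harrow04}. The factor $\tfrac12$ in the claim is exactly the exchange rate between a qubit channel and a cobit channel in the presence of free shared entanglement. The one structural ingredient I would rely on throughout is the fact, recorded in the main text, that both halves of this identity --- coherent superdense coding and coherent teleportation --- admit implementations in which the \emph{receiver} (Bob) applies only incoherent operations: in teleportation his corrections are the Pauli operators $X^a Z^b$, which are permutation/diagonal unitaries and hence free, while superdense decoding can likewise be arranged to stay within $\cG$. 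Since it has already been checked that coherence theory satisfies properties~P.1--P.4 with $\Delta$ the dephasing map, Theorems~\ref{theo:tradeoff} and~\ref{theo:converse} are both available.

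For the achievability (``if'') direction I would start from the cobit protocol guaranteed by Theorem~\ref{theo:tradeoff}, whose cobit rate $Q_C$ equals $\condmutinf{R}{C}{B}_{\Phi} + \relent{\Phi_{BC}}{\Delta_{BC}(\Phi_{BC})} - \relent{\Phi_B}{\Delta_B(\Phi_B)}$, and which achieves error $\eps$ with $n(Q_C+\delta)$ cobits. I would then realize every pair of its cobits by sending a single qubit over the quantum channel together with one unit of freely pre-shared entanglement, using coherent superdense coding; because Bob's decoding here is incoherent, the marginal $\theta_{E_B}$ stays in $\cF$ and the converted protocol remains in the admissible class where Bob is confined to $\cG$. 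This turns $n(Q_C+\delta)$ cobits into $\tfrac12 n(Q_C+\delta)$ qubits of genuine quantum communication, the extra entanglement being free in this accounting, exhibiting a protocol of rate $\tfrac12(Q_C+\delta)$; since $\delta>0$ is arbitrary, this realizes every rate exceeding $\tfrac12$ times the claimed expression.

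For the converse I would take an \emph{arbitrary} protocol redistributing $\ketbra{\Phi}^{\otimes n}_{RABC}$ with $Q(n,\eps)$ qubits of quantum communication and compile it into a classical-communication protocol: each transmitted qubit is simulated by teleportation, consuming two classical bits and one free ebit, with Bob applying only the incoherent Pauli corrections before running his original decoding $\cD\in\cG$. The result is a legitimate cobit protocol in the sense of Definition~\ref{def:qsr} with cobit cost $2Q(n,\eps)$. Before invoking Theorem~\ref{theo:converse} I would verify that this derived protocol inherits P.3 and P.4: the teleportation messages are classical, so $\Delta_M(\omega_M)=\omega_M$, and Bob's decoding is a composition of incoherent maps, which send diagonal states to diagonal states. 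Theorem~\ref{theo:converse}, applied to this protocol, then gives $\lim_{\eps\to0}\lim_{n\to\infty}\tfrac{2Q(n,\eps)}{n}\ge \condmutinf{R}{C}{B}_{\Phi} + \relent{\Phi_{BC}}{\Delta_{BC}(\Phi_{BC})} - \relent{\Phi_B}{\Delta_B(\Phi_B)}$, and halving yields the claimed lower bound on $\tfrac1n Q(n,\eps)$.

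I expect the main difficulty to be bookkeeping at the interface between the two resource units rather than any isolated hard estimate. Concretely, I would need to confirm that the superdense-coding and teleportation gadgets can be spliced into the protocols without touching the reference $R$ and without violating Bob's freeness constraint; that the small extra errors they introduce accumulate only additively, leaving the total error $O(\eps)$ and vanishing in the limit; and --- most delicately, on the converse side --- that the compiled classical protocol genuinely satisfies P.3 and P.4 so that Theorem~\ref{theo:converse} applies verbatim. Once these structural checks are in place, the factor $\tfrac12$ is simply the cobit--qubit conversion rate.
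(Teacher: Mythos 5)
Your overall route is the same as the paper's own proof, which is precisely to verify that coherence theory satisfies P.1--P.4 with $\Delta$ the dephasing map and then to combine the cobit achievability (Theorem~\ref{theo:tradeoff}) and cobit converse (Theorem~\ref{theo:converse}) with a factor-two conversion between (co)bits and qubits. On the converse side your compilation is correct and, for coherence theory, actually cleaner than what the paper suggests: instead of the port-based teleportation the paper invokes to justify P.3 for general resource theories (at the cost of $\log(1/\delta)$ extra bits and error $\delta$), standard teleportation works exactly, since the controlled Pauli corrections $X^aZ^b$ are permutation/diagonal unitaries and hence lie in $\cG$, the compiled messages are classical so P.3 holds, and controlled Paulis followed by the original decoder $\cD\in\cG$ take diagonal states to diagonal states, so P.4 holds.

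The achievability direction, however, contains a claim that fails as stated: \emph{coherent} superdense coding cannot be decoded by Bob within $\cG$, so ``because Bob's decoding here is incoherent'' is wrong for the cobit-delivering gadget. Any incoherent Kraus operator $K$ satisfies $K\ket{x}=c_x\ket{f(x)}$ on computational basis states, so it sends the two Bell states $(\ket{00}\pm\ket{11})/\sqrt{2}$, which share the computational support $\{\ket{00},\ket{11}\}$, to vectors supported on the same (at most two) basis states; coherent cobit decoding would instead require images of the form $\ket{00}\otimes\ket{\chi}$ and $\ket{01}\otimes\ket{\chi}$, whose supports are disjoint. What is true --- and what the paper's phrase ``superdense coding is achievable via free operations'' refers to --- is that the \emph{destructive} Bell measurement is incoherent: the Kraus operators $K_{ab}=\ket{ab}\bra{\beta_{ab}}$, with $\{\ket{\beta_{ab}}\}$ the Bell basis, map every basis state to a multiple of $\ket{ab}$. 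Your conversion therefore delivers classical bits rather than cobits, and you need one further (easy) observation to close the argument: the protocol behind Theorem~\ref{theo:tradeoff} remains an $(m,\eps)$-protocol when its message register is dephased, because Bob uses $M$ only through computational-basis-controlled swaps and the measurements $\{\Pi_{BC_k},\id-\Pi_{BC_k}\}$, which commute with $\Delta_M$, and Definition~\ref{def:qsr} allows the junk $\sigma_{T_AT_B}$ to be mixed, so monotonicity of the purified distance under this dephasing preserves the error guarantee. With that patch your argument is complete and coincides with the paper's intended proof.
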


\section{Quantum state splitting in a large class of resource theories}

Quantum state splitting is a subtask of quantum state redistribution (Definition \ref{def:qsr}), where the register $B$ is absent. Formally, it is defined as follows.

\begin{definition}[Quantum state splitting]
\label{def:qss}
Fix an $\eps\in (0,1)$.  Consider the state $\ket{\Phi}_{RAC}$ and let Alice ($E_A$) and Bob ($E_B$) preshare an entangled state $\ket{\theta}_{E_AE_B}$ such that $\theta_{E_B}\in \cF$. An $(m ,\epsilon)$-quantum state redistribution protocol consists of 
\begin{itemize}
\item Alice's encoding isometry $\cE: \cL(ACE_A) \rightarrow \cL(AMT_A)$, 
 and
\item Bob's decoding map $\cD: \cL(ME_B)\rightarrow \cL(CT_B)$ such that $\cD\in \cG$.
\end{itemize}
Let the final state be
$$\Phi'_{RACT_AT_B} \defeq \cD\circ\cE(\Phi_{RAC}\otimes \theta_{E_AE_B}).$$ 
There exists a state $\sigma_{T_AT_B}$ such that
$$\Pur(\Phi'_{RACT_AT_B},\Phi_{RAC}\otimes \sigma_{T_AT_B})\leq \eps.$$ The number of cobits communicated is $m = \log|M|$. 
\end{definition}

Similar to Theorem \ref{theo:tradeoff}, we can show the following theorem for a large class of resource theories that allow swap operation as a free operation. Some examples include the resource theories of coherence, therodynamics, purity, non-uniformity and asymmetry, as discussed in \cite{AnshuHJ17}. We note that our result now holds even for resource theories that \textit{do not} contain a collapsing map (Definition \ref{def:projrestheo}), as Bob's operation simply involves performing a swap operation.

\begin{theorem}
\label{theo:statesplit}
Fix $\eps, \delta \in (0,1)$ and $\sigma_C \in \cF$. There exists a large enough $n$ such that there exists a $(n(Q_C+ \delta), \eps)$ quantum state splitting protocol for the quantum state $\ketbra{\Phi}^{\otimes n}_{RAC}$, if 
\begin{eqnarray*}
Q_C &\geq& \lim_{n\rightarrow \infty}\frac{1}{n}\inf_{\sigma_{C^n}}\relent{\Phi^{\otimes n}_{RC}}{\Phi^{\otimes n}_{R}\otimes \sigma_{C^n}} \\
& = & \mutinf{R}{C}_{\Phi} + \lim_{n\rightarrow \infty}\frac{1}{n} \relent{\Phi^{\otimes n}_{C}}{\sigma_{C^n}}.
\end{eqnarray*}
\end{theorem}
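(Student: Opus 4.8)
The plan is to reuse the convex-split protocol underlying Theorem~\ref{achievability}, but with the block size fixed to $b=1$, so that no hypothesis-testing step is required and Bob's decoding collapses to a single \emph{swap}. Specialising to state splitting, where $B$ is absent, I would single-letterise the target rate by a two-level construction: fix a block length $\ell$, choose a free state $\sigma^\star_{C^\ell}\in\cF$ nearly attaining $\inf_{\sigma}\relent{\Phi^{\otimes\ell}_{RC}}{\Phi^{\otimes\ell}_R\otimes\sigma}$, and run the protocol on the full $n=\ell k$ copies with the block-i.i.d.\ free state $(\sigma^\star_{C^\ell})^{\otimes k}$ on $C^n$. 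Setting $k'\defeq\inf_{\Phi'\in\ball{\eps_1}{\Phi^{\otimes n}}}\dmax{\Phi'_{R^nC^n}}{\Phi'_{R^n}\otimes(\sigma^\star_{C^\ell})^{\otimes k}}$, Alice and Bob would pre-share $\lceil 2^{k'}/\eps_1^2\rceil$ purified copies of this free state; Alice then applies the Uhlmann isometry (Fact~\ref{uhlmann}) taking the input and entanglement onto the convex-split purification of Fact~\ref{convexcombcor}, measures the position register coherently, and transmits its exact value by cobits, whereupon Bob swaps the indicated copy of $C^n$ into the output. The crucial point, and the reason the collapsing-map hypothesis of Theorem~\ref{theo:tradeoff} can be dropped, is that with $b=1$ Bob performs no measurement at all: his entire decoding is a swap, which is free by assumption, so $\cD\in\cG$ holds in \emph{any} resource theory admitting free swaps.

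For correctness and the communication count I would follow Claim~\ref{claimprotp1} and the error analysis of Theorem~\ref{achievability} specialised to $b=1$. Since the index is transmitted in full, the hypothesis-testing error term disappears and the only error is the $O(\eps_1)$ contributed by the convex-split lemma (Fact~\ref{convexcombcor}), controlled via monotonicity of fidelity (Fact~\ref{fact:monotonequantumoperation}) and the triangle inequality for purified distance (Fact~\ref{fact:trianglepurified}). The number of cobits is $\lceil\log\lceil 2^{k'}/\eps_1^2\rceil\rceil\le k'+O(\log\tfrac1{\eps_1})$, and applying \cite[Lemma 3]{AnshuJW17} bounds $k'$ by $\dmaxeps{\Phi^{\otimes n}_{RC}}{\Phi^{\otimes n}_R\otimes(\sigma^\star_{C^\ell})^{\otimes k}}{\eps'}+O(\log\tfrac1\eps)$.

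The step I expect to be the main obstacle is the regularisation. Because the optimal free state need not be of product form in a general resource theory, Fact~\ref{dmaxequi} cannot be applied to $\Phi^{\otimes n}_{RC}$ directly; this is exactly why I introduce the inner block of length $\ell$. Writing $\Phi^{\otimes n}_{RC}=(\Phi^{\otimes\ell}_{RC})^{\otimes k}$ and $\Phi^{\otimes n}_R\otimes(\sigma^\star_{C^\ell})^{\otimes k}=(\Phi^{\otimes\ell}_R\otimes\sigma^\star_{C^\ell})^{\otimes k}$, both the state and the chosen free state are i.i.d.\ across the $k$ outer blocks, so Facts~\ref{dmaxequi} and~\ref{gaussianupper} give $\dmaxeps{\Phi^{\otimes n}_{RC}}{\Phi^{\otimes n}_R\otimes(\sigma^\star_{C^\ell})^{\otimes k}}{\eps'}=k\,\relent{\Phi^{\otimes\ell}_{RC}}{\Phi^{\otimes\ell}_R\otimes\sigma^\star_{C^\ell}}+O(\sqrt k)$. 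The per-copy cobit rate is then $\tfrac1\ell\relent{\Phi^{\otimes\ell}_{RC}}{\Phi^{\otimes\ell}_R\otimes\sigma^\star_{C^\ell}}+O(\tfrac1{\ell\sqrt k})=\tfrac1\ell\inf_{\sigma}\relent{\Phi^{\otimes\ell}_{RC}}{\Phi^{\otimes\ell}_R\otimes\sigma}+O(\tfrac1{\ell\sqrt k})$ by near-optimality of $\sigma^\star_{C^\ell}$. Sending $k\to\infty$ and then $\ell\to\infty$, and absorbing the residual corrections into $\delta$, drives this to the claimed threshold $\lim_{n}\tfrac1n\inf_{\sigma_{C^n}}\relent{\Phi^{\otimes n}_{RC}}{\Phi^{\otimes n}_R\otimes\sigma_{C^n}}$, where the limit exists by subadditivity.

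Finally the stated equality is a routine algebraic identity that I would verify last. Using $\Tr_{R^n}\Phi^{\otimes n}_{RC}=\Phi^{\otimes n}_C$ one finds
\begin{equation*}
\relent{\Phi^{\otimes n}_{RC}}{\Phi^{\otimes n}_R\otimes\sigma_{C^n}}=n\,\mutinf{R}{C}_\Phi+\relent{\Phi^{\otimes n}_C}{\sigma_{C^n}},
\end{equation*}
since the mutual-information term is independent of $\sigma_{C^n}$; taking the infimum over free $\sigma_{C^n}$, dividing by $n$ and passing to the limit gives the second line. Thus the only genuinely new ingredient over Theorem~\ref{theo:tradeoff} is the observation that swap decoding removes the collapsing-map requirement, while the sole delicate point is the two-level blocking that single-letterises the smoothed max-relative entropy into the regularised relative entropy of the resource.
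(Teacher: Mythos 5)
Your proposal is correct and follows essentially the same route the paper intends: the paper gives no separate proof of Theorem~\ref{theo:statesplit}, remarking only that it follows ``similar to Theorem~\ref{theo:tradeoff}'' because Bob's decoding reduces to a free swap --- which is exactly your $b=1$ specialisation of the convex-split protocol of Theorem~\ref{achievability}, with the hypothesis-testing step (and hence the collapsing-map hypothesis) eliminated by transmitting the full index. Your two-level blocking argument, which single-letterises the smoothed max-relative entropy against a near-optimal, possibly non-product free state $\sigma^\star_{C^\ell}$ via Fact~\ref{dmaxequi} and Fekete-type subadditivity, supplies precisely the regularisation detail that the paper leaves implicit.
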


A matching converse is shown below, assuming that Alice's message belongs to $\cF$, which closely follows the proof of Theorem \ref{theo:converse}.
\begin{theorem}
\label{theo:splitconverse}
 Fix $\eps \in(0,1)$. For each $n \geq 1$, let there be a $(Q_C(n,\eps), \eps)$ quantum state splitting protocol for the quantum state $\ketbra{\Phi}^{\otimes n}_{RAC}$.  It holds that 
\begin{eqnarray*}
\lim_{\eps\rightarrow 0}\lim_{n\rightarrow \infty} \frac{Q_C(n,\eps)}{n}  &\geq& \mutinf{R}{C}_{\Phi} + \lim_{n\rightarrow \infty}\frac{1}{n}\inf_{\sigma_{C^n}\in \cF}\relent{\Phi^{\otimes n}_{C}}{\sigma_{C^n}}.
\end{eqnarray*}
\end{theorem}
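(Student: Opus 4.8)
The plan is to reproduce the argument of Theorem~\ref{theo:converse}, simplified by the absence of the register $B$ and of the collapsing map $\Delta$: here the hypothesis that Alice's message is a free state lets me work directly with $\inf_{\sigma\in\cF}$ rather than through a dephasing channel. Fix $n$ and a $(Q_C(n,\eps),\eps)$ protocol, let $\omega_{R^nME_B}$ be the state held by the referee and Bob immediately after Alice transmits $M$, and let $\theta_{E_B}\in\cF$ be Bob's half of the preshared entanglement. Since Alice's encoding touches neither $R^n$ nor $E_B$ and these were initially uncorrelated, $\omega_{R^nE_B}=\Phi^{\otimes n}_R\otimes\theta_{E_B}$. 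I would anchor the whole proof on the single relative entropy
\begin{equation*}
\relent{\omega_{R^nME_B}}{\Phi^{\otimes n}_R\otimes\omega_M\otimes\theta_{E_B}}=\relent{\omega_{R^nME_B}}{\omega_{R^nE_B}\otimes\omega_M}=\mutinf{R^nE_B}{M}_\omega\leq\log|M|,
\end{equation*}
where the last step holds because Alice sends cobits, so on $R^nE_BM$ the register $M$ is effectively classical of dimension $|M|$ with $\log|M|=Q_C(n,\eps)$.

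For the matching lower bound I would push the same quantity through Bob's decoding $\cK\in\cG$ (his decoding map $\cD$ followed by discarding $T_B$), which satisfies $(\id_{R^n}\otimes\cK)(\omega_{R^nME_B})=\Phi'_{R^nC^n}$ with $\Phi'_{R^nC^n}\in\ball{\eps}{\Phi^{\otimes n}_{RC}}$ and $\Phi'_{R^n}=\Phi^{\otimes n}_R$. By monotonicity of relative entropy under quantum operations (Fact~\ref{fact:monotonequantumoperation}),
\begin{equation*}
\relent{\omega_{R^nME_B}}{\Phi^{\otimes n}_R\otimes\omega_M\otimes\theta_{E_B}}\geq\relent{\Phi'_{R^nC^n}}{\Phi^{\otimes n}_R\otimes\cK(\omega_M\otimes\theta_{E_B})}.
\end{equation*}
The conceptual heart of the argument is that $\omega_M\in\cF$ (the theorem's hypothesis) and $\theta_{E_B}\in\cF$ make $\omega_M\otimes\theta_{E_B}$ free, so $\cK(\omega_M\otimes\theta_{E_B})\in\cF$ because $\cK$ is a free operation. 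Applying the chain-rule identity $\relent{\tau_{XY}}{\alpha_X\otimes\beta_Y}=\relent{\tau_{XY}}{\alpha_X\otimes\tau_Y}+\relent{\tau_Y}{\beta_Y}$ with $X=R^n$, $Y=C^n$ and $\alpha_{R^n}=\Phi^{\otimes n}_R=\Phi'_{R^n}$ splits the right-hand side as
\begin{equation*}
\mutinf{R^n}{C^n}_{\Phi'}+\relent{\Phi'_{C^n}}{\cK(\omega_M\otimes\theta_{E_B})_{C^n}}\geq\mutinf{R^n}{C^n}_{\Phi'}+\inf_{\sigma_{C^n}\in\cF}\relent{\Phi'_{C^n}}{\sigma_{C^n}}.
\end{equation*}

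It then remains to trade $\Phi'$ for $\Phi^{\otimes n}$ at cost $O(n\eps)$. Using that $\Phi'_{R^nC^n}$ is $\eps$-close to $\Phi^{\otimes n}_{RC}$, I would bound the mutual-information term by Fannes' inequality (Fact~\ref{fact:fannes}), obtaining $\mutinf{R^n}{C^n}_{\Phi'}\geq n\mutinf{R}{C}_\Phi-O(n\eps\log|RC|)$, and the resource term by continuity of the relative entropy of resource (Fact~\ref{relentresourcecont}), obtaining $\inf_{\sigma_{C^n}\in\cF}\relent{\Phi'_{C^n}}{\sigma_{C^n}}\geq\inf_{\sigma_{C^n}\in\cF}\relent{\Phi^{\otimes n}_{C}}{\sigma_{C^n}}-O(n\eps)$, exactly as in Theorem~\ref{theo:converse} with $C_C(\cF)$ in place of $C_{BC}(\cF)$. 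Chaining the two bounds gives
\begin{equation*}
\log|M|\geq n\mutinf{R}{C}_\Phi+\inf_{\sigma_{C^n}\in\cF}\relent{\Phi^{\otimes n}_{C}}{\sigma_{C^n}}-(5+C_C(\cF))\,n\eps\log|RC|.
\end{equation*}
Dividing by $n$, sending $n\to\infty$ so that the regularized resource term converges to its limit, and then $\eps\to0$ so that the error term drops, yields the stated bound. The only genuine obstacle is, as in Theorem~\ref{theo:converse}, making the two continuity estimates uniform in $n$; this is precisely what requires $C_C(\cF)$ to be finite and what forces the regularized quantity $\lim_n\frac{1}{n}\inf_{\sigma_{C^n}\in\cF}\relent{\Phi^{\otimes n}_C}{\sigma_{C^n}}$, rather than its single-letter form, to appear in the statement.
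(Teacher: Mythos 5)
Your proposal is correct and takes essentially the same route as the paper's proof: the same anchor quantity $\relent{\omega_{R^nME_B}}{\Phi^{\otimes n}_R\otimes\omega_M\otimes\theta_{E_B}}$, bounded above by $\log|M|$ via classicality of the cobit message and bounded below---through monotonicity under Bob's free decoding, the chain-rule split with $\Phi'_{R^n}=\Phi^{\otimes n}_R$, and freeness of $\cK(\omega_M\otimes\theta_{E_B})$---by $\mutinf{R^n}{C^n}_{\Phi'}+\inf_{\sigma_{C^n}\in\cF}\relent{\Phi'_{C^n}}{\sigma_{C^n}}$, followed by the same Fannes and resource-continuity estimates and the same order of limits. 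The only (cosmetic) difference is that you correctly write $C_C(\cF)$ and $\log|RC|$ in the error term where the paper carries over $C_{BC}(\cF)$ and $\log|RBC|$ from Theorem~\ref{theo:converse}.
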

\begin{proof}
Fix $n \geq 1$. Consider a $(Q_C(n,\eps), \eps)$ quantum state splitting protocol for the quantum state $\ketbra{\Phi}^{\otimes n}_{RAC}$. Let $E_B$ be the register holding Bob's entanglement $\theta_{E_B} \in \cF$ and $M$ be the register holding Alice's message. Let $\omega_{R^nME_B}$ be the quantum state on the registers of Bob and Reference after Alice's message. It holds that $\omega_{R^nE_B} = \Phi^{\otimes n}_{R}\otimes \theta_{E_B}$. Let Bob apply a quantum map $\cK: \cL(ME_B)\rightarrow \cL(C^n)$, such that $\cK\in \cG$ and $(\id_{R^n} \otimes \cK)(\omega_{R^nME_B}) = \Phi'_{R^nC^n}$, where $\Phi'_{R^nC^n} \in \ball{\eps}{\Phi^{\otimes n}_{RC}}$ and $\Phi'_{R^n}=\Phi^{\otimes n}_R$. Since $\omega_M \in \cF, \theta_{E_B} \in \cF$, we have,
\begin{eqnarray*}
&& \log|M| \geq \relent{\omega_{R^nME_B}}{\Phi^{\otimes n}_R\otimes  \omega_M\otimes \theta_{E_B}}\\ 
&& \geq \relent{\Phi'_{R^nC^nT_B}}{\Phi^{\otimes n}_R\otimes \cK\left(\omega_M\otimes \theta_{E_B}\right)} \\ 
&& = \relent{\Phi'_{R^nC^n}}{\Phi^{\otimes n}_R\otimes \Phi'_{C^n}} + \relent{\Phi'_{C^n}}{\cK\left(\omega_M\otimes \theta_{E_B}\right)} \\
&& \geq \relent{\Phi'_{R^nC^n}}{\Phi^{\otimes n}_R\otimes \Phi'_{C^n}} + \inf_{\sigma_{C^n}\in \cF}\relent{\Phi'_{C^n}}{\sigma_{B^nC^n}}\\
&& = \mutinf{R^n}{C^n}_{\Phi'} + \inf_{\sigma_{C^n}\in \cF}\relent{\Phi'_{C^n}}{\sigma_{C^n}} \quad (\text{as }\Phi'_{R^n}=\Phi^{\otimes n}_R)\\
&& \geq n\mutinf{R}{C}_{\Phi} + \inf_{\sigma_{C^n}\in \cF}\relent{\Phi_{C^n}}{\sigma_{C^n}} - (5 + C_{BC}(\cF))n\eps\log|RBC| \quad (\text{Facts \ref{fact:fannes} and \ref{relentresourcecont}}) 
\end{eqnarray*} 
Combining, we obtain,
$$\log|M| \geq n\mutinf{R}{C}_{\Phi} + \inf_{\sigma_{C^n}\in \cF}\relent{\Phi_{C^n}}{\sigma_{C^n}} - (5 + C_{BC}(\cF))n\eps\log|RBC|.$$
Dividing by $n$ and taking $\eps\rightarrow 0$, the proof concludes.
\end{proof}

\end{document}